\documentclass[a4paper,reqno,11pt]{amsart}
\usepackage[T1]{fontenc}
\usepackage[utf8]{inputenc}
\usepackage[english]{babel}
\usepackage[dvipsnames,svgnames,table]{xcolor}
\usepackage{amssymb, amsmath, amsthm, graphicx, enumerate, color, mathtools, comment, caption, subcaption, float, lmodern}
\usepackage[foot]{amsaddr}
\usepackage[shortlabels,inline]{enumitem}
\usepackage[unicode=true]{hyperref}
\usepackage[capitalise, noabbrev]{cleveref}
\usepackage{thmtools, thm-restate}
\usepackage{doi}
\DeclarePairedDelimiter\set{\{}{\}}

\newcommand{\q}[1]{``#1''}
\newcommand{\defin}[1]{\emph{\textcolor{ForestGreen}{#1}}}
\DeclareMathOperator\length{len}

\newcommand{\N}{\mathbb{N}}

\newcommand{\Oh}{\mathcal{O}}

\newcommand{\calS}{\mathcal{S}}

\newcommand{\dist}{\mathrm{dist}}

\DeclareMathOperator\parent{parent}

\let\le\leqslant

\let\leq\leqslant
\let\geq\geqslant

\let\subset\subseteq

\let\epsilon\varepsilon

\let\setminus\backslash
\let\preceq\preccurlyeq

\newcommand{\LCA}{\mathrm{LCA}}
\newcommand{\lca}{\mathrm{lca}}

\crefformat{equation}{#2(#1)#3}
\let\eqref\cref
\crefformat{subsection}{Subsection #2#1#3}

\makeatletter
\def\thm@space@setup{
  \thm@preskip=4mm
  \thm@postskip=0mm
}
\makeatother

\usepackage{algorithmicx}
\usepackage[noend]{algpseudocode}
\algdef{SE}[DOWHILE]{Do}{doWhile}{\algorithmicdo}[1]{\algorithmicwhile\ #1}

\usepackage{mdframed}

\mdfdefinestyle{dontsplit}{
  hidealllines=true,
  nobreak=true,
  leftmargin=0pt,
  rightmargin=0pt,
  innerleftmargin=0pt,
  innerrightmargin=0pt,
}

\newmdtheoremenv[style=dontsplit]{theorem}{Theorem}
\newtheorem{lemma}[theorem]{Lemma}
\newmdtheoremenv[style=dontsplit]{obs}[theorem]{Observation}
\newmdtheoremenv[style=dontsplit]{remark}[theorem]{Remark}
\newmdtheoremenv[style=dontsplit]{proposition}[theorem]{Proposition}
\newmdtheoremenv[style=dontsplit]{question}{Question} 
\newmdtheoremenv[style=dontsplit]{corollary}[theorem]{Corollary} 
\newmdtheoremenv[style=dontsplit]{problem}[theorem]{Problem}
\newmdtheoremenv[style=dontsplit]{conjecture}[theorem]{Conjecture}
\newtheorem*{theorem*}{Theorem}
\newtheorem*{corollary*}{Corollary} 
\newtheorem*{problem*}{Problem}
\newtheorem*{conjecture*}{Conjecture}
\newtheorem*{lemma*}{Lemma}
\newtheorem*{obs*}{Observation}
\newtheorem*{remark*}{Remark}
\newtheorem*{proposition*}{Proposition}
\newtheorem*{question*}{Question} 
\newtheorem*{example*}{Example}

\theoremstyle{remark}
\newtheorem*{claim*}{Claim} 
\newmdtheoremenv[style=dontsplit]{claim}[theorem]{Claim}
\crefname{claim}{Claim}{Claims}
\newmdtheoremenv[style=dontsplit]{example}[theorem]{Example}

\newenvironment{proofclaim}[1][]
	{\vspace{-\topsep}\begin{proof}[Proof] }{\end{proof}}

\graphicspath{{figs/}}

\renewenvironment{enumerate}{\begin{enumorig}[label=\textup{(\roman*)}]}{\end{enumorig}}

\makeatletter
\newcommand{\myitem}[1]{%
\item[#1]\protected@edef\@currentlabel{#1}%
}
\makeatother

\hypersetup{ 
    colorlinks,
    linkcolor={RoyalBlue},
    citecolor={RubineRed},
    urlcolor={blue!80!black},
    pdftitle={Sample compression schemes for balls in structurally sparse graphs}
}

\newcommand{\emptyentry}{\bot}
\newcommand{\header}[1]{\textcolor{DarkOrchid}{#1}}

\begin{document} 
\title[Sample compression schemes] 
{Sample compression schemes for balls in structurally sparse graphs}

\author[Bourneuf]{Romain Bourneuf}
\address[R.~Bourneuf]{Univ. Bordeaux, CNRS, Bordeaux INP, LaBRI, UMR 5800, F-33400 Talence, France}
\email{\href{mailto:romain.bourneuf@ens-lyon.fr}{romain.bourneuf@ens-lyon.fr}}

\author[Hodor]{J\k{e}drzej Hodor}
\address[J.~Hodor]{Theoretical Computer Science Department, 
Faculty of Mathematics and Computer Science and  Doctoral School of Exact and Natural Sciences, Jagiellonian University, Krak\'ow, Poland}
\email{\href{mailto:jedrzej.hodor@gmail.com}{jedrzej.hodor@gmail.com}}

\author[Micek]{Piotr Micek}
\address[P.~Micek]{Theoretical Computer Science Department, 
Faculty of Mathematics and Computer Science, Jagiellonian University, Krak\'ow, Poland}
\email{\href{mailto:piotr.micek@uj.edu.pl}{piotr.micek@uj.edu.pl}}

\author[Rambaud]{Clément Rambaud}
\address[C.~Rambaud]{Universit\'e C\^ote d'Azur, CNRS, Inria, I3S, Sophia-Antipolis, France}
\email{\href{mailto:clement.rambaud@normalesup.org}{clement.rambaud@normalesup.org}}

\thanks{J.\ Hodor is supported by a Polish Ministry of Education and Science grant (Perły Nauki; PN/01/0265/2022). 
P.\ Micek is supported by the National Science Center of Poland under grant UMO-2023/05/Y/ST6/00079 within the WEAVE-UNISONO program.
This work was done during a visit of the first and the last authors 
at Jagiellonian University in February 2026.}

\begin{abstract}
Sample compression schemes were defined by Littlestone and Warmuth (1986) as an abstraction of the structure underlying many learning algorithms.
In a sample compression scheme, we are given a large sample of vertices of a fixed hypergraph with labels indicating the containment in some hyperedge.
The task is to compress the sample in such a way that we can retrieve the labels of the original sample.
The size of a sample compression scheme is the amount of information that is kept in the compression.
Every hypergraph with a sample compression scheme of bounded size must have bounded VC-dimension.
Conversely, Moran and Yehudayoff (J.\ ACM, 2016) showed that every hypergraph of bounded VC-dimension admits a sample compression scheme of bounded size.
We study a specific class of hypergraphs emerging from balls in graphs.
The schemes that we construct (contrary to the ones constructed by Moran and Yehudayoff) are \emph{proper}, meaning that we retrieve not only the labeling of the original sample but also a hyperedge (ball) consistent with the original labeling.
First, we prove that for every graph $G$ of treewidth at most $t$, the hypergraph of balls in $G$ has a proper sample compression scheme of size $\Oh(t\log t)$; this is tight up to the logarithmic factor and improves the quadratic (improper) bound that follows from the result of Moran and Yehudayoff.
Second, we prove an analogous result for graphs of cliquewidth at most $t$.
\end{abstract}

\maketitle
%\tableofcontents
%\clearpage

\clearpage

% ==============================================hypergraph=========
\section{Introduction}\label{sec:introduction}
% =======================================================

%Consider a universe $U$ and a family $\mathcal{C}$ of subsets of $U$ that is sometimes called a \emph{concept class}. 
%A \emph{sample} of $U$ is a pair $(X^+,X^-)$ of subsets of $U$ such that there exists $C\in \mathcal{C}$ with $X^+\subseteq C$ and $X^-\cap C = \emptyset$. 
%Consider a universe $U$ of elements and a family $\mathcal{C}$ of subsets of the universe which is sometimes called a \emph{concept class}. 
%Consider a \emph{sample} to be just a list of vertices of $U$ that are labeled with a $+$ or a $-$ in such a way that there is a concept $C$ in $\mathcal{C}$ such that all the elements labeled with $+$ are exactly those elements of the list that lie in $C$. In this case, we say that $C$ \emph{realizes} the sample. 
A \emph{concept class} (or a \emph{hypergraph}) is a family $\mathcal{C}$ of subsets of a universe $U$, called \emph{concepts} (or \emph{hyperedges}).
A \emph{sample} of $\mathcal{C}$ is a list of elements of $U$ that are labeled with a $+$ or a $-$ in such a way that there is a concept $C$ in $\mathcal{C}$ such that the elements labeled with $+$ are exactly those elements of the list that lie in $C$. In this case, we say that $C$ \emph{realizes} the sample. 
One should think of a sample as a list of positive and negative examples corresponding to a certain concept $C$ in $\mathcal{C}$.
A \emph{sample compression scheme} takes an arbitrarily long sample and compresses it into a short sample in a way that allows reconstructing a concept $C'$ which realizes the original input sample.
%A sample compression scheme over some universe, given a labeled \emph{sample} of elements wants to find a small subsample that retains all the information about the labels of the initial sample.
%For sample compression schemes to exist, we need some restrictions on the structure of labels.
%We consider binary labels and we assume that the labels of a sample correspond to a hyperedge of a given hypergraph.
A mathematical framework for sample compression scheme was introduced in 1986 by Littlestone and Warmuth~\cite{LW86} and has since then become a central topic in learning theory.
%There are some variants of the definition of the scheme and we choose to follow the one allowing a subsample to be ordered. See the end of this section for a discussion comparing the variants.
We introduce it formally below.

A \defin{hypergraph} is a pair $(V,E)$ where $V$ is a finite set,
and $E$ is a set of subsets of $V$. 
In a hypergraph $H=(V,E)$, the elements of $V$ are called \defin{vertices} of $H$ and the elements of $E$ are called \defin{hyperedges} of $H$; we also write $\defin{V(H)} = V$ and $\defin{E(H)} = E$.

Let $H$ be a hypergraph.
A \defin{sample} of $H$ is a pair $(X^+,X^-)$
of subsets of $V(H)$ such that
there is a hyperedge $e \in E(H)$ with $X^+ \subseteq e$ and $X^- \cap e = \emptyset$.
%For every such hyperedge $e$, we say that $e$ \defin{realizes} $(X^+,X^-)$.
A set $S \subset V(H)$ \defin{realizes} a sample $(X^+, X^-)$ if $X^+ \subseteq S$ and $S \cap X^- = \emptyset$.
We denote the set of all samples of $H$ by \defin{$\calS(H)$}.
A sample $(Y^+,Y^-) \in \calS(H)$ is a \defin{subsample} of $(X^+,X^-)\in \calS(H)$ if $Y^+ \subset X^+$ and $Y^- \subset X^-$.
The \defin{size} of a sample $(X^+,X^-)$ of $H$ is $|X^+ \cup X^-|$.
A \defin{sample compression scheme} of $H$ is a pair of functions $(\kappa,\rho)$ where\footnote{With a slight abuse of notation, we omit double parentheses when applying functions to pairs or tuples. We denote by \defin{$\{0,1\}^*$} the set of all finite sequences with $0$-$1$ entries (bitstrings).}
\[
    \kappa\colon \calS(H) \rightarrow \calS(H) \times \{0,1\}^* \text{ and } \rho\colon \calS(H) \times \{0,1\}^* \rightarrow 2^{V(H)}
\]
such that for every $(X^+,X^-) \in \calS(H)$, we have
\begin{enumerate}
    \item $\kappa_1(X^+,X^-)$ is a subsample of $(X^+,X^-)$ and
    \item $\rho(\kappa(X^+,X^-))$ realizes $(X^+,X^-)$,
\end{enumerate}
where $(\kappa_1(X^+,X^-),\kappa_2(X^+,X^-)) = \kappa(X^+,X^-)$.  
For each $i\in\set{1,2}$, let $k_i$ be the maximum over all $(X^+,X^-)\in\calS(H)$ of the size of $\kappa_i(X^+,X^-)$. 
The \defin{size} of the sample compression scheme $(\kappa, \rho)$ is $k_1+k_2$.
We refer to $\kappa$ as the \defin{compressor} and to $\rho$ as the \defin{reconstructor}.
A sample compression scheme as above is \defin{proper} if all values of the reconstructor $\rho$ are in $E(H)$.

% It is easy to check that for a hypergraph where every subset of its vertices is a hyperedge no nontrivial sample compression scheme exists.
% On the other hand, a hypergraph with exactly one hyperedge admits a sample compression scheme of size $0$.
% Thus, it is natural to ask what structural restriction must be imposed on hypergraphs so that they admit a sample compression scheme of small size.

There is a simple example of hypergraphs that do not admit sample compression schemes of small size: let $H_n = ([n], 2^{[n]})$.\footnote{For each nonnegative integer $n$, we denote by \defin{$[n]$} the set $\{1,\dots,n\}$.} 
%\piotr{Maybe we can write next sentences better.} 
If $(\kappa, \rho)$ is a sample compression scheme of $H_n$, then the restriction of $\kappa$ to samples of size $n$ is injective, and a simple counting argument shows that the size of $(\kappa, \rho)$ is unbounded.
More generally, any hypergraph containing $H_n$ can not have a sample compression scheme of small size. This motivates the study of hypergraphs of bounded VC-dimension, that we now define.

A subset $X$ of the vertices of a hypergraph $H$ is \defin{shattered} by $H$ if every bipartition $(X^+, X^-)$ of $X$ is a sample of $H$.
The maximum size of a shattered set in a hypergraph is known as its Vapnik--Chervonenkis dimension~\cite{VC71}, or simply \defin{VC-dimension}.
Observe that the VC-dimension of a hypergraph $H$ is the largest integer $n$ such that $H$ contains a subhypergraph isomorphic to $H_n$.

Littlestone and Warmuth in their foundational paper~\cite{LW86} proved that a hypergraph which admits a sample compression scheme of bounded size has bounded VC-dimension (or equivalently is PAC-learnable, see~\Cref{sec:VC-dim}).
%We give more context on PAC-learnability and VC-dimension in~\Cref{sec:VC-dim}.
The converse implication remained a tantalizing problem for the next 30 years, until it was finally settled in a remarkable paper by Moran and Yehudayoff \cite{Moran2016}.
They proved that every hypergraph of VC-dimension at most $d$ admits a sample compression scheme of size $2^{\Oh(d)}$. 
More precisely, they proved that
a hypergraph of VC-dimension at most $d$ admits a sample compression scheme of size $\Oh(k\log k)$ where $k = d \cdot d^*$ and $d^*$ is the VC-dimension of the dual hypergraph.\footnote{For a hypergraph $H$, the \defin{dual hypergraph} of $H$ has vertex set $E(H)$ and edge set $\set{\set{e \in E(H)\mid v\in e}\mid v\in V(H)}$.} 
As it is well-known that $d^* < 2^{d+1}$, see e.g.\ \cite{A83}, 
this gives a $2^{\Oh(d)}$ upper bound on the size of the scheme.

The \emph{sample compression conjecture}, due to Floyd and Warmuth \cite{FW95}, states that every hypergraph of VC-dimension $d$ admits a sample compression scheme of size linear in $d$. Warmuth \cite{W03} offered a \$600 reward for a solution. The conjecture remains wide open.
Note that a linear bound would be indeed optimal.
A direct counting argument due to Floyd and Warmuth \cite{FW95} shows that for every $d$, there exists a hypergraph of VC-dimension $d$ for which every sample compression scheme has size at least $d$, and that for every hypergraph of VC-dimension $d$, every sample compression scheme must have size at least $d/5$.
Moreover, Pálvölgyi and Tardos \cite{PT20} constructed a hypergraph of VC-dimension 2 which does not admit a sample compression scheme of size $2$.

Due to little progress on the sample compression conjecture, more attention has been given to special cases of this conjecture. 
In this paper, we prove almost tight bounds for hypergraphs of balls in structurally sparse graphs.

Let $G$ be a graph. For $u,v \in V(G)$, we denote by \defin{$\dist_G(u,v)$} the \defin{distance} between $u$ and $v$ in $G$, i.e.\ the minimum number of edges in a $u$-$v$ path in $G$.
For $c \in V(G)$ and an integer $r$, the \defin{ball} in $G$ of \defin{radius} $r$ \defin{centered} in $c$ is the set $\defin{\text{$B_G(c,r)$}} = \{v \in V(G) \mid \dist_G(c,v) \leq r\}$.
%For $c \in V(G)$, $B_G(c,1) \setminus\{c\}$ is the set of \defin{neighbors} of $c$ in $G$ and is denoted by \defin{$N_G(c)$}.
The \defin{hypergraph of balls} in $G$ is the hypergraph $H$ with $V(H) = V(G)$ and where $E(H)$ is the collection of all the balls in $G$, i.e.\ $B_G(c,r)$ for each $c \in V(G)$ and for each integer $r$.

%\piotr{remove}
%When a graph $G$ is well-structured, then the hypergraph of balls in $G$ has bounded VC-dimension, and it is interesting to investigate whether this hypergraph admits a sample compression scheme of size linear (or almost linear) in its VC-dimension.
%\rom{Add that the general case is as hard as the sample compression conjecture.}

%Chalopin, Chepoi, Inerney, Ratel, and Vaxès~\cite{Chalopin2023} studied sample compression schemes for hypergraphs of balls of graphs.
%They gave proper sample compression schemes of constant size for trees, cycles, interval graphs, and cacti.
%\piotr{end remove}

Bousquet and Thomassé \cite{BT15} proved that the hypergraph of balls in a $K_t$-minor-free graph has VC-dimension at most $t-1$. 
Building on their result, Beaudou, Dankelmann, Foucaud, Henning, Mary, and Parreau~\cite[Theorem~18]{Beaudou2018} proved that the dual of the hypergraph of balls in a $K_t$-minor-free graph also has VC-dimension at most $t-1$. 
Therefore, the general result of Moran and Yehudayoff \cite{Moran2016} applied to the hypergraph of balls in a $K_t$-minor-free graph (or a graph of treewidth at most $t$) gives a sample compression scheme of size $\mathcal{O}(t^2\log t)$. 
%For comparison, note that~\Cref{thm:tw-first} gives a proper scheme of smaller size.
To the best of our knowledge there is no \emph{proper} sample compression scheme of size bounded by a function of $t$ for the hypergraph of balls in a graph of treewidth at most $t$.
The following theorem is one of the main contributions of the paper.

\begin{theorem}\label{thm:tw-first}
For every graph $G$ of treewidth at most $t$, the hypergraph of balls in $G$ has a proper sample compression scheme of size $\Oh(t\log t)$.
\end{theorem}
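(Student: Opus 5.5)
The plan is to locate the center of a realizing ball through a single well-chosen bag of a tree decomposition. Fix a tree decomposition $(T,(W_x)_{x\in V(T)})$ of $G$ of width at most $t$, rooted arbitrarily. Let $(X^+,X^-)$ be a sample, realized by some ball $B_G(c,r)$. Among all realizing pairs $(c,r)$ we fix one canonically, say $r$ minimal and then $c$ minimal in a fixed order. If $X^+=\emptyset$ we output nothing and reconstruct a ball of radius $-1$ (or use one bit), so assume $X^+\neq\emptyset$.

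The only route from $c$ to a vertex outside its part is through a bag. Let $x$ be the node of $T$ closest to the root whose subtree contains $c$ together with "enough" of the sample; concretely, pick $x$ so that every sample vertex is either in the subtree below $x$ on the side of $c$, or is separated from $c$ by $W_x$. For $v$ separated from $c$ by $W_x$ we then have
\[
\dist_G(c,v)=\min_{b\in W_x}\bigl(\dist_G(c,b)+\dist_G(b,v)\bigr).
\]
So membership of $v$ in $B_G(c,r)$ depends only on the vector $\delta_b=r-\dist_G(c,b)$, $b\in W_x$: $v\in B$ iff $\dist_G(b,v)\le\delta_b$ for some $b$. For each $b\in W_x$ we keep two sample vertices: the positive vertex $p_b$ whose membership is witnessed through $b$ with $\dist_G(b,p_b)$ maximal, and the negative vertex $q_b$ with $\dist_G(b,q_b)$ minimal. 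This gives $O(t)$ sample vertices, which pins $\delta_b$ into the interval $[\dist_G(b,p_b),\dist_G(b,q_b))$.

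We must also let the reconstructor recover the bag $W_x$ itself and which kept vertex plays which role. I would identify $W_x$ through the kept vertices (for instance as the bag at the least common ancestor of the nodes containing them, suitably chosen), and spend $O(\log t)$ bits per kept vertex to record its role ($p_b$ or $q_b$, and the index of $b$ in a canonical ordering of $W_x$). Additionally we store the relative order of the $2(t+1)$ thresholds $\dist_G(b,p_b)$ and $\dist_G(b,q_b)$ shifted by $\dist_G(c,b)$, which costs $O(t\log t)$ bits. This is where the $t\log t$ comes from.

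The reconstructor then searches for a pair $(c',r')$ with $c'$ on the correct side of $W_x$ such that $r'-\dist_G(c',b)$ lies in the stored interval for every $b$ and respects the stored order. Any such ball is consistent with all sample vertices separated by $W_x$, and $(c,r)$ itself shows a candidate exists. It outputs the canonically first candidate, so the scheme is proper.

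The main obstacle is the sample vertices on the same side of $W_x$ as $c$, which are not controlled by the bag. Distances to them do not factor through $W_x$, and recursing down $T$ would cost $O(t\log t)$ per level with unbounded depth. I would try first to choose $x$ as the \emph{deepest} node such that the subtree rooted at $x$ contains $c$ and at least one vertex of $X^+$ at maximum distance from $c$. I would then argue, using a "farthest positive point" trick as in the tree case, that in the part below $x$ the constraints from the bag together with one extra kept positive vertex $p^*$ force consistency. The extra vertex enforces $r'\ge \dist_G(c',p^*)$, and minimality of $r$ plus the choice of $x$ should rule out the inner negative vertices. If this fails, the fallback is to apply the same bag-encoding at two nodes (one separating the inner region and one outer), which still keeps the total at $O(t\log t)$.
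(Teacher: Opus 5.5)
\textbf{Gap: sample vertices on $c$'s side of the separator.} Your separator step is sound, and it is essentially what the paper does. Once a set $S$ separates $c$ from \emph{every} sample vertex, distances factor through $S$. Your witnesses $p_b,q_b$ then play the role of the paper's $r^+_{\mathcal{I}}(b)$ and $r^-_{\mathcal{I}}(b)$ (\cref{lemma:good-balls-realize,lemma:initial-ball-realizes}), and the extra bits recording the order of thresholds are unnecessary. The gap is the one you flag yourself: you never produce such an $S$, and your first try fails.

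Take the tree with edges $sp$, $st_1$, $t_1t_2$, $sc$, $cu$, $s\bar c$, $\bar c\bar u$. Use a width-$1$ binary decomposition whose root has bag $\{s\}$ and children $\{s,p\}$ and an $\{s\}$-node below which all other bags lie. Consider the samples $(\{p,u\},\{t_2\})$ and $(\{p,\bar u\},\{t_2\})$:
\begin{itemize}
\item their unique minimal realizers are $B_G(c,2)$ and $B_G(\bar c,2)$;
\item your node $x$ is the root in both cases, with $W_x=\{s\}$;
\item in both cases you store $p_s=p^*=p$ and $q_s=t_2$ with identical bits.
\end{itemize}
So $\rho$ must output one ball containing $p,u,\bar u$ and avoiding $t_2$. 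None exists, since every vertex $z$ satisfies $\dist_G(z,t_2)\le\max(\dist_G(z,u),\dist_G(z,\bar u))$. The inner positive $u$ is invisible to your data. Minimality of $r$ cannot help, because it constrains the compressor's ball, not the reconstructor's candidate. Your two-node fallback has the right shape, but you leave open which nodes to use, why the region between them contains no sample vertex, why two boundary nodes suffice, and how to name them with $O(1)$ sample vertices.

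\textbf{The missing idea: lowest-common-ancestor closure.} This lets you choose the separation so that \emph{no} sample vertex lies on $c$'s side. Fix home nodes $h$ in a rooted binary decomposition, and let $Z=\{h(v)\mid v\in X^+\cup X^-\}$.
\begin{itemize}
\item If $h(c)\in\LCA(T,Z)$, then $h(c)=\lca(T,h(x_1),h(x_2))$ for two sample vertices $x_1,x_2$. The bag $W_{h(c)}$ contains $c$ and is itself a valid separator, via $(A_1,A_2)=(W_{h(c)},V(G))$.
\item Otherwise, let $C$ be the component of $T-\LCA(T,Z)$ containing $h(c)$. By \cref{lemma:increase_Z_in_a_tree}, $C$ has at most two neighbours. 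Hence $\bigl(\bigcup_{y\in V(C)}W_y,\bigcup_{y\notin V(C)}W_y\bigr)$ is a separation whose separator is a union of at most two adhesions, so it has at most $2t+2$ vertices. The vertex $c$ lies on the first side. Since $Z\subseteq\LCA(T,Z)$ is disjoint from $V(C)$, all of $X^+\cup X^-$ lies on the second side. Moreover, $C$ is named by at most three sample vertices (\cref{lemma:encode-subtree}; binarity of $T$ is used to pick the correct side of a boundary node).
\end{itemize}
With this separation your witness encoding goes through unchanged. It gives an array scheme of size $4t+7$, hence a scheme of size $\Oh(t\log t)$ after recording roles.
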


We remark that no sublinear bound in~\Cref{thm:tw-first} is possible. 
For a positive integer $t$, 
consider the bipartite graph $G_t$ on the vertex set $[t] \cup 2^{[t]}$ such that $[t]$ and $2^{[t]}$ are independent sets and for all $A \in 2^{[t]}$, $A$ is adjacent in $G_t$ to exactly the elements of $A$ in $[t]$.
It is easy to verify that $G_t$ has treewidth $t$ and the hypergraph of balls in $G_t$ has VC-dimension at least $t$.
By the result of Floyd and Warmuth \cite{FW95}, every sample compression scheme of the hypergraph of balls in $G_t$ has size at least $t \slash 5$.
This shows that~\Cref{thm:tw-first} is optimal up to the logarithmic factor.

Note that~\Cref{thm:tw-first} implies that every chordal graph $G$ admits a proper sample compression scheme of size $\mathcal{O}(\omega(G) \log \omega(G))$, 
where $\omega(G)$ stands for the clique number of $G$.
This resolves a problem stated by Chalopin, Chepoi, Inerney, Ratel, and Vaxès~\cite{Chalopin2023}.

While treewidth measures the \emph{tree-likeness} of sparse graphs, cliquewidth is a graph parameter that measures tree-likeness of also dense graphs.
%Roughly, the cliquewidth of a graph is the number of labels necessary to build the graph using some simple operations in a tree-line manner \clement{sentence to improve or remove}.
Roughly, the cliquewidth of a graph measures how a graph can be constructed by two
operations: disjoint unions, and addition of a ``simple'' cut.
Cliquewidth was first described explicitly by Courcelle and Olariu~\cite{CO00}, though the idea dates back to the work on graph grammars in the 1990s.
This parameter is a dense analogue of treewidth, in the sense that every graph of treewidth at most $t$ has cliquewidth at most $3 \cdot 2^{t-1}$, see~\cite{CR05}, and every graph of cliquewidth $t$ which does not contain the complete bipartite graph $K_{s, s}$ as a subgraph has treewidth at most $3t(s-1)-1$, see~\cite{GW00}.
Cliquewidth is functionally equivalent to \emph{rankwidth}, which is another width parameter on graphs. More precisely, every graph of cliquewidth $t$ has rankwidth at least $t$ and at most $2^{t+1} - 1$, see~\cite{OS06}.
Bousquet and Thomassé~\cite{BT15} proved that the hypergraph of balls in a graph of rankwidth at most $t$ has VC-dimension at most $3 \cdot 2^{t+1} + 2$.
Thus, the hypergraphs of balls in graphs of bounded cliquewidth have bounded VC-dimension.
Our second main result is an almost optimal proper sample compression scheme for hypergraphs of balls in graphs of bounded cliquewidth.

\begin{theorem}\label{thm:cw-first}
For every graph $G$ of cliquewidth at most $t$, the hypergraph of balls in $G$ has a proper sample compression scheme of size $\Oh(t\log t)$.
\end{theorem}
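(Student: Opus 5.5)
We would prove \Cref{thm:cw-first} by reducing it to a general framework that also underlies \Cref{thm:tw-first}. The framework has three parts. First, a realizing ball $B_G(c,r)$ is encoded by a small number of \q{witness} sample vertices together with $\Oh(t\log t)$ bits. Second, the reconstructor uses these witnesses and bits to pick a center $c'$ and radius $r'$. Third, we check that $B_G(c',r')$ realizes the whole sample. For cliquewidth, the separators used for treewidth are replaced by cuts of a clique expression (equivalently, a rank decomposition of bounded width). Across such a cut $(A,\overline{A})$, the vertices on each side have at most $t$ different neighbourhood types with respect to the other side.

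Fix a sample $(X^+,X^-)$ realized by $B_G(c,r)$. Take a $t$-expression of $G$ and its syntax tree $T$. Within the subtree at a node, vertices sharing a label look the same to the outside. Our plan is as follows.
\begin{enumerate}
\item Use a balanced-separator argument on $T$, as in the treewidth case, to locate the center $c$ using only $\Oh(\log t)$ bits per level. Instead, it may be better to locate $c$ in a single step relative to a chosen node $x$ of $T$ for which the positive set $X^+$ is \q{pinned}.
\item For each of the at most $t$ labels $i$ at node $x$, pick a positive sample vertex $p_i$ whose distance behaviour to $c$ through the label class $i$ is extremal. Concretely, $p_i$ should be the vertex of $X^+$ maximizing $r-\dist(c,p_i)$ or the one minimizing slack. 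Record the $p_i$ (at most $t$ sample vertices), together with for each one $\Oh(\log t)$ bits: the label index and a bounded offset in $\{-2,\dots,2\}$. Distances between vertices that cross the cut through the same label class differ by at most $2$. This holds because two vertices with the same label at $x$ have identical neighbourhoods outside the subtree, so any shortest path leaving the subtree can be rerouted with additive error at most $2$.
\item The reconstructor enumerates candidate centers $c'$ and radii $r'$. It chooses the one satisfying the recorded constraints $\dist(c',p_i)\le r'$, plus the recorded \q{type profile}, which lists for each label class $i$ the value $\min_{v\text{ of label } i}\dist(c',v)$ shifted appropriately. Among the candidates it takes a canonical choice, for example the one minimizing $r'$ and then some fixed order.
\end{enumerate}

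Correctness is proved by showing that any $c'$ matching the recorded profile satisfies two inequalities for every vertex $v$ on the far side of the cut. The first is $\dist(c',v)\le\dist(c,v)+(\text{recorded shift})$. The second is the reverse inequality up to the same shift. The argument routes every path through one of the $t$ label classes, reusing the key property that same-label vertices have identical outside neighbourhoods. Negative vertices are handled symmetrically. We also record up to $t$ extremal negative vertices $n_i$, one per class, and choose $r'$ between the constraints from the $p_i$ and the $n_i$. The vertices on the same side as $c$ are handled by recursing into the subtree. The recursion must be organised so that the total stays $\Oh(t\log t)$ rather than multiplying by the depth. For that we plan to use a single node $x$, chosen as the lowest node whose subtree contains $c$ but not all relevant sample vertices, so that $c$'s side needs no further compression beyond pinning $c$ via the recorded profile.

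The main obstacle is the last point. Inside the subtree containing $c$, distances may use paths that leave and re-enter the subtree, so the profile does not obviously determine which vertices near $c$ are in the ball. We would first try to handle this by adding the center's \q{local} extremal witnesses at the child nodes of $x$. This costs another $\Oh(t)$ sample vertices and $\Oh(t\log t)$ bits. It then remains to argue, via the choice of $x$ as lowest, that all sample vertices lie outside at least one child subtree, so the cut argument applies to each of them.
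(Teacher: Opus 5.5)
Your plan has two genuine gaps, and together they are the \q{main obstacle} you flag.

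\textbf{First gap: naming the cut.} You never say how the reconstructor learns which cut to use. The syntax tree has unbounded depth. So an ancestor $x$ of $c$ cannot be named with $\Oh(t\log t)$ bits, and the balanced-separator recursion has unboundedly many levels. You also give no way to name $x$ with a bounded number of sample vertices. Moreover, a single cut at $x$ can leave sample vertices on $c$'s side.

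The missing idea is the LCA closure. If $c\notin X$, let $C$ be the component of $T-\LCA(T,X)$ containing the leaf $c$. By \cref{lemma:increase_Z_in_a_tree} and the comparability argument in the proof of \cref{lemma:encode-subtree}, $C$ has at most one boundary edge going down, to a subtree rooted at some $y_1$. It has at most one going up, from its top node $y_2$ to $\parent(T,y_2)$. Moreover, $C$ is named by three entries of $X\cup\{\emptyentry\}$.

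This gives two cuts: $(C_1,D_1)$ with $D_1$ the leaves below $y_1$, and $(C_2,D_2)$ with $C_2$ the leaves below $y_2$. They satisfy $c\in C_1\cap C_2$ and $X\subseteq D_1\cup D_2$. So $c$'s region contains no sample vertex, and there is nothing to recurse into. The case $c\in X$ is handled separately by storing $c$ and a nearest negative vertex.

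There is also an asymmetry you miss. Twin classes exist only inside subtrees. So for the lower cut, the $t$ classes $U_i$ lie on the sample side $D_1$. For the upper cut, the classes $V_j$ lie on $c$'s side $C_2$.

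\textbf{Second gap: what to record.} The quantities you propose to record cannot be recorded, and they are not needed. The profile $(\dist_G(c,V_j))_j$, even modulo a common shift, consists of integers unbounded in terms of $t$. It does not fit in $\Oh(\log t)$ bits per class, and you give no way to read it off the stored sample vertices. Hence the reconstructor cannot test whether a candidate $c'$ \q{matches} it. Additive-$2$ offsets cannot decide exact ball membership either. Checking $\dist_G(c',p_i)\le r'$ only covers the stored $p_i$, not all of $X^+$.

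What is exactly true is \cref{lemma:separator_with_few_types}. Take the last vertex of a shortest path lying in $C$; this gives
$\dist_G(c_0,d_0)=\min_i\bigl(\dist_G(c_0,V_i)+\dist_{G[V_i\cup D]}(V_i,d_0)\bigr)$.
This already handles paths that leave and re-enter.

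The paper then takes a dual route instead of pinning $c$. For each class it stores one vertex of $X^+$ attaining $r^+$. This is the largest distance from the class to a positive vertex reachable through that class within the remaining radius; your hedging between \q{closest} and \q{least slack} should resolve to this. It also stores one vertex of $X^-$ attaining $r^-$, the smallest distance from the class to a negative vertex. For the upper cut these distances are measured in $G[D_2\cup V_j]$. For the lower cut the remaining radius uses $\dist_{G[C_1\cup U_i]}(c,U_i)$.

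The reconstructor accepts any $c'\in C_1\cap C_2$ and $s$ with
$D_1\cap\bigcup_i B_G(U_i,r^+_i)\subseteq D_1\cap B_G(c',s)\subseteq D_1\cap\bigcup_i B_G(U_i,r^-_i-1)$,
together with the analogous sandwich on $D_2$. By \cref{lemma:initial-ball-realizes-again}, $B_G(c,r)$ qualifies. By \cref{lemma:good-balls-realize-again}, every qualifying ball realizes the sample. This yields an array scheme with $4t+3$ entries, and encoding the array positions accounts for the $\log t$ factor.
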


As mentioned before, Floyd and Warmuth~\cite{FW95} proved that every sample compression scheme of a hypergraph of VC-dimension $d$ has size at least $d/5$.
Therefore,~\Cref{thm:cw-first} implies that the hypergraphs of balls in graphs of cliquewidth at most $t$ have VC-dimension $\Oh(t \log t)$. Actually, adapting the proof of Bousquet and Thomassé that hypergraphs of balls in graphs of bounded rankwidth have bounded VC-dimension, we prove that hypergraphs of balls in graphs of cliquewidth at most $t$ have VC-dimension $\Oh(t)$, see \cref{prop:bound-vcdim}.\footnote{Since the proof follows step by step the proof of Bousquet and Thomassé~\cite{BT15}, we present it in the appendix.}
Again, it is easy to check that the cliquewidth of $G_t$ is at most $t+1$, hence, the bound in~\Cref{thm:cw-first} (respectively \Cref{prop:bound-vcdim}) is tight up to a logarithmic (respectively constant) factor.

Twin-width, introduced by Bonnet, Kim, Thomass\'{e}, and Watrigant~\cite{BKTW21}, is a width parameter that has attracted a lot of attention in the last years.
This parameter generalizes cliquewidth, in the sense that graphs of bounded cliquewidth have bounded twin-width.
It is natural to wonder whether \cref{thm:cw-first} can be extended to graphs of bounded twin-width. 
This is not possible since the hypergraphs of balls in graphs of bounded twin-width can have arbitrarily large VC-dimension.
This follows from the fact that any $(\geq 2\log n)$-subdivision of an $n$-vertex graph has twin-width at most $4$ \cite{BBD22,BGKTW21}.
Then, starting from an $n$-vertex graph $G$ whose hypergraph of balls has arbitrary VC-dimension $t$ (e.g.\ $G_t$ defined before), 
the $(2 \log n)$-subdivision of $G$ has twin-width at most $4$, but its hypergraph of balls has VC-dimension at least $t$.

Although we tried, we do not see how to remove the logarithmic factors from the statements of~\Cref{thm:tw-first,thm:cw-first}. 
We are only able to devise a smaller proper sample compression under a stronger structural assumption on the graphs we consider, namely bounded vertex cover number. 
% For every graph $G$, treewidth, pathwidth, treedepth, and vertex cover number of $G$ satisfy 
% \[
% \tw(G)+1 \leq \pw(G)+1 \leq \td(G) \leq \vc(G).
% \]
% We continue with the definition of the vertex cover number. 
A \defin{vertex cover} in a graph $G$ is a subset $R$ of vertices of $G$ such that every edge of $G$ is incident to a vertex of $R$. 
%The \defin{vertex cover number} of $G$, denoted by \defin{$\vc(G)$}, is the minimum size of a vertex cover in $G$. 

\begin{restatable}{theorem}{thmvc}\label{thm:vc}
For every graph $G$ admitting a vertex cover of size at most $t$, the hypergraph of balls in $G$ has a proper sample compression scheme of size $t+4$.
\end{restatable}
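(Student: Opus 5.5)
The plan is to turn the labels of a sample into information about the vertex cover only. Fix a vertex cover $R$ of $G$ with $|R|\le t$, and let $I=V(G)\setminus R$, which is an independent set. For a center $c$ and a vertex $v\in I\setminus\{c\}$ we have $\dist_G(c,v)=1+\min_{u\in N(v)}\dist_G(c,u)$, and $N(v)\subseteq R$.

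So, for a ball $B=B_G(c,r)$, membership of $v\in I\setminus\{c\}$ is decided by the set $R_{<r}(c)=\{u\in R\mid \dist_G(c,u)\le r-1\}$:
\begin{itemize}
    \item $v\in B$ if and only if $N(v)\cap R_{<r}(c)\neq\emptyset$;
    \item for $v\in R$, membership is decided by $R_{\le r}(c)$.
\end{itemize}
Hence, apart from the single vertex $c$, the trace of a ball on $V(G)$ is a function of the pair of nested subsets $R_{<r}(c)\subseteq R_{\le r}(c)\subseteq R$. The compressor should therefore transmit roughly one sample vertex per vertex of $R$, plus $\Oh(1)$ extra vertices and bits for the center and radius.

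\emph{Compression.} Given a sample $(X^+,X^-)$ realized by some ball $B_G(c,r)$, first normalize the realizing ball. Among all balls realizing $(X^+,X^-)$, choose one that is extremal: say minimal radius, then maximal $R_{<r}$ under a fixed order. Then, for each $u\in R$, keep at most one sample vertex witnessing the status of $u$ that matters for the labels. Concretely:
\begin{itemize}
    \item If some positive $v\in X^+\cap I$ is certified only through $u$, keep such a $v$.
    \item Otherwise, if $u\in X$, keep $u$ itself.
    \item Otherwise, if $u$ is forced outside $R_{<r}$ by a negative neighbour, keep one such negative.
\end{itemize}
This uses at most $t$ vertices. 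On top of that, keep a constant number of vertices and bits, which gives the additive $4$:
\begin{itemize}
    \item the center $c$ if it is a sample vertex of $I$ (it is the only vertex of $I$ escaping the rule above);
    \item a farthest positive vertex, pinning the radius;
    \item one or two bits saying whether $c\in R$ and whether the extremal radius is attained by a vertex of $R$ or of $I$.
\end{itemize}

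\emph{Reconstruction.} Given the subsample $(Y^+,Y^-)$ and the bits, the reconstructor enumerates all balls realizing $(Y^+,Y^-)$ and compatible with the bits. It outputs the one that is extremal in the same sense the compressor used, so the output is always a ball and the scheme is proper.

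\emph{Correctness.} The verification is an exchange argument. Let $B'$ be the reconstructed ball and suppose it mislabels some $x\in X$.
\begin{itemize}
    \item If $x\in I$, then the misclassification comes from some neighbour $u\in R$ of $x$ whose status (in $R_{<r'}$ or not) differs from the intended one. The witness stored for $u$ then contradicts either the choice of $B'$ as realizing $(Y^+,Y^-)$ or its extremality.
    \item If $x\in R$, the same argument applies directly to $u=x$.
\end{itemize}

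The main obstacle is making the per-vertex witnesses consistent with one global extremality rule. A positive $v\in I$ can be certified by several neighbours in $R$, and the reconstructed center $c'$ may differ from $c$, so the sets $R_{<r'}(c')$ and $R_{<r}(c)$ need not be comparable. I would first try ordering the realizing balls by inclusion of the pair $(R_{<r},R_{\le r})$ and keeping, for each $u\in R$, the most constraining sample vertex. If comparability fails, the fallback is to spend the constant budget on the center (when $c\in R$, via a kept neighbour or a bit) so that $c'=c$. With the center fixed, the sets $R_{<r}(c)$ are nested in $r$, and a monotonicity argument over $r$ goes through.
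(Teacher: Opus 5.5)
Your proposal has a genuine gap: the main line rests entirely on the exchange argument, and that argument fails as stated. A positive $v$ stored for $u\in R_{<r}(c)$ only forces $N_G(v)\cap R_{<r'}(c')\neq\emptyset$ for the reconstructed ball $B_G(c',r')$. It does not force $u\in R_{<r'}(c')$. So when the new ball misses a dropped positive $x$ with $u\in N_G(x)$, the witness for $u$ usually still lies in the ball through another neighbour. Then neither ``realizes $(Y^+,Y^-)$'' nor extremality is contradicted.

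Here is a concrete instance. Let $R=\{a,b,d,e,f\}$ be independent, and let $I$ consist of $c,v_1,v_3,v_5,v_4,c_{be},c_{bf},c_{ef}$ with neighbourhoods $\{a,d\},\{a,b\},\{a,e\},\{a,f\},\{d\},\{b,d,e\},\{b,d,f\},\{d,e,f\}$. Take $X^+=\{v_1,v_3,v_4,v_5\}$ and $X^-=\emptyset$.
\begin{itemize}
\item The unique realizing ball of minimum radius is $B_G(c,2)$, with $R_{<2}(c)=\{a,d\}$.
\item Each of $v_1,v_3,v_5$ is certified only through $a$, the slots of $b,e,f$ stay empty under your rules, and $c\notin X$. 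So, even counting the ``farthest positive'', at most two of $v_1,v_3,v_5$ are kept, say $v_1,v_3$.
\item But $B_G(c_{be},2)$ also realizes the kept subsample, has the same radius, and is compatible with your bits. Its $R_{<}$-set is $\{b,d,e\}$, which is larger than and incomparable with $\{a,d\}$, and it misses $v_5$.
\item The other choices of kept pair fail symmetrically, via $c_{bf}$ or $c_{ef}$.
\end{itemize}
This is exactly the incomparability you flag at the end, and nothing in the proposal resolves it.

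Your fallback, fixing the center, is the right direction and is what the paper does, but it is not carried out and it misses the hard case.
\begin{itemize}
\item Naming $c\in R$ needs about $\log t$ bits, or the $t$-bit budget as a one-hot vector. A constant number of bits cannot do it, and a kept neighbour may have up to $t$ neighbours in $R$ or may not exist in the sample. The $t$ bits are affordable because, once the center is known, the per-vertex witnesses are unnecessary: a nearest negative pins the radius.
\item The case $c\in I\setminus X$, as in the example above, is not covered at all, since the center can then be neither sent nor named with few bits.
\end{itemize}
The missing idea is that vertices of $I$ with equal neighbourhoods are false twins. If $y\in I$ and $N_G(y)=N_G(c)$, then $B_G(y,s)=B_G(c,s)$ for every $s\ge 2$. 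So it suffices to send $N_G(c)\subseteq R$ as $t$ bits together with a nearest negative, choosing $r$ maximal.

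For $r=1$, where $B_G(y,1)=\{y\}\cup N_G(c)$ does depend on the twin, the paper sends as the negative the twin of $c$ in $X^-$ immediately preceding $c$ in a cyclic order. The reconstructor takes the twin immediately following it, which is then not in $X^-$. Adding two bits for the case distinction and one positive vertex (to return a component when $X^-=\emptyset$) gives size $2+(t+2)=t+4$.
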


% It would be interesting to have a similar statement for graphs of treedepth at most $t$, see Open Problems Section.

% Littlestone \cite{L88} defined a complexity measure for hypergraphs, which is now known as \defin{Littlestone dimension}. Bounded Littlestone dimension characterizes online learnability \cite{BPS09} and private PAC learnability \cite{ABLMM22}. Every hypergraph of VC-dimension $d$ has Littlestone dimension at least $d$, and there exist hypergraphs of VC-dimension $2$ with arbitrarily large Littlestone dimension. 
% See more details in~\Cref{sec:littlestone}.
% Due to the strong connections between Littlestone dimension and learning, it is natural to expect that hypergraphs of bounded Littlestone dimension admit small sample compression schemes. We prove that it is indeed the case.

% \begin{restatable}{theorem}{thmld}\label{thm:ld}
%     Every hypergraph of Littlestone dimension $d$ admits a sample compression scheme of size $d$.
% \end{restatable}

% This bound is again tight up to constant factors.

Finally, we remark that our proof of \cref{thm:tw-first} also works in metric spaces of bounded treewidth, i.e.\ if the edges of the graph $G$ have nonnegative weights and the length of a path is the sum of the weights of the edges along this path. This is not true for \cref{thm:cw-first}. An intuitive explanation for this phenomenon is that the edge weights can be emulated by subdividing the edges of the initial graph, and subdividing edges can not increase the treewidth, but can increase the cliquewidth.

\subsection{Balls of bounded radius}\label{sec:intro:bd-radius}

The hypergraph of balls in a planar graph has VC-dimension at most $4$ as proved by Bousquet and Thomassé~\cite{BT15}. 
Thus, the result of Moran and Yehudayoff~\cite{Moran2016} implies that hypergraphs of balls in planar graphs have sample compression schemes of bounded size.
Chalopin et al.~\cite{Chalopin2023} asked if proper sample compression schemes can be constructed in this setting and they provided some partial answers.

Let $I$ be a set of integers and let $G$ be a graph.
The \defin{hypergraph of balls of radii in $I$} in $G$ is the hypergraph $H$ with $V(H) = V(G)$ and $E(H)$ given by the collection of all the balls in $G$ of radius in $I$, i.e.\ $B_G(c,r)$ for each $c \in V(G)$ and for each integer $r \in I$.
In the specific case of $I = \{1\}$, we say that $H$ is the \defin{hypergraph of closed neighborhoods} in $G$.

Chalopin et al.~\cite{Chalopin2023} constructed a proper sample compression scheme of constant size for the hypergraph of closed neighborhoods in any planar graph.
We extend this result to any radius.

\begin{restatable}{theorem}{thmplan}\label{th:planar-bd-radius}
    For every planar graph $G$ and for every positive integer $r$, the hypergraph of balls of radius at most $r$ in $G$ has a proper sample compression scheme of size $\Oh(r\log r)$.
\end{restatable}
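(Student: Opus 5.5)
The plan is to reduce to \Cref{thm:tw-first} by localizing around one positive example, using the fact that a planar graph of bounded radius has bounded treewidth.

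\emph{Trivial case.} Let $(X^+,X^-)$ be a sample realized by $B_G(c,r_0)$ with $r_0\le r$. If $X^+=\emptyset$, we need a radius-$\le r$ ball avoiding $X^-$.
\begin{itemize}
\item If $X^- \neq V(G)$, we keep one bit plus one vertex $v\notin X^-$ and output $B_G(v,0)$. Since $v$ must lie outside the sample, it cannot be stored in $\kappa_1$. We therefore encode it in $\kappa_2$ with $\Oh(\log|V(G)|)$ bits, or better, route this through the same localization as below.
\item If $X^-=V(G)$, no nonempty ball works. We therefore assume, as is standard, that the hypergraph also contains the empty set, or we treat this degenerate case separately.
\end{itemize}

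\emph{Localization.} Otherwise we store one vertex $x\in X^+$ in $\kappa_1$, flagged by a bit. Since $\dist_G(c,x)\le r_0\le r$, every candidate center lies in $B_G(x,r)$. Let $G'=G[B_G(x,2r)]$. The key observation is the following. For every $c'\in B_G(x,r)$ and $r'\le r$, every shortest path of length at most $r'$ starting at $c'$ stays inside $B_G(x,2r)$. Hence $B_{G'}(c',r')=B_G(c',r')$, and this set is disjoint from $V(G)\setminus B_G(x,2r)$. So negative examples outside $B_G(x,2r)$ are automatically respected, and it suffices to compress the restricted sample $(X^+,X^-\cap V(G'))$ with respect to balls of $G'$ having center in $C=B_G(x,r)$ and radius at most $r$. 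The graph $G'$ is planar, and contracting the BFS tree from $x$ shows it has radius at most $2r$. A planar graph of radius $\rho$ has treewidth at most $3\rho$, so $\mathrm{tw}(G')\le 6r$.

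\emph{Applying the treewidth scheme.} We apply the scheme of \Cref{thm:tw-first} to $G'$, which has size $\Oh(r\log r)$, and add one element for $x$ plus $\Oh(1)$ bits. The main obstacle is that \Cref{thm:tw-first} as stated may return an arbitrary ball of $G'$. Its center might lie outside $C$ or its radius might exceed $r$, and then the output need not coincide with a ball of $G$ of radius at most $r$, so properness fails.

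To overcome this, I would first reopen the proof of \Cref{thm:tw-first} and check that it is \emph{relative}: for any family of admissible (center, radius) pairs containing a realizer, the reconstructed ball is again admissible. This plausibly holds if the reconstructor selects the center and radius by optimizing over balls consistent with the decoded constraints. If that fails, I would try to enforce the constraints through the graph instead:
\begin{itemize}
\item attach to $G'$ a long path (length $\gg r$) glued at $x$, with a forced positive endpoint at distance exactly $r$ from the allowed region;
\item or use the weighted-metric version of \Cref{thm:tw-first} mentioned in the introduction, adding a weight-$r$ pendant vertex at $x$ with a virtual label.
\end{itemize}
Either gadget forces any realizing ball to have its center within distance $r$ of $x$ and radius at most $r$, while changing the treewidth only by $\Oh(1)$. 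Once the output ball $B_{G'}(c',r')$ is admissible, the observation above gives $B_{G'}(c',r')=B_G(c',r')$, which is a ball of $G$ of radius at most $r$ realizing the whole sample. The total size is $\Oh(r\log r)$.
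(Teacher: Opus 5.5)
Your plan is the paper's own proof. \Cref{thm:local-tw} stores one $x\in X^+$, compresses $(X^+\cap B_G(x,2r),X^-\cap B_G(x,2r))$ inside $G[B_G(x,2r)]$, and uses that planar graphs have local treewidth at most $\rho\mapsto 3\rho$, so this graph has treewidth at most $6r$. The step you leave as ``plausible'' is exactly \Cref{cor:tw}, and it holds for the reason you guess. In the proof of \Cref{thm:tw} the reconstructor may return \emph{any} ball $B_G(c',s)$ with $c'\in A_1$ satisfying \cref{eq:to-satisfy}. Every such ball realizes the sample by \cref{lemma:good-balls-realize}, and the compressor's own ball $B_G(c,r_0)$ satisfies \cref{eq:to-satisfy} by \cref{lemma:initial-ball-realizes}. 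So returning one with $s$ minimal (equivalently, searching only radii at most $r$) gives $s\le r_0\le r$. No constraint on the center is needed: $x$ is a positive example of the restricted sample, so an output $B_{G'}(c',s')$ with $s'\le r$ has $\dist_G(c',x)\le r$. Your localization observation then gives $B_{G'}(c',s')=B_G(c',s')$.

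You should commit to this route and drop the gadgets, which do not do what you claim. Each amounts to a pendant vertex $p$ at distance $d\ge r$ from $x$ with a prescribed label.
\begin{itemize}
\item Labelled positive, $p$ lies outside $B_G(c,r_0)$ except when $c=x$ and $d=r_0=r$. So the augmented sample is no longer realized by the ball you want to preserve, and positive examples only bound $s$ from below anyway.
\item Labelled negative, $p$ only forces $s<\dist(c',x)+d$. This still allows $s>r$ as soon as $\dist(c',x)\ge 2$.
\end{itemize}
Finally, the case $X^+=\emptyset$ is trivial under the paper's conventions. Negative radii are allowed, so $\emptyset=B_G(v,-1)$ is a hyperedge of radius at most $r$, and the reconstructor just outputs it. Your first bullet, which stores $v\notin X^-$ using $\Oh(\log|V(G)|)$ bits, must be discarded: the size of a scheme has to be bounded independently of $|V(G)|$.
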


\Cref{th:planar-bd-radius} follows from a more general result on graphs of bounded \emph{local treewidth}, see~\Cref{sec:local-treewidth}.

We also give a simple proof that the hypergraph of closed neighborhoods in a graph of degeneracy at most $t$ has a sample compression scheme of size $\Oh(t)$.
The result of Moran and Yehudayoff~\cite{Moran2016} implies a sample compression scheme of size $\Oh(t^2\log t)$, and the linear bound is best possible.
Also note that this result is optimal in the sense that closed neighborhoods can not be replaced by balls of radius $2$.
See more details in~\Cref{sec:bd-degeneracy}.

\begin{restatable}{theorem}{thmdeg}\label{thm:degeneracy}
    For every positive integer $t$ and every graph $G$ of degeneracy at most $t$, the hypergraph of closed neighborhoods in $G$ has a sample compression scheme of size $t + \lceil \log(t+1)\rceil + 1$. 
\end{restatable}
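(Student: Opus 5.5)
Fix a degeneracy ordering $v_1,\dots,v_n$ of $G$ in which every vertex has at most $t$ neighbours earlier in the order; call these its \emph{back-neighbours}. A hyperedge is $N[c]=B_G(c,1)$, and $c$ realizes a sample $(X^+,X^-)$ exactly when $X^+\subseteq N[c]$ and $X^-\cap N[c]=\emptyset$. The scheme will keep only positive examples together with $\lceil\log(t+1)\rceil+1$ extra bits. The negatives are used only implicitly, through the choice of the witness $c$. The key fact is that $N[c]$ is determined by $c$, so it suffices to let the reconstructor recover some centre $c'$ that realizes the sample.

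\emph{Compressor.} Fix a centre $c$ realizing $(X^+,X^-)$; to make things canonical, take the latest such $c$ in the ordering. If $X^+=\emptyset$, output the empty subsample and a flag bit saying so. The reconstructor must then return a hyperedge avoiding $X^-$, which needs care. One option is to encode the case with the bits and output $N[c]$ for $c$ a vertex chosen canonically as a function of $X^-$; but $X^-$ is not stored, so this does not work directly. Instead I would allow the compressor to keep one negative example and use it as a signal. Alternatively, note that the problem allows improper schemes here: the reconstructor may output $\emptyset$, which realizes any sample with empty $X^+$.

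If $X^+\neq\emptyset$, split $X^+=A\cup B$ with $A=X^+\cap(\{c\}\cup\{\text{back-neighbours of }c\})$ and $B$ the rest. Every $x\in B$ is a forward neighbour of $c$, so $c$ is a back-neighbour of $x$. Two cases:
\begin{enumerate}
\item $B\neq\emptyset$. Keep a single $x\in B$. Then $c$ is one of the at most $t$ back-neighbours of $x$, and we encode its index among them with $\lceil\log(t+1)\rceil$ bits. The reconstructor recomputes $c$ and outputs $N[c]$.
\item $B=\emptyset$. Then $X^+\subseteq\{c\}\cup N^-(c)$ has size at most $t+1$. Storing all of it costs too much if we also need bits, so instead we make the choice of $c$ recoverable. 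The reconstructor takes the set $Y$ of kept positives and outputs $N[c']$, where $c'$ is the latest vertex whose closed neighbourhood contains $Y$. This realizes $X^+$, but it may hit $X^-$.
\end{enumerate}

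Case (ii) is the main obstacle. I would handle it as follows. If $c\in X^+$, keep $c$ itself, so one element plus one bit identifies it. Otherwise $X^+\subseteq N^-(c)$ has size at most $t$, and we keep all of $X^+$. Among the vertices adjacent to all of $X^+$, the reconstructor picks the one latest in the ordering, and the compressor chose $c$ as the latest realizer. If some later $c''$ is adjacent to all of $X^+$ but meets $X^-$, I would argue via the back-neighbour structure that $c''$ has all of $X^+$ among its back-neighbours as well, giving an ambiguity. To resolve it, encode the rank of $c$ among the common neighbours of $X^+$ that lie later than the latest element of $X^+$. That count may be large, which is the point I am unsure about.

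A cleaner fallback is to keep all of $X^+$ but only $t$ of these elements, and use the spare $\lceil\log(t+1)\rceil$ bits to record which kept element $y$ has $c$ among its forward neighbours relation. I expect the final bookkeeping to give the total $t+\lceil\log(t+1)\rceil+1$ stated in \Cref{thm:degeneracy}: at most $t$ kept elements, $\lceil\log(t+1)\rceil$ index bits and one case bit.
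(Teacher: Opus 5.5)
Your Case (i), together with your treatment of $c\in X^+$, is essentially the paper's first case. The paper merges the two by picking any $x\in X^+$ with $c\preceq x$ and indexing $c$ inside $\{u\in N_G(x)\cup\{x\}\mid u\preceq x\}$, a set of size at most $t+1$; this is where the $\lceil\log(t+1)\rceil$ comes from.

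The genuine gap is the remaining case: $c\notin X^+$ and every element of $X^+$ is a back-neighbour of $c$. You never supply a reconstructor that works there, and the route you are pursuing cannot work. Any rule that recovers a centre from the stored positives plus a bounded number of bits already fails for the star $K_{1,m}$ with centre $u$ placed first ($t=1$). Consider the samples $(\{u\},\{c_i\mid i\neq j\})$ for $j\in[m]$:
\begin{itemize}
\item each has $N[c_j]$ as its only realizing closed neighbourhood;
\item all of them land in your case (ii) with stored set $\{u\}$;
\item boundedly many bitstrings cannot single out $m$ different centres.
\end{itemize}
This is exactly why the rank of $c$ among the later common neighbours of $X^+$ is unbounded. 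Your fallback, recording which kept $y$ has $c$ as a forward neighbour, identifies nothing, since forward neighbourhoods are unbounded.

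The missing idea is that the statement does not require a \emph{proper} scheme. You exploited this for $X^+=\emptyset$ but not here. In this case $X^+$ is contained in the back-neighbourhood of $c$, so it has at most $t$ elements. Keep all of it, and let the reconstructor output the set $X^+$ itself. This set contains $X^+$ and misses $X^-$ because $X^+\cap X^-=\emptyset$, so it realizes the sample. This case also covers $X^+=\emptyset$ vacuously, so a single case bit suffices and the total size is $t+\lceil\log(t+1)\rceil+1$. Whether a proper scheme of size $f(t)$ exists is left open in the paper; the star example shows such a scheme would at least have to store negative examples.
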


\subsection{Learning and VC-dimension}\label{sec:VC-dim}

There are many ways to define what a learning algorithm should do. The model of \emph{probably approximately correct} (\emph{PAC}) learning, introduced by Valiant \cite{V84}, is one of the most studied frameworks.
% Given a concept class $\mathcal{C}$, a \defin{sample} of $\mathcal{C}$ is a pair $(S^+, S^-)$ such that there exists a concept $C \in \mathcal{C}$ such that $S^+ = (S^+ \cup S^-) \cap C$. We denote by $\mathcal{S}(\mathcal{C})$ the set of samples of $\mathcal{C}$.
Informally, an algorithm $\mathcal{A}$ learns a hypergraph $H$ if it satisfies the following property: for every hyperedge $e$ of $H$, for every probability distribution $\mu$ on the universe $V(H)$, given a sample $(S \cap e, S \setminus e)$ where $S$ consists of $s$ elements sampled independently with repetition from $\mu$, $\mathcal{A}$ outputs a hyperedge $e'$ of $H$ such that $e$ and $e'$ are very close for the distribution $\mu$.
Formally, an algorithm $\mathcal{A} \colon \mathcal{S}(H) \to V(H)$ is a \defin{PAC learning algorithm} for a hypergraph $H$ if for all constants $\varepsilon, \delta > 0$, there exists a constant $s = s(\varepsilon, \delta)$ such that for any edge $e$ of $H$, for every probability distribution $\mu$ on $V(H)$, we have \[
\mathbb{P}_{S \gets \mu^s}[\mu(\mathcal{A}(S \cap e, S \setminus e) \Delta e) \geq \varepsilon] \leq \delta.
\]

%A subset $X$ of the vertices of a hypergraph $H$ is \defin{shattered} by $H$ if every bipartition $(X^+, X^-)$ of $X$ is a sample of $H$.
If a hypergraph $H$ contains arbitrarily large shattered sets then $H$ is not PAC-learnable: for any fixed $\varepsilon, \delta, s > 0$, let $X$ be a large enough shattered set (as a function of $\varepsilon,$ $\delta$, and $s$), and let $\mu$ be the uniform distribution on $X$.
% The size of the largest shattered set in a concept class (or $\infty$ if there exist arbitrarily large shattered sets) is a measure of complexity of the concept class which was studied by Vapnik and Chervonenkis \cite{VC71}, and is now known as \defin{VC-dimension}.
It follows that hypergraphs with infinite VC-dimension are not PAC-learnable. 
Building on the work of Vapnik and Chervonenkis, Blumer, Ehrenfeucht, Haussler and Warmuth \cite{BEHW89} proved that hypergraphs of bounded VC-dimension are PAC-learnable, so bounded VC-dimension characterizes PAC learnability.

\subsection{Related results}

%\piotr{We need to put a sentence or two here explaining unlabeled schemes.}
%\rom{I think this section is mostly to give pointers, in my opinion we should not bother with defining stuff here.}

%\jedrzej{I understand very little from the next paragraph. Maybe it's okay, maybe we should rethink how it looks like.}\rom{In my opinion, this section is mostly to give pointers, so I think it's okay as it is, but I would not fight to keep it as is.}
%\piotr{The meaning of \emph{unlabeled} is not clear. It probably means that there is no extra bitstring.}
%\jedrzej{I also don't know what is \q{maximum concept class} or \q{extremal concept class}.} 
%\piotr{Adding anything increases VC-dimension.}
%\jedrzej{Then why not maximal?} 
%\piotr{Then maybe the largest one with this property.}
%\piotr{Dudley class is also mysterious.}
%\jedrzej{Yes, so basically this is my concern.}
%\rom{How about adding a sentence like: "We now review some related results. All the relevant definitions can be found in the corresponding articles."}
%\jedrzej{Sounds good to me.} \piotr{Yes.}

Next, we review some related results. All the relevant definitions can be found in the corresponding articles.
In their paper, Floyd and Warmuth \cite{FW95} showed that maximum concept classes of VC-dimension $d$ admit proper sample compression schemes of size $d$. This was later extended by Moran and Warmuth \cite{MW16} to extremal concept classes of VC-dimension $d$.
Chalopin, Chepoi, Moran and Warmuth \cite{CCMW22} exhibited \emph{unlabeled} sample compression schemes of size $d$ for maximum concept classes of VC-dimension $d$.
Floyd and Warmuth \cite{FW95} also conjectured that every concept class of VC-dimension $d$ is contained in a maximum concept class of VC-dimension $\mathcal{O}(d)$. If true, this embedding conjecture would imply the sample compression conjecture by combining it with the result of Floyd and Warmuth \cite{FW95}. 
Special cases of this conjecture were confirmed. For instance, Ben-David and Litman \cite{BL98} proved that every Dudley class of VC-dimension $d$ can be embedded into a maximum class of VC-dimension $d$.
In particular, this implies that there is a sample compression scheme of size $d$ for balls in $\mathbb{R}^{d-1}$.
The embedding conjecture was recently refuted by Chase, Chornomaz, Hanneke, Moran, and Yehudayoff \cite{CCHMY24}, who proved that there exist concept classes of VC-dimension $d$ such that every extremal class that contains them has VC-dimension at least exponential in $d$.
Chepoi, Knauer, and Philibert \cite{CKP24} proved that the topes of a complex of oriented matroids of VC-dimension $d$ admit a proper sample compression scheme of size $d$, extending their previous result \cite{CKP22}, which relied on embedding into extremal concept classes. This generalizes both \cite{MW16} and \cite{BL98}.
Marc \cite{M24} proved that oriented matroids of VC-dimension $d$ admit proper unlabeled sample compression schemes of size $d$.

\subsection{Overview of the paper}
In~\cref{sec:preliminaries}, we set basic notation, we state technical variants of~\Cref{thm:tw-first,thm:cw-first} 
% \rom{Don't we want to say~\Cref{thm:tw-first,thm:cw-first} here? Same in the next sentences.} \jedrzej{Yes, thanks, I mixed the references.}
that we actually prove, and we recall the definitions of the width parameters that we work with.
In~\Cref{sec:overview}, we give high-level overviews of the proofs of~\Cref{thm:tw,thm:cw}.
\Cref{sec:subtrees} contains the common part of the proofs of the main results.
In~\Cref{sec:treewidth} we prove~\Cref{thm:tw-first}, in~\Cref{sec:cliquewidth}, we prove~\Cref{thm:cw-first}, and in~\cref{sec:vc}, we prove~\Cref{thm:vc}.
In~\Cref{sec:bd-radius}, we elaborate on the material discussed in~\Cref{sec:intro:bd-radius}, in particular, we prove~\Cref{thm:degeneracy,th:planar-bd-radius}.
Finally, in~\Cref{sec:open}, we discuss some open problems.

% =======================================================
\section{Preliminaries}\label{sec:preliminaries}
% =======================================================

For convenience, we set $\min \emptyset = \infty$ and $\max \emptyset = -1$. 

Let $G$ be a graph. 
For a vertex $v$ of $G$, the set of all neighbors of $v$ in $G$ is denoted by \defin{$N_G(v)$}. 
Observe that $N_G(v) = B_G(v,1) \setminus \{v\}$.
For a subset $X$ of vertices of $G$, we write $\defin{\text{$N_G(X)$}} = \bigcup_{x\in X} N_G(x) \setminus X$ and $\defin{\text{$B_G(X,r)$}}=\bigcup_{x\in X}B_G(x,r)$.
A \defin{separation} of $G$ is a pair $(C,D)$ of subsets of $V(G)$ such that $V(G) = C \cup D$ and there is no edge in $G$ with one endpoint in $C \setminus D$ and the other in $D \setminus C$.
The \defin{length} of a path $P$ is the number of edges of $P$, and we denote it by \defin{$\length(P)$}. 
A \defin{$C$-$D$ path} in $G$ is a path in $G$ from a vertex in $C$ to a vertex in $D$ with no internal vertices in $C\cup D$. 
When one of these sets is a single vertex, e.g.\ $C=\set{c}$, we often write a $c$-$D$ path in $G$ instead of a $\set{c}$-$D$ path in $G$.
We denote by \defin{$\dist_G(C,D)$} the minimum length of a $C$-$D$ path in $G$.
We use the same convention for dealing with singletons.
%For $u,v \in V(G)$, we denote by \defin{$\dist_G(u,v)$} the \defin{distance} between $u$ and $v$ in $G$, i.e.\ the length of a shortest $u$-$v$ path in $G$.
%For $c \in V(G)$ and an integer $r$, a \defin{ball} in $G$ of \defin{radius} $r$ \defin{centered} in $c$ is the set $\defin{\text{$B_G(c,r)$}} = \{v \in V(G) \mid \dist_G(c,v) \leq r\}$.
%Note that if $r=1$, we have $B_G(c,1)=N_G(c)\cup \set{c}$; if $r=0$, we have $B_G(c,0)=\set{c}$; and if $r<0$, we have $B_G(c,r)=\emptyset$. 
%For a subset $X$ of vertices of $G$, we define $\defin{\text{$B_G(X,r)$}}=\bigcup_{x\in X}B_G(x,r)$.
%For $c \in V(G)$, $B_G(c,1) \setminus\{c\}$ is the set of \defin{neighbors} of $c$ in $G$ and is denoted by \defin{$N_G(c)$}.

%The \defin{hypergraph of balls} of $G$ is the hypergraph $H$ with $V(H) = V(G)$ and $E(H)$ is the collection of all the balls in $G$, i.e.\ $B_G(c,r)$ for each $c \in V(G)$ and for each integer $r$.

A \defin{rooted tree} is a tree with a distinguished node, called the \defin{root} of $T$.
Let $T$ be a rooted tree. 
A \defin{leaf} in $T$ is a node of degree $1$ which is not the root.
For every node $x$ in $T$ other than the root, we denote by \defin{$\parent(T,x)$} the \defin{parent} of $x$ in $T$.
Then, $x$ is a \defin{child} of $\parent(T, x)$.
For two nodes $x$, $y$ in $T$, $x$ is a \defin{descendant} of $y$ and $y$ is an \defin{ancestor} of $x$ in $T$ if $y$ lies on the path from the root to $x$ in $T$.
In this case, $x$ and $y$ are \defin{comparable}.
Given an edge $xy$ of $T$, we denote by \defin{$T_{x \mid y}$} the component of $x$ in $T \setminus xy$.
Let $x$ and $y$ be two (not necessarily distinct) nodes of $T$.
The \defin{lowest common ancestor} of $x$ and $y$ in $T$, denoted by \defin{$\lca(T,x,y)$}, is the furthest node from the root that is an ancestor of $x$ and $y$.
Let $Z \subset V(T)$.
The \defin{lowest common ancestor closure} of $Z$ in $T$ is the set $\defin{\text{$\LCA(T,Z)$}}=\{\lca(T,x,y)\mid x,y\in Z\}$.  
Observe that $\LCA(T,\LCA(T,Z)) = \LCA(T,Z)$.

A rooted tree $T$ is \defin{binary} if every node $v$ of $T$ has at most two children, among which at most one it the \defin{left child} of $v$ and at most one is the \defin{right child} of $v$.

\subsection{Array sample compression schemes}

In the proofs of~\Cref{thm:tw-first,thm:cw-first}, as well as in the work of Moran and Yehudayoff \cite{Moran2016} and of Chalopin, Chepoi, Mc Inerney, Ratel, and Vaxès~\cite{Chalopin2023}, the additional bitstring in the sample compression scheme is used to encode a certain ordering of the subsample.
Therefore, following Ben-David and Litman~\cite{BL98}, we define another variant of sample compression schemes, which is almost equivalent to the original one.

Let $H$ be a hypergraph.
Let \defin{$\emptyentry$} be a fresh symbol not in $V(H)$, which is used to denote an empty entry in the array.
An \defin{array sample compression scheme} of $H$ of \defin{size} $k$ is a pair of functions $(\kappa,\rho)$ where
\[
    \kappa\colon \calS(H) \rightarrow (V(H) \cup \{\emptyentry\})^k \text{ and } \rho\colon (V(H) \cup \{\emptyentry\})^k \rightarrow 2^{V(H)}
\]
such that for every $(X^+,X^-) \in \calS(H)$, we have
\begin{enumerate}
    \item $\kappa(X^+,X^-) \in (X^+ \cup X^- \cup \{\emptyentry\})^k$ and
    \item $\rho(\kappa(X^+,X^-))$ realizes $(X^+,X^-)$.
\end{enumerate}
Again, such an array sample compression scheme $(\kappa, \rho)$ is \defin{proper} 
if all values of the reconstructor $\rho$ are in $E(H)$.

%\piotr{Maybe we should explain more this equivalence?}

The main difference with (non-array) sample compression schemes is that the values returned by the compressor are ordered sequences, possibly with empty entries.
It is easy to check that usual (non-array) and array sample compression schemes are equivalent up to a logarithmic factor.
More precisely, every (proper) array sample compression scheme of size $k$ can be translated to a (proper) sample compression scheme of size $\mathcal{O}(k\log k)$ by recording a permutation of the elements of the subsample.
In this paper we prove the following statements.

\begin{restatable}{theorem}{thmtw}\label{thm:tw}
For every positive integer $t$ and every graph $G$ of treewidth at most $t$, 
the hypergraph of balls in $G$ has a proper array sample compression scheme of size $4t+7$.
\end{restatable}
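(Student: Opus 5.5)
The plan is to build the array scheme around a tree decomposition $(T,\{W_x\}_{x\in V(T)})$ of $G$ of width at most $t$, so every bag has at most $t+1$ vertices. Fix a sample $(X^+,X^-)$ and a ball $B_G(c,r)$ realizing it. The array will be split into a constant number of blocks, each of length at most $t+1$ or $O(1)$:
\begin{enumerate}
\item two blocks indexed by the vertices of one bag $W$;
\item two blocks indexed by the vertices of a second bag $W'$;
\item three special cells used to locate the relevant part of $T$ and to fix the radius.
\end{enumerate}
This gives $2(t+1)+2(t+1)+3=4t+7$ cells. Since the paper says the proofs of \Cref{thm:tw-first,thm:cw-first} share a part (\Cref{sec:subtrees}), I expect to use a statement about subtrees of a rooted tree and lowest common ancestor closures to identify a small region of $T$ that must contain a valid centre.

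The first step is localization. Let $Z\subseteq V(T)$ be the set of nodes whose bags contain sample points. Among the nodes of $\LCA(T,Z)$, take the region of $T$ "between" consecutive branching nodes that contains a node whose bag contains $c$. This region is bounded by at most two bags $W$ (towards the root) and $W'$ (towards the subtree below). I expect that a constant number of sample points, stored in the special cells, determine this region through lowest common ancestors. The reconstructor can then recompute $W$ and $W'$ and the set $R$ of vertices of $G$ lying in bags of the region.

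The second step is to encode the outside of the region through the separators. Any sample point $x$ outside the region is separated from $c$ by $W$ or $W'$, so $\dist_G(c,x)=\min_{s}\bigl(\dist_G(c,s)+\dist_G(s,x)\bigr)$, where $s$ ranges over that separator. For each separator vertex $s$, the compressor stores two points on the far side of the separator:
\begin{enumerate}
\item a positive point $p_s$ maximizing $\dist_G(s,x)$ among positives $x$ for which $s$ attains the minimum;
\item a negative point $n_s$ minimizing $\dist_G(s,x)$ among all negatives on that side.
\end{enumerate}
The reconstructor then takes the stored set $(Y^+,Y^-)$. It looks for $c'\in R$ and an integer $r'$ such that $B_G(c',r')$ realizes $(Y^+,Y^-)$, and such that for each separator vertex $s$ we have $\dist_G(c',s)+\dist_G(s,p_s)\le r'$ and $\dist_G(c',s)+\dist_G(s,n_s)>r'$. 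Sample points inside the region are handled by the remaining cells, or are covered because they lie in the bags of $W$ and $W'$. This search is deterministic once a canonical choice is made, for instance the lexicographically least pair $(c',r')$. The original pair $(c,r)$ is a feasible candidate, so the output is always a ball, and the scheme is proper.

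The main obstacle is the verification step: showing that every unstored positive $x$ lies in $B_G(c',r')$ and every unstored negative lies outside it, using only the separator inequalities. For negatives this is clean: $\dist_G(c',n)=\min_s\bigl(\dist_G(c',s)+\dist_G(s,n)\bigr)$, and each term exceeds $r'$ by the choice of $n_s$.

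For positives, the difficulty is that the vertex $s$ achieving $\dist_G(c,x)$ may differ from the one achieving $\dist_G(c',x)$. Still, $\dist_G(c',x)\le\dist_G(c',s)+\dist_G(s,x)\le\dist_G(c',s)+\dist_G(s,p_s)\le r'$ holds for the vertex $s$ used by $c$, so the inequality only requires that the stored $p_s$ dominate $x$ with respect to that $s$. If this breaks down, the first fix I would try is to define $p_s$ as the maximizer of $\dist_G(s,x)$ over all positives on that side, not only those routed through $s$. I would then check that the constraint $\dist_G(c',s)+\dist_G(s,p_s)\le r'$ remains satisfiable by the original pair $(c,r)$; the concern is whether $(c,r)$ still satisfies the constraint once $p_s$ is routed through a different separator vertex. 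The remaining check is for sample points in the bags of the region itself, which should be absorbed by the same constraints because they lie in $W\cup W'$ or within the localized region.
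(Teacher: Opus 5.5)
Your separator step is correct as written and is essentially the paper's argument, so the hedging at the end is unnecessary. The key point is that $p_s$ is chosen among positives $x$ for which $s$ attains $\dist_G(c,x)=\min_{s'}(\dist_G(c,s')+\dist_G(s',x))$. Hence the original pair satisfies $\dist_G(c,s)+\dist_G(s,p_s)=\dist_G(c,p_s)\le r$. Moreover, any positive $x$ is dominated by $p_s$ for some $s$ attaining the minimum for $x$, and then the triangle inequality gives $x\in B_G(c',r')$. How $c'$ is routed does not matter.

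Your proposed ``fix'' of maximizing over all positives on that side would actually destroy feasibility of $(c,r)$. A positive reached from $c$ through a different separator vertex can make $\dist_G(c,s)+\dist_G(s,p_s)$ exceed $r$.

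The differences from the paper are minor:
\begin{itemize}
\item The paper's separator is the union of the at most two adhesions on the boundary edges of the region, rather than the two boundary bags.
\item It defines $r^+_{\mathcal I}(v)$ as a maximum over the larger set $X\cap B_G(v,r-\dist_G(c,v))$.
\item Its reconstructor asks for the containments of \cref{lemma:good-balls-realize} inside $A_2$, rather than your pointwise inequalities.
\end{itemize}
In both versions $(c,r)$ is feasible, any feasible ball realizes the sample, and the count is $4t+7$.

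The genuine gap is in the localization. You define $Z$ as the set of all nodes whose bags contain sample points. But the reconstructor only sees vertices of $G$, and a vertex lies in many bags. So it cannot recover the relevant nodes of $Z$, and hence not their lowest common ancestors, from the stored cells. Also, ``the region containing a node whose bag contains $c$'' is not unique.

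The fix is the paper's:
\begin{itemize}
\item fix a home node $h(v)$ with $v\in W_{h(v)}$ for every vertex;
\item set $Z=\{h(x)\mid x\in X\}$;
\item let $C$ be the component of $T-\LCA(T,Z)$ containing $h(c)$.
\end{itemize}
Then \cref{lemma:encode-subtree} provides $w_1,w_2,w_3\in X\cup\{\emptyentry\}$ with $\varphi(h(w_1),h(w_2),h(w_3))=C$. This is what the three special cells store; no cell is needed to ``fix the radius'', since the search over $r'$ does that.

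Home nodes also settle your worry about sample points inside the region, for which no cells remain anyway. Suppose $x\in X$ lies in a bag of $C$. Since $h(x)\notin V(C)$ and the nodes whose bags contain $x$ form a subtree, $x$ belongs to the bag of a boundary node.

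Finally, you have not treated the case $h(c)\in\LCA(T,Z)$, which occurs for instance whenever $c\in X$. Then there is no region strictly between branching nodes. You must take the single bag $W_{h(c)}$ as the separator and search for $c'$ inside it, as the paper does with $(A_1,A_2)=(W_{h(c)},V(G))$. \cref{lemma:encode-subtree} encodes this one-node subtree as well.
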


\begin{restatable}{theorem}{thmcw}\label{thm:cw}
For every positive integer $t$ and every graph $G$ of cliquewidth at most $t$, 
the hypergraph of balls in $G$ has a proper array sample compression scheme of size $4t+3$.
\end{restatable}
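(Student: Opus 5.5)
The plan is to work with a clique expression of width $t$ for $G$. Its syntax tree $T$ is a rooted binary tree whose leaves are the vertices of $G$. For a node $y$ of $T$, let $V_y$ be the set of leaves below $y$. The set $V_y$ splits into at most $t$ \emph{label classes} $L_y^1,\dots,L_y^t$, and two vertices of the same class have the same neighbours in $V(G)\setminus V_y$. This gives the decomposition of distances I intend to use throughout. For $u\in V_y$ and $w\notin V_y$, a shortest $u$-$w$ path has a last vertex in $V_y$. Every vertex of $L_y^i$ has the same outside neighbourhood, so for each $i$ the portion of the path after leaving $L_y^i$ has a length $D_y^i(w)$ that does not depend on which vertex of $L_y^i$ is used. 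Hence
\[
\dist_G(u,w)=\min_{i\in[t]}\bigl(\dist_G(u,L_y^i)+D_y^i(w)\bigr),
\]
up to choosing the correct exit convention (paths may leave and re-enter $V_y$, so one must take the last exit). The point is that $u$ sees the outside world through only $t$ numbers $\dist_G(u,L^i_y)$.

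Fix a sample $(X^+,X^-)$ realized by some ball, and choose a realizing ball $B_G(c,r)$ canonically, say with $r$ minimal and then $c$ minimal in a fixed order. If $X^+=\emptyset$ we output an empty array and reconstruct the empty ball ($r=-1$), so assume $X^+\neq\emptyset$. Using the common lemma of \Cref{sec:subtrees} on the closure $\LCA(T,X^+\cup X^-\cup\{c\})$, I would locate at most two nodes $y_1,y_2$ on the root-to-$c$ path of $T$ that separate $c$ from the relevant part of the sample. For each node $y_j$ and each label $i$, the compressor stores:
\begin{itemize}
\item[(a)] a positive vertex $p_j^i\notin V_{y_j}$ that is \emph{critical} for class $i$, i.e.\ maximizing $D_{y_j}^i(\cdot)$ among positives that are reached through class $i$;
\item[(b)] a negative vertex $n_j^i\notin V_{y_j}$ that is closest through class $i$, i.e.\ minimizing $D_{y_j}^i(\cdot)$ among negatives.
\end{itemize}
This uses $4t$ slots. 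The last three slots hold a positive vertex farthest from $c$ (which witnesses the radius $r$), a vertex of the sample lying inside $V_{y_1}$ or $V_{y_2}$ that pins down where the centre lives, and one marker slot. The marker slot is filled or left as $\emptyentry$, which encodes the case distinction among the at most two separating nodes.

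The reconstructor recomputes $y_1,y_2$ from the stored vertices, taking LCAs in $T$, and $r$ from the radius witness. It then outputs $B_G(c',r')$ for a canonical $c'$, say the first vertex in a fixed order, such that $c'$ satisfies all stored constraints: every stored positive is within $r'$ and every stored negative is at distance greater than $r'$. Such a $c'$ exists because $c$ qualifies, so the scheme is proper. To verify correctness, take any $w\in X^\pm$ outside $V_{y_j}$ and apply the displayed formula to both $c$ and $c'$. The positives $p^i_j$ force $\dist(c',L^i_{y_j})\le r-D^i_{y_j}(p^i_j)$ for every class $i$ that $c$ actually uses; by maximality of $p^i_j$, this bounds $\dist(c',w)$ for all positives $w$ routed through class $i$. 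Symmetrically, the negatives $n^i_j$ force $\dist(c',L^i_{y_j})> r-D^i_{y_j}(n^i_j)$ for every class $i$, which by minimality of $n^i_j$ keeps all negatives out of the ball. Sample vertices inside the small parts are handled by the two separating nodes together with the location slot.

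The main obstacle is making the distance decomposition rigorous while keeping the extremal choices in (a) and (b) consistent between $c$ and $c'$. Positives are routed through the class that realizes the minimum for $c$, and that class may differ from the one realizing the minimum for $c'$. Negatives are easy, since they need every class to be far. For positives, I would first try to argue that only an upper bound via some class is needed: $c'$ matches $c$'s bound on the class $c$ used, and that already suffices. I would also need to confirm that two separating nodes really suffice to control the sample vertices lying on the same side as $c$, which is where the LCA lemma of \Cref{sec:subtrees} is used.
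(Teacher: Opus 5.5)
\textbf{There is a genuine gap in the reconstruction step.} It is exactly the obstacle you flag at the end.

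Your reconstructor only checks that the stored positives lie in $B_G(c',r')$ and the stored negatives do not. Your correctness argument, however, needs the per-class bound $\dist_G(c',L^i_{y_j})+D^i_{y_j}(p^i_j)\le r'$, and that does not follow. The membership $p^i_j\in B_G(c',r')$ only gives $\min_{i'}\bigl(\dist_G(c',L^{i'}_{y_j})+D^{i'}_{y_j}(p^i_j)\bigr)\le r'$. So $c'$ may reach $p^i_j$ through another class while being far from $L^i_{y_j}$, and then the unstored positives that $c$ reached through $L^i_{y_j}$ are lost. (For negatives the point test does give the bound for every class, which is why that half of your argument is fine.)

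For instance, let $G$ be the path $c\,w\,q\,p\,c'$ plus a vertex $n$ adjacent to $p$, cut so that the inner side is $\{c,c'\}$ with $c,c'$ in different classes. Then $B_G(c,3)$ realizes $(\{w,p\},\{n\})$, and your stored witnesses are $p$ (critical for the class of $c$, and also farthest from $c$) and $n$. The ball $B_G(c',1)$ passes your test but misses $w$. Moreover, the reconstructor cannot compute $r$ from a positive farthest from $c$ without knowing $c$, so that slot is useless.

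The missing idea is to let the test quantify over classes rather than over stored vertices. The array format permits this, since the position of a witness tells the reconstructor its class. Let $F=V(G)\setminus V_{y_2}$ be the far side, and let $c'$ range over the leaves of $C$. The paper requires
\[
F\cap\bigcup_i B_{G[F\cup L^i]}\bigl(L^i,D^i(p^i)\bigr)\subseteq F\cap B_G(c',s)\subseteq F\cap\bigcup_i B_{G[F\cup L^i]}\bigl(L^i,D^i(n^i)-1\bigr),
\]
together with the analogous sandwich at the lower cut (\cref{lemma:good-balls-realize-again,lemma:initial-ball-realizes-again}). Imposing your per-class inequalities $\dist_G(c',L^i)+D^i(p^i)\le s$ and $\dist_G(c',L^i)+D^i(n^i)>s$ directly would also work. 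Either condition is computable without $c$ or $r$ and holds for $(c,r)$. It forces every positive that $c$ routes through $L^i$ into the ball however $c'$ routes, and it keeps all far-side negatives out.

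\textbf{The localisation also needs to be redone.} The cut must be encodable by sample vertices, so it should come from $\LCA(T,X)$ with $X=X^+\cup X^-$, not from a closure containing $c$. Treat $c\in X$ separately (store $c$ and a nearest negative). Otherwise take the component $C$ of $T-\LCA(T,X)$ containing $c$. It contains no sample vertex at all, since $X\subseteq\LCA(T,X)$, and it is encoded by three sample vertices via \cref{lemma:encode-subtree}. Your one location vertex plus a marker is not shown to determine $C$, but reassigning the useless radius slot gives exactly $3+4t$ slots.

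The boundary nodes of $C$ are not both on the root-to-$c$ path. The upper one, $y_2$, is an ancestor of $c$. Below $C$, however, hangs a node $y_1$ with $c\notin V_{y_1}$ and sample vertices inside $V_{y_1}$. Your (a) and (b) do not cover $y_1$, because there the roles flip: the classes $U_i$ partition the sample side $V_{y_1}$, the witnesses are extremal for $\dist_G(\cdot,U_i)$, and the centre enters only through $\dist_{G[C_1\cup U_i]}(U_i,c)$ with $C_1=V(G)\setminus V_{y_1}$.

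\textbf{One point in your favour.} Restricting $n^i_j$ to negatives outside $V_{y_j}$ is the right convention at $y_2$. A negative lying in a class on $c$'s side (as happens for negatives below $y_1$) has class distance $0$, yet the distance decomposition fails for it. The paper's $r^-(V_j)$ ranges over $X^-\cap(D_2\cup V_j)$, which breaks the last inclusion of \cref{lemma:initial-ball-realizes-again}. To see this, take the path $c\,u\,x$, with $u$ and $y_2$ the children of the root and $c,x$ same-label children of $y_2$, and $B_G(c,1)$ realizing $(\{u\},\{x\})$. Then the right-hand side is empty while $u\in D_2\cap B_G(c,1)$. The definition should range over $X^-\cap D_2$.
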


As discussed, \Cref{thm:tw,thm:cw} imply \Cref{thm:tw-first,thm:cw-first}, respectively.
We note that if we assume treedepth at most $t$ instead of treewidth in~\Cref{thm:tw}, with our proof techniques, one can almost directly obtain a multiplicative factor of $2$ in the bound.

\subsection{Width parameters}

A \defin{tree decomposition} of $G$ is a pair $\big(T,(W_x \mid x\in V(T))\big)$, 
where $T$ is a tree and the sets $W_x$ for $x \in V(T)$ are subsets of $V(G)$ 
called \defin{bags} satisfying:
\begin{enumerate}
\item for each edge $uv\in E(G)$ there is a bag containing both $u$ and $v$, and
\item for each vertex $v\in V(G)$ the set of nodes $x\in V(T)$ such that 
$v\in W_x$  induces a non-empty subtree of $T$.
\end{enumerate}
An \defin{adhesion} in this tree decomposition is a set of the form $W_x \cap W_y$
for some edge $xy \in E(T)$.
The \defin{width} of this tree decomposition is $\max\{|W_x|-1 \mid x\in V(T)\}$.
The \defin{treewidth} of $G$ is the minimum width of a tree decomposition of $G$.
It is well-known that every graph of treewidth $t$ has a tree decomposition $\big(T,(W_x \mid x\in V(T))\big)$ of width $t$ where $T$ is a binary tree.

Instead of cliquewidth, we will consider in the proof of \Cref{thm:cw} another graph parameter, called NLC-width\footnote{NLC stands here for \emph{node label controlled}.}, which was
introduced by Wanke~\cite{Wanke94}.
This will be enough for our purposes thanks to the following property:
for every integer $k$,
a graph of cliquewidth at most $k$ has NLC-width at most $k$,
and a graph of NLC-width at most $k$ has cliquewidth at most $2k$~\cite{J01}.

An \defin{NLC-decomposition} of a graph $G$ is a tuple $(T,Q,\alpha,\beta,R)$ where 
\begin{itemize}
    \item $T$ is a rooted binary tree with leaf set $V(G)$,
    \item $Q$ is a finite set of \defin{labels},
    \item $\alpha\colon V(G) \rightarrow Q$ assigns \defin{initial labels},
    \item $\beta\colon \binom{V(T)}{2} \rightarrow (Q \rightarrow Q)$ is a \defin{relabeling function},
    \item $R\colon V(T) \rightarrow Q \times Q$ assigns \defin{status relations},
\end{itemize}
such that 
\begin{enumorig}[label=(nlc\arabic*)]
    \item \label{item:nlc1} for all distinct $u,v \in V(T)$ with $u$ being an ancestor of $v$ if $u_0\cdots u_n$ is the path between $u$ and $v$ in $T$ (i.e.\ $u_0 = u$ and $u_n = v$), then $\beta(\{u_0,u_1\}) \circ \dots \circ \beta(\{u_{n-1},u_n\}) = \beta(\{u,v\})$; and 
    \item \label{item:nlc2} for all distinct $x,y \in V(G)$, if $u$ is the lowest common ancestor of $x$ and $y$ in $T$, 
    $x$ is a descendant of the left child of $u$, and $y$ is a descendant of the right child of $u$, then
    \[
        \textrm{$xy\in E(G)$ if and only if $\big(\beta(\{u,x\})(\alpha(x)),\beta(\{u,y\})(\alpha(y))\big)\in R(u)$.}
    \]
\end{enumorig}
Intuitively, \ref{item:nlc2} says that $R$ determines the adjacency of $x$ and $y$ in $G$ according to the labels obtained by relabeling with $\beta$ the initial labels $\alpha(x)$ and $\alpha(y)$.
%Intuitively, \ref{item:nlc2} says that $R(u)$ describes the set of all the edges between
%left- and right-descendants of $u$ in $T$, using only the labels of these vertices at $u$.
% The \defin{NLC-width} of a graph $G$ is the least $q$ such that $G$ admits an NLC-decomposition that uses a label set of size $q$.

The \defin{NLC-width} of $G$ is then the minimum integer $k$ such that
there is an NLC-decomposition $(T,Q,\alpha, \beta, R)$ of $G$ with $|Q| \leq k$.

Unlike cliquewidth, NLC-width can not increase by complementation:
if $(T,Q,\alpha,\beta,R)$ is an NLC-decomposition of $G$,
then $\big(T,Q,\alpha, \beta, u \mapsto (Q \times Q) \setminus R(u)\big)$ 
is an NLC-decomposition of the complement of $G$.

\section{Overview of the proofs} \label{sec:overview}

\subsection{Bounded treewidth}
We give a high-level idea of our sample compression scheme for graphs of bounded treewidth.
The key property that we exploit is that for a graph $G$ with a tree decomposition $\big(T,(W_x\mid x\in V(T))\big)$, for each $y\in V(T)$, the bag $W_y$ separates in $G$ the sets
$\bigcup_{x\in V(C)} W_x$ for all components of $C$ of $T-y$.
We begin by discussing how small separations aid the design of efficient sample compression schemes.

Let $G$ be a graph, let $H$ be the hypergraph of balls in $G$ and let $(A_1, A_2)$ be a separation of $G$.
Let $S = A_1 \cap A_2$ be the separator of this separation, and let $k = |S|$. 
%\piotr{I am here.}
Every $A_1$-$A_2$ path in $G$ intersects $S$, so for every vertex $v \in A_1$, the distances between $v$ and the vertices in $A_2$ are entirely characterized by the distances between $v$ and the $k$ vertices in $S$.
Namely, for every $v \in A_1$ and $u \in A_2$, we have $\dist_G(v,u) = \min_{w \in S} \left(\dist_G(v,w) + \dist_G(w,u)\right)$.
In particular, when a vertex $v \in A_1$ is not known, but we know the values $\dist_G(v,w)$ for every $w \in S$, we can retrieve $\dist_G(v,u)$ for every $u \in A_2$.
In other words, we can compress the information about all distances between $v$ and the vertices in $A_2$ into the information about all distances between $v$ and the $k$ vertices in $S$.
Suppose that $(X^+,X^-)$ is a sample of $H$ realized by $B_G(c,r)$ for some $c \in V(G)$ and some integer $r$, such that $c \in A_1$ and $X^+ \cup X^- \subseteq A_2$.
A natural approach is to encode the distances between $c$ and the vertices of $S$ in $G$.
However, we do not see a direct way of encoding these distances using only the vertices of $X^+ \cup X^-$, and thus we take a somewhat dual approach.
Namely, for each vertex $w \in S$, we will encode a lower bound and an upper bound on the distance between $w$ and the boundary of a ball realizing the sample.
%Namely, for each vertex $w \in S$, we will encode the minimal distance from $w$ to vertices in $A_2$ that have to be covered by a ball realizing the sample, and symmetrically the maximal such distance.
More precisely, for every $w \in S$, we define
\begin{align*}
        r^+(w) &= \max\set{\dist_G(w,x)\mid x\in (X^+\cup X^-) \cap B_G(v,r-\dist_G(c,w))},\\
        r^-(w) &= \min\set{\dist_G(w,x)\mid x\in X^-}.
\end{align*}
The crucial observation (\cref{lemma:good-balls-realize}) is that for a ball $B_G(c',r')$ for some $c' \in A_1$ and some integer $r'$ to realize $(X^+,X^-)$, it suffices that
 \[
        A_2 \cap \bigcup_{w \in S} B_G(w, r^+(w)) \subseteq 
        A_2 \cap B_G(c', r') \subseteq 
        A_2 \cap \bigcup_{w \in S} B_G(w, r^-(w) - 1).
\]
Therefore, assuming that the compressor encoded the values $r^+(w)$ and $r^-(w)$ for every $w \in S$, it is enough to find a ball satisfying the sufficient condition, which can be done by trying all possibilities for $c' \in A_1$ and an integer $r'$.
Note that the ball $B_G(c,r)$ satisfies the condition (\cref{lemma:initial-ball-realizes}), hence, indeed having the values $r^+(w)$ and $r^-(w)$ for every $w \in S$, we can find a ball realizing the sample.
Finally, note that to encode the values $r^+(w)$ and $r^-(w)$ for $w \in S$ it suffices to remember the witnesses in the respective definitions.

Now assume that the treewidth of $G$ is at most $t$ and fix a tree decomposition $\big(T, (W_y \mid y \in V(T))\big)$ of $G$ of width at most $t$ where $T$ is a rooted tree.
For every vertex $v \in V(G)$, we fix a ``home node'' $h(v) \in V(T)$ such that $v \in W_{h(v)}$.

Using the tree decomposition of $G$, we would like to find a separation $(A_1,A_2)$ of $G$ such that $c \in A_1$ and $X^+ \cup X^- \subset A_2$.
The details of this process are described in~\Cref{sec:subtrees}.
In short, we consider the set $Z = \{h(x) \mid x \in X^+ \cup X^-\}$ of all home nodes of vertices of $X^+ \cup X^-$.
Observe that every node $z \in \LCA(T, Z)$ can be described in terms of two elements $x_1, x_2 \in X^+ \cup X^-$ such that $z = \lca(T, h(x_1), h(x_2))$.
If $h(c) \in \LCA(T,Z)$, then we use this observation to encode $h(c)$ and the required separation is $(W_{h(c)},\bigcup_{y \in V(T) \setminus\{h(c)\}}W_y)$.
Thus, we may assume that $h(c) \notin \LCA(T, Z)$.
The component $C$ of $T - \LCA(T, Z)$ which contains $h(c)$ has at most two neighbors in $\LCA(T, Z)$.
It follows that $C$ is determined by two elements of $\LCA(T,Z)$, and these elements can be encoded by at most four elements of $X^+ \cup X^-$ as observed above.
Finally, we set 
\[(A_1, A_2) = \left(\bigcup_{y \in V(C)}W_y, \bigcup_{y \notin V(C)}W_y\right).\]
Note that indeed $c \in A_1$ and $X^+ \cup X^- \subset A_2$.

Summarizing, the compression phase consists of two parts.
First, we find a suitable separation $(A_1,A_2)$ of order roughly $2t$ and we encode it using few vertices of $X^+ \cup X^-$.
Second, for every $w \in A_1 \cap A_2$, we encode $r^+(w)$ and $r^-(w)$ using two elements of $X^+ \cup X^-$.
In total, we use $\Oh(t)$ elements.
Note that $r^+(w)$ and $r^-(w)$ have to be \q{assigned} to the element $w$.
This is when we use the fact that we work with \emph{array} sample compression schemes and we lose a logarithmic factor in the final result (\Cref{thm:tw-first}).

\subsection{Bounded cliquewidth}

Our approach for graphs of bounded NLC-width is very similar to our approach for graphs of bounded treewidth. 
The graphs of bounded NLC-width are the graphs that can be recursively decomposed along simple edge cuts, in the sense that there is a small diversity of neighborhoods between the two sides of the cut. To understand these graphs from a metric point of view, we then have to understand how such edge cuts interact with distances.

Let $G$ be a graph and let $(A_1, A_2)$ be a bipartition of $G$ such that the vertices in $A_1$ can be partitioned into sets $V_1, \ldots, V_k$ such that any two vertices in the same $V_i$ have the same neighborhood in $A_2$.
Then, every $A_1$-$A_2$ path in $G$ leaves $A_1$ via some $V_i$. 
Since all vertices in $V_i$ have the same neighborhood in $A_2$, we get that for every vertex $v \in A_1$, the distances between $v$ and the vertices in $A_2$ are entirely characterized by the $k$ distances between $v$ and the sets $V_1, \dots, V_k$, see~\Cref{lemma:separator_with_few_types}.
Again, this ``compresses'' the information of all distances between $v$ and the vertices in $A_2$ to the information of all $k$ distances between $v$ and the sets $V_1, \dots, V_k$.
However, as before, we can not use the vertices of the sample to encode this information.
Instead, we exploit a similar workaround as in the case of treewidth.
Up to some extra technicalities, the final approach for constructing sample compression schemes for hypergraphs of balls of graphs of bounded cliquewidth is similar to the bounded treewidth case.

\section{Encoding subtrees}\label{sec:subtrees}

In this section, we prove \Cref{lemma:encode-subtree}, which will be a crucial tool
in the proofs of \Cref{thm:tw,thm:cw}.
The following lemma is folklore, see e.g. \cite[Lemma 8]{DHHJLMMRW25}. 

\begin{lemma}\label{lemma:increase_Z_in_a_tree}
    Let $T$ be a rooted tree,
    and let $Z$ be a set of the nodes of $T$.
    Every component $C$ of $T-\LCA(T,Z)$ satisfies $|N_T(V(C))| \leq 2$.
\end{lemma}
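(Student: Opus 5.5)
Write $L=\LCA(T,Z)$ and fix a component $C$ of $T-L$. If $L=\emptyset$ (that is, $Z=\emptyset$), then $C=T$ has no neighbours and there is nothing to prove, so assume $L\neq\emptyset$. Since $C$ is a connected subtree of the rooted tree $T$, it has a unique node $c_0$ closest to the root; call it the top of $C$. The plan is to show that $N_T(V(C))$ consists of at most one node ``above'' $C$ and at most one node ``below'' $C$.

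The node above is easy. Every neighbour of $C$ is either the parent of $c_0$ or a child of some node of $C$. Indeed, the parent of any node $x\in V(C)$ other than $c_0$ lies in $C$, because the path from $x$ to $c_0$ goes through that parent. So at most one neighbour is the parent of $c_0$, and every other neighbour $y\in L$ is a child of a node of $C$, hence a descendant of $c_0$.

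The main step is to show there is at most one such lower neighbour. Suppose for contradiction that $y_1\neq y_2$ are two nodes of $L\cap N_T(V(C))$, both children of nodes of $C$. Neither is an ancestor of the other: the path between $y_1$ and $y_2$ in $T$ runs through $C$ (through the parents of $y_1,y_2$), and it avoids $L$ internally. Let $u=\lca(T,y_1,y_2)$. The path from $y_1$ to $y_2$ passes through $u$, and $u\neq y_1,y_2$, so $u$ is an internal vertex of that path and therefore $u\in V(C)$. Now write $y_i=\lca(T,a_i,b_i)$ with $a_i,b_i\in Z$. Then $a_1$ is a descendant of $y_1$ and $a_2$ is a descendant of $y_2$. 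Since $y_1$ and $y_2$ lie in different child subtrees of $u$, we get $\lca(T,a_1,a_2)=u$. Hence $u\in L$, contradicting $u\in V(C)\subseteq V(T)\setminus L$.

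Combining the two parts gives $|N_T(V(C))|\le 2$. The only delicate point is the claim that $u$ lies on the $y_1$-$y_2$ path strictly inside $C$. To see it, note that the unique tree path between $y_1$ and $y_2$ consists of their parents together with the $C$-path joining those parents. This path is a tree path between two incomparable nodes, so it passes through their lowest common ancestor $u$. Since $u\neq y_1,y_2$, the node $u$ lies on the part of the path inside $C$.
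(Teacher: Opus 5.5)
Your proof is correct. The paper does not actually prove this lemma (it quotes it as folklore), and your argument is the standard one: its key step is that two incomparable neighbours $y_1,y_2$ of $C$ force $\lca(T,y_1,y_2)$ to be an internal node of the $y_1$--$y_2$ path, hence in $C$, while it also lies in $\LCA(T,Z)$. This is exactly the reasoning the paper uses inside the proof of \cref{lemma:encode-subtree} to show that the two neighbours of $C$ are comparable.

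The step that deserves a clearer justification is why your two lower neighbours are incomparable. It holds because the $y_1$--$y_2$ path leaves $y_1$ through its parent and enters $y_2$ from its parent, so it cannot be a monotone ancestor--descendant path; the fact that the path avoids $\LCA(T,Z)$ internally plays no role there.
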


Consider the following setting.
Let $T$ be a rooted binary tree and let $S(T)$ be the set of subtrees of $T$.
For each $Z \subseteq V(T)$, let $S(T, Z)$ be the set of all components of $T - \LCA(T,Z)$ and all one-node subtrees given by the nodes in $\LCA(T,Z)$.
Our goal is to encode the subtrees of $T$, using nodes of $T$, in such a way that for each $Z \subseteq V(T)$, each subtree $T' \in S(T, Z)$ can be encoded with few elements of $Z$.
A simple way to do it is the following.
We define a function $\varphi\colon \binom{V(T)}{\leq 3} \times [16] \rightarrow S(T)$ such that for every $Z \subset V(T)$, the image $\varphi\left(\binom{Z}{\leq 3} \times [16] \right)$ contains $S(T,Z)$.
Let $Z'$ be a set of at most three nodes in $T$.
Note that $|\LCA(T,Z')| \le 5$.
In particular, $|S(T,Z')| \leq 16$ as $T$ is a binary tree.
We may now enumerate the components of $S(T,Z')$ and assign them as values of $\varphi(Z',i)$ for each $i \in [|S(T,Z')|]$.
This would work well in the usual setup of sample compression schemes.
However, since we chose to use \emph{array} sample compression schemes, we need to adjust the function $\varphi$ accordingly.
Namely, we need a function $(V(T) \cup \{\emptyentry\})^M \rightarrow S(T)$ for some small integer $M$.
One can obtain a constant $M$ from the reasoning above by encoding the numbers in $[16]$ using the empty entry symbol $\emptyentry$.
In the lemma below, we are more careful, and we get $M = 3$.

\begin{lemma}\label{lemma:encode-subtree}
    Let $T$ be a rooted binary tree. Let $S(T)$ be the set of subtrees of $T$. There exists a map $\varphi\colon (V(T) \cup \{\emptyentry\})^3 \to S(T)$ such that 
    for every set $Z \subseteq V(T)$, 
    the image $\varphi((Z \cup \{\emptyentry\})^3)$ contains each 
    component of $T-\LCA(T, Z)$ and each one-node subtree with the node in $\LCA(T, Z)$.
\end{lemma}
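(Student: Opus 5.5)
The plan is to define $\varphi$ by cases on which of the three entries equal $\emptyentry$, using the positions of the entries to say which kind of subtree is meant. Fix $Z$. The one-node subtrees are easy: every $z\in\LCA(T,Z)$ is $\lca(T,x,y)$ for some $x,y\in Z$ (possibly $x=y$). So I set $\varphi(x,y,\emptyentry)$ to be the one-node subtree on $\lca(T,x,y)$.

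For a component $C$ of $T-\LCA(T,Z)$, \Cref{lemma:increase_Z_in_a_tree} gives $|N_T(V(C))|\le 2$. Since $C$ is connected, at most one neighbour lies above $C$; call it $u$, the parent of the top node $c$ of $C$. Any other neighbours lie below $C$, and $C\cap\LCA(T,Z)=\emptyset$, so in particular $C\cap Z=\emptyset$. This leaves four cases.

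\begin{enumerate}
\item There is no neighbour at all, so $C=T$. This can only happen if $Z=\emptyset$, in which case there is nothing to encode; I would map $(\emptyentry,\emptyentry,\emptyentry)$ to $T$ anyway.
\item There is only a lower neighbour $w$, so $C=T\setminus T_w$, where $T_w$ is the subtree of descendants of $w$. Write $w=\lca(T,x,y)$ and set $\varphi(\emptyentry,x,y)=T\setminus T_w$.
\item There is only the upper neighbour $u$, so $C$ is the whole subtree rooted at the left or the right child of $u$. Write $u=\lca(T,x,y)$. I set $\varphi(x,\emptyentry,y)$ to be the subtree of the left child of $\lca(T,x,y)$. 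The right child needs a different pattern, and I use the fully filled triple $(x,y,x)$ for it, as explained below.
\item There are two neighbours $u$ above and $w$ below. Then $C=T_c\setminus T_w$, where $c$ is the child of $u$ on the path to $w$.
\end{enumerate}

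The main obstacle is case (iv). Encoding $u$ and $w$ separately would cost four elements, so I need to share one. Write $w=\lca(T,a,b)$ with $a,b\in Z$; in particular $a$ is a descendant of $w$. Write $u=\lca(T,x',y')$ with $x',y'\in Z$. Since $T_c\cap\LCA(T,Z)\subseteq T_w\cap\LCA(T,Z)$ and $u\notin T_c$, not both of $x',y'$ can lie in $T_c$; call one that does not $x$. Then $\lca(T,a,x)=u$, because $a\in T_c$, $x\notin T_c$, and $x$ is a descendant of $u$. So I set $\varphi(a,b,x)=T_c\setminus T_w$ whenever $w:=\lca(T,a,b)$ has a proper ancestor $u:=\lca(T,a,x)$, with $c$ the child of $u$ towards $w$. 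This is well defined, and it agrees with $C$ by \Cref{lemma:increase_Z_in_a_tree}.

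Fully filled triples not of this form are spare codes. In particular $(x,y,x)$ is spare, since $\lca(T,x,x)=x$ is a descendant of $\lca(T,x,y)$ and so is never a proper ancestor of it. I therefore use $\varphi(x,y,x)$ for the subtree of the right child of $\lca(T,x,y)$, which finishes case (iii). All remaining inputs, including a missing child, are sent to an arbitrary subtree. The positions of $\emptyentry$, together with the ancestor test, make the cases pairwise disjoint, so $\varphi$ is a well-defined map. Every entry used in the encodings above lies in $Z\cup\{\emptyentry\}$, which gives the claimed containment.
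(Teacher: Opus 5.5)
Your construction is correct in substance and follows the paper's plan closely. Singletons are encoded by $(x,y,\emptyentry)$ and the component containing the root by $(\emptyentry,x,y)$. In the two-neighbour case you share one entry between the lower neighbour $w=\lca(T,a,b)$ and the upper neighbour $u=\lca(T,a,x)$ by taking $x\notin T_c$. This is exactly the paper's choice $\lca(T,z_1,z_2)=y_1$, $\lca(T,z_1,z_3)=y_2$.

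The differences are in bookkeeping.
\begin{itemize}
\item For full triples, the paper reads off the two extra elements of $\LCA(T,\{z_1,z_2,z_3\})$ and returns the component between them. Your positional rule (entries one and two give $w$, entries one and three give $u$) makes well-definedness immediate and leaves spare codes.
\item When $C$ has only an upper neighbour $u$, the paper observes that $u\in Z$. Indeed, $T_c\cap Z=\emptyset$, so any $z,z'\in Z$ with $\lca(T,z,z')=u$ must include $u$ itself. Hence one entry, together with the position of the two $\emptyentry$'s, encodes the left or right child. You instead spend two entries on $u=\lca(T,x,y)$ and then need a spare pattern for the right child. You correctly find it in $(x,y,x)$, since $\lca(T,x,x)=x$ is never a proper ancestor of $\lca(T,x,y)$.
\end{itemize}

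One step is missing. Your list of four cases assumes that $C$ cannot have two neighbours both lying below it, i.e.\ that $C$ cannot contain the root and have two lower neighbours $w_1,w_2$. Uniqueness of the upper neighbour does not exclude this, and \Cref{lemma:increase_Z_in_a_tree} only gives $|N_T(V(C))|\le 2$. You need an argument such as the following.
\begin{itemize}
\item $w_1$ and $w_2$ are incomparable: if $w_1$ were an ancestor of $w_2$, it would lie on the path from the top node of $C$ to the parent of $w_2$, hence in $C$.
\item So $\lca(T,w_1,w_2)$ is an internal node of the $w_1$--$w_2$ path, hence lies in $C$.
\item But it also lies in $\LCA(T,\LCA(T,Z))=\LCA(T,Z)$, a contradiction.
\end{itemize}
This is precisely the comparability claim in the paper's proof.

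A cosmetic point: when $\lca(T,x,y)$ is the root in case (ii), or when $c=w$ in case (iv), your prescribed output is empty. Those inputs should also be sent to the default subtree.
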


\begin{proof}
    Let $r$ be the root of $T$. 
    We define $\varphi$ as follows.
    Let $z_1,z_2,z_3 \in V(T)$.
    \begin{itemize}
        \item If there exist distinct $y_1, y_2 \in V(T)$ such that $\LCA(T, \{z_1, z_2, z_3\}) = \set{z_1,z_2,z_3,y_1, y_2}$ and there is a (unique) component $C$ of $T-\{y_1, y_2\}$ with $N_T(V(C)) = \{y_1, y_2\}$, we set $\varphi(z_1, z_2, z_3) = C$.
        \item We set $\varphi(z_1, z_2, \emptyentry) = T[\{\lca(T,z_1, z_2)\}]$. 
        \item If $r\not\in\LCA(T,\{z_1, z_2\})$, then there exists a (unique) component $C$ of $T - \LCA(z_1, z_2)$ that contains $r$. We set $\varphi(\emptyentry, z_1, z_2) = C$. 
        \item If $z_1$ has a left child in $T$ then let $C$ be the component of $T - z_1$ that contains the left child of $z_1$. We set $\varphi(z_1,\emptyentry, \emptyentry) = C$. 
        \item If $z_1$ has a right child in $T$ then let $C$ be the component of $T - z_1$ that contains the right child of $z_1$. We set $\varphi(\emptyentry, \emptyentry,z_1) = C$.
        \item In all other cases, we set $\varphi(\cdot, \cdot, \cdot) = T$. In particular, $\varphi(\emptyentry, \emptyentry, \emptyentry) = T$.
    \end{itemize}
We argue that $\varphi$ satisfies the desired property. Let $Z \subseteq V(T)$ and let $C$ be either a component of $T-\LCA(T, Z)$ or a one-node graph with the node in $\LCA(T, Z)$.
We show that there exist $z_1',z_2',z_3' \in Z \cup \{\emptyentry\}$ such that $\varphi(z_1',z_2',z_3') = C$.
Suppose first that $V(C)$ consists of one node $z$ in $\LCA(T, Z)$. 
Then, there exist $z_1, z_2 \in Z$ such that $z = \lca(T,z_1, z_2)$ (possibly, $z_1 = z_2$), and so, $ \varphi(z_1, z_2, \emptyentry) = C$.
Next, suppose that $C$ is a component of $T-\LCA(T, Z)$.
By~\cref{lemma:increase_Z_in_a_tree}, we have $|N_T(V(C))| \leq 2$.
If $|N_T(V(C))| = 0$, then $Z = \emptyset$ and $C = T$ so $ \varphi(\emptyentry, \emptyentry, \emptyentry) = C$.

Suppose that $|N_T(V(C))| = 1$ and let $N_T(V(C)) = \{z\}$. Then, $C$ is a component of $T-\{z\}$. If $C$ contains $r$ then $z \neq r$ and there exist $z_1, z_2 \in Z$ such that $z = \lca(T, z_1, z_2)$, and then $C$ is the unique component of $T - \lca(T, z_1, z_2)$ that contains $r$, so $\varphi(\emptyentry, z_1, z_2) = C$.
If $C$ does not contain $r$, then $C$ contains either the left child of $z$ or the right child of $z$.
Assume that $C$ contains the left child $z_1$ of $z$. 
No descendant of $z_1$ is a node of $Z$ since $V(C)$ is disjoint from $Z$, and therefore $z \in \LCA(T, Z)$ implies $z \in Z$.
Therefore, $C$ is the unique component of $T - z$ that contains the left child of $z$, so $\varphi(z,\emptyentry, \emptyentry) = C$.
If $C$ contains the right child of $z$, then by a symmetric argument, we have $\varphi(\emptyentry, \emptyentry,z) = C$.

Finally, suppose that $|N_T(V(C))| = 2$ and let $N_T(V(C)) = \{y_1, y_2\}$. 
Since $C$ is a component of $T-\LCA(T, Z)$, we have $y_1, y_2 \in \LCA(T, Z)$. 
We claim that $y_1$ and $y_2$ are comparable in $T$. 
Suppose otherwise. 
Then, for $z = \lca(T, y_1, y_2)$, we have $z \notin \{y_1, y_2\}$ and $z \in \LCA(T, Z)$. 
Since $y_1, y_2 \in N_T(V(C))$, all the internal nodes of the $y_1$-$y_2$ path in $T$ are in $V(C)$.
However, this path contains $z$, which contradicts that $V(C)$ is disjoint from $\LCA(T, Z)$. 
This concludes the proof that $y_1$ and $y_2$ are comparable in $T$.
Therefore, up to renaming $y_1$ and $y_2$, there exist $z_1, z_2, z_3 \in Z$ such that $\lca(T,z_1, z_2) = y_1$ and $\lca(T, z_1, z_3) = y_2 = \lca(T, z_2, z_3)$. 
Thus, $\LCA(T, \{z_1,z_2,z_3\}) = \{z_1,z_2,z_3,y_1, y_2\}$ and $C$ is the component of $T-\{y_1, y_2\}$ such that $N_T(V(C)) = \{y_1, y_2\}$. 
It follows that $\varphi(z_1, z_2, z_3) = C$ as desired.
This completes the proof.
\end{proof}

\section{Bounded treewidth}\label{sec:treewidth}

%\subsection{Preliminaries}

An \defin{instance} is a tuple $(G,(X^+,X^-),c,r)$ where $G$ is a graph,
$(X^+,X^-)$ is a sample of the hypergraph of balls in $G$, $c$ is a vertex of $G$, and $r$ is an integer such that $B_G(c,r)$ realizes $(X^+,X^-)$.

Recall that $\min \emptyset = \infty$ and $\max \emptyset = -1$. 
For an instance $\mathcal{I}=(G,(X^+,X^-),c,r)$, 
for each vertex $v$ in $G$, 
let
\begin{align*}
        \defin{\text{$r^+_{\mathcal{I}}(v)$}} &= \max\set{\dist_G(v,x)\mid x\in X \cap B_G(v,r-\dist_G(c,v))},\\
        \defin{\text{$r^-_{\mathcal{I}}(v)$}} &= \min\set{\dist_G(v,x)\mid x\in X^-}.
\end{align*}
% \begin{align*}
%         X_{\mathcal{I}}(v) &= X \cap B_G(v,r-\dist_G(c,v)),\\
%         r^+_{\mathcal{I}}(v) &= 
%                 \begin{cases}
%                     \max\set{\dist_G(v,x)\mid x\in X_\calI(v)}&\textrm{if $X_{\mathcal{I}}(v) \neq \emptyset$} \\
%                     -1 &\textrm{otherwise,}
%                 \end{cases}\\
%         r^-_{\mathcal{I}}(v) &= 
%                 \begin{cases}
%                     \min\set{\dist_G(v,x)\mid x\in X^-}\ \ &\textrm{if $X^- \neq \emptyset$}\\
%                     \infty & \textrm{otherwise.}
%                 \end{cases}
% \end{align*}
Note that $r^+_{\mathcal{I}}(v) \leq r - \dist_G(c,v) \leq r_{\mathcal{I}}^-(v)-1$.

As we explained in \cref{sec:overview}, given an instance $\mathcal{I}=(G,(X^+,X^-),c,r)$ and a separation $(A_1,A_2)$ of $G$ such that $c \in A_1$ and $X^+\cup X^- \subseteq A_2$,  
%\jedrzej{This should be $A_2$?} \piotr{Yes.} 
one can verify whether a ball centered in a vertex $c' \in A_1$ realizes $(X^+,X^-)$ by only inspecting distances between $A_1\cap A_2$ and $\set{c, c'}\cup X^+\cup X^-$ in $G$.
This is formally captured in the next two lemmas:
\Cref{lemma:good-balls-realize} gives a sufficient condition on balls $B(c',s)$ with $c' \in A_1$
to ensure that $B(c',s)$ realizes $(X^+,X^-)$,
while \Cref{lemma:initial-ball-realizes} states that there is a ball (namely $B_G(c,r)$) satisfying this condition.

%The next lemma gives a sufficient condition for when a ball in a graph $G$ realizes a sample from the hypergraph of balls in $G$. 
%The next lemma shows that this condition is always satisfied by some ball in the graph. 

\begin{lemma}\label{lemma:good-balls-realize}
    Let $\mathcal{I}=(G,(X^+,X^-),c,r)$ be an instance,
    let $X = X^+ \cup X^-$,
    let $(A_1,A_2)$ be a separation of $G$ such that $c\in A_1$ and $X \subseteq A_2$, and let $S = A_1 \cap A_2$. 
    Let $c' \in A_1$ and let $s$ be an integer such that 
    \[
        A_2 \cap \bigcup_{v \in S} B_G(v, r^+_{\mathcal{I}}(v)) \subseteq 
        A_2 \cap B_G(c', s) \subseteq 
        A_2 \cap \bigcup_{v \in S} B_G(v, r^-_{\mathcal{I}}(v) - 1) .
    \]
    Then, $B_G(c', s)$ realizes the sample $(X^+,X^-)$.
\end{lemma}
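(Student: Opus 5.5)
We have to show two things: every $x\in X^+$ lies in $B_G(c',s)$, and no $x\in X^-$ does. Since $X\subseteq A_2$, both statements concern only $A_2\cap B_G(c',s)$, so each reduces to one inclusion of the hypothesis. The plan is to use the first (lower) inclusion for $X^+$ and the second (upper) inclusion for $X^-$.

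\textbf{Positive side.} Fix $x\in X^+$. Since $B_G(c,r)$ realizes the sample, $\dist_G(c,x)\le r$, so there is a shortest $c$-$x$ path $P$ of length at most $r$. Because $c\in A_1$, $x\in A_2$, and $(A_1,A_2)$ is a separation, $P$ meets $S$; let $v\in S$ be a vertex of $P$ (if $c$ or $x$ lies in $S$, take $v$ to be that endpoint). Since $P$ is a shortest path,
\[\dist_G(c,v)+\dist_G(v,x)=\dist_G(c,x)\le r,\]
so $x\in B_G(v,r-\dist_G(c,v))$. Thus $x$ belongs to the set whose maximum defines $r^+_{\mathcal I}(v)$, which gives $\dist_G(v,x)\le r^+_{\mathcal I}(v)$. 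Hence $x\in A_2\cap B_G(v,r^+_{\mathcal I}(v))$, and the first inclusion yields $x\in B_G(c',s)$.

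\textbf{Negative side.} Fix $x\in X^-$. If $x$ were in $B_G(c',s)$, the second inclusion would give some $v\in S$ with $\dist_G(v,x)\le r^-_{\mathcal I}(v)-1$. But $r^-_{\mathcal I}(v)$ is by definition the minimum of $\dist_G(v,y)$ over $y\in X^-$, and $x\in X^-$, so $r^-_{\mathcal I}(v)\le\dist_G(v,x)$. These two inequalities contradict each other, so $x\notin B_G(c',s)$. The conventions $\min\emptyset=\infty$ and $\max\emptyset=-1$ do not cause trouble here: in the positive case $X^+$ is nonempty whenever there is something to check, and in the negative case $X^-$ contains $x$, so the minimum is over a nonempty set.

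The only substantive step is the positive side: it needs the existence of a separator vertex $v$ on a shortest $c$-$x$ path, together with the identity $\dist_G(c,x)=\dist_G(c,v)+\dist_G(v,x)$. The separation property guarantees that every $c$-$x$ path meets $S$, since $c\in A_1$ and $x\in A_2$ and there is no edge between $A_1\setminus A_2$ and $A_2\setminus A_1$. Once that is in place, the remainder is unwinding the definitions of $r^+_{\mathcal I}$ and $r^-_{\mathcal I}$.
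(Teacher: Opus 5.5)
Your proof is correct and matches the paper's argument step for step. For $x\in X^+$, a shortest $c$-$x$ path meets the separator $S$ at some $v$, which gives $\dist_G(v,x)\le r^+_{\mathcal I}(v)$, and the first inclusion then puts $x$ in $B_G(c',s)$. For $x\in X^-$, the second inclusion would contradict the minimality in the definition of $r^-_{\mathcal I}(v)$.
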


\begin{proof}
It suffices to prove that $X^+ \subseteq B_G(c', s)$ and $X^- \cap B_G(c', s) = \emptyset$.
We start with the former.
Let $x \in X^+$. Since $B_G(c, r)$ realizes the sample $(X^+, X^-)$, we have $x \in B_G(c,r)$, and so, $\dist_G(c, x) \leq r$.
Let $P$ be a shortest $c$-$x$ path in $G$.
Since $c \in A_1$ and $x \in X \subseteq A_2$, the path $P$ intersects $S$ in some vertex, say $v$.
Then, $\dist_G(c, v) + \dist_G(v, x) = \dist_G(c, x) \leq r$ so $\dist_G(v, x) \leq r-\dist_G(c, v)$.
Thus, $x \in X \cap B_G(v, r-\dist_G(c, v))$ so $r^+_{\mathcal{I}}(v) \geq \dist_G(v, x)$.
Therefore, $x \in A_2 \cap B_G(v, r^+_{\mathcal{I}}(v)) \subseteq B_G(c', s)$.

Next, we prove that $X^- \cap B_G(c', s) = \emptyset$.
By contradiction, suppose that there exists $x \in X^- \cap B_G(c', s)$.
Then, $x \in A_2 \cap B_G(c', s) \subseteq A_2 \cap \bigcup_{v \in S}B_G(v, r^-_{\mathcal{I}}(v) - 1)$,
so there exists $v \in S$ such that $x \in B_G(v, r^-_{\mathcal{I}}(v) - 1)$.
Since $x \in X^-$, this contradicts the definition of $r^-_{\mathcal{I}}(v)$.
\end{proof}

\begin{lemma}\label{lemma:initial-ball-realizes}
    Let $\mathcal{I}=(G,(X^+,X^-),c,r)$ be an instance,
    let $X = X^+ \cup X^-$,
    let $(A_1,A_2)$ be a separation of $G$ such that $c\in A_1$ and $X \subseteq A_2$, and let $S = A_1 \cap A_2$. 
    Then, 
    \[
        A_2 \cap \bigcup_{v \in S}B_G(v, r^+_{\mathcal{I}}(v)) \subseteq A_2  \cap  B_G(c, r) \subseteq 
        A_2 \cap \bigcup_{v \in S}B_G(v, r^-_{\mathcal{I}}(v) - 1).
    \]
\end{lemma}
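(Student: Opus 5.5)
We prove the two inclusions separately; both rest only on the definitions of $r^+_{\mathcal{I}}$ and $r^-_{\mathcal{I}}$ and on the fact that $S$ separates $c\in A_1$ from $A_2$.

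\emph{First inclusion.} Fix $v\in S$ and $u\in A_2\cap B_G(v,r^+_{\mathcal{I}}(v))$. We show $\dist_G(c,u)\le r$. If $r^+_{\mathcal{I}}(v)=-1$ the ball is empty and there is nothing to prove. Otherwise the defining maximum is attained at some $x\in X$ with $\dist_G(v,x)\le r-\dist_G(c,v)$, so $r^+_{\mathcal{I}}(v)\le r-\dist_G(c,v)$; this is the inequality noted right after the definitions. The triangle inequality then gives
\[
\dist_G(c,u)\le \dist_G(c,v)+\dist_G(v,u)\le \dist_G(c,v)+r^+_{\mathcal{I}}(v)\le r,
\]
so $u\in B_G(c,r)$. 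This direction does not use the separation at all.

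\emph{Second inclusion.} Let $u\in A_2\cap B_G(c,r)$ and let $P$ be a shortest $c$-$u$ path. Since $c\in A_1$, $u\in A_2$ and $(A_1,A_2)$ is a separation, $P$ meets $S$. (If $c\in S$ or $u\in S$, take that endpoint itself.) Let $v$ be such a vertex on $P$. Then $\dist_G(c,v)+\dist_G(v,u)=\dist_G(c,u)\le r$, so $\dist_G(v,u)\le r-\dist_G(c,v)$.

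It remains to check that $r-\dist_G(c,v)\le r^-_{\mathcal{I}}(v)-1$. If $X^-=\emptyset$, then $r^-_{\mathcal{I}}(v)=\infty$ and this is immediate. Otherwise take $x\in X^-$ attaining $r^-_{\mathcal{I}}(v)=\dist_G(v,x)$. Since $B_G(c,r)$ realizes the sample, $x\notin B_G(c,r)$, so
\[
r<\dist_G(c,x)\le \dist_G(c,v)+\dist_G(v,x)=\dist_G(c,v)+r^-_{\mathcal{I}}(v).
\]
Because all quantities are integers, this gives $r-\dist_G(c,v)\le r^-_{\mathcal{I}}(v)-1$. Hence $u\in B_G(v,r^-_{\mathcal{I}}(v)-1)$ with $v\in S$, and $u$ lies in the right-hand side.

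There is no real obstacle. The only points needing care are the conventions $\max\emptyset=-1$ and $\min\emptyset=\infty$, and the case where the shortest path meets $S$ only at an endpoint; both are routine.
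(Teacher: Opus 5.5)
Your proof is correct and follows the paper's argument: the first inclusion comes from $r^+_{\mathcal{I}}(v)\le r-\dist_G(c,v)$ plus the triangle inequality, and the second from routing a shortest $c$-$u$ path through some $v\in S$ and comparing against a vertex of $X^-$ nearest to $v$. The only difference is cosmetic: you prove $r-\dist_G(c,v)\le r^-_{\mathcal{I}}(v)-1$ directly, whereas the paper phrases the same comparison as a proof by contradiction.
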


\begin{proof}
    We first prove that $B_G(v, r^+_{\mathcal{I}}(v)) \subseteq B_G(c, r)$ for every $v \in S$, which implies that $A_2 \cap \bigcup_{v \in S}B_G(v, r^+_{\mathcal{I}}(v)) \subseteq A_2 \cap B_G(c, r)$.
    Let $v \in S$.
    If $r^+_{\mathcal{I}}(v) = -1$ then $B_G(v, r^+_{\mathcal{I}}(v)) = \emptyset \subseteq B_G(c, r)$. 
    Otherwise, by the definition of $r^+_{\mathcal{I}}(v)$, we have $r^+_{\mathcal{I}}(v) \leq r-\dist_G(c,v)$ so $B_G(v, r^+_{\mathcal{I}}(v)) \subseteq B_G(v, r-\dist_G(c,v)) \subseteq B_G(c, r)$.

    We then prove that $A_2 \cap B_G(c, r) \subseteq A_2 \cap \bigcup_{v \in S} B_G(v, r^-_{\mathcal{I}}(v) - 1) $.
    Let $u \in A_2 \cap B_G(c, r)$, and let $P$ be a shortest path from $c$ to $u$. 
    Since $(A_1, A_2)$ is a separation of $G$ and $c \in A_1$, we have that $P$ contains a vertex $v \in S$. 
    By contradiction, suppose that $u \notin B_G(v, r^-_{\mathcal{I}}(v) - 1)$.
    Then, $r^-_{\mathcal{I}}(v) \neq \infty$ so there exists a vertex $u' \in X^-$ such that $\dist_G(v, u') = r^-_{\mathcal{I}}(v) \leq \dist_G(v, u)$. 
    Then, $\dist_G(c, u') \leq \dist_G(c, v) + \dist_G(v, u') \leq \dist_G(c, v) + \dist_G(v, u) = \dist_G(c, u) \leq r$ (where the equality holds since $P$ is a shortest path from $c$ to $u$, and visits $v$).
    Therefore, we have $u' \in X^- \cap B_G(c, r)$, which contradicts that $B_G(c, r)$ realizes $(X^+, X^-)$.
\end{proof}

We proceed with a proof of~\cref{thm:tw} which we restate below.

\thmtw*

\begin{proof}
    Let $t$ be a positive integer and let $G$ be a graph of treewidth at most $t$.
    Let $H$ be the hypergraph of balls in $G$.
    We fix a tree decomposition $\big(T, (W_y \mid y \in V(T))\big)$ of $G$ of width $t$ where $T$ is a rooted binary tree.
    Thus, each bag and each adhesion is of size at most $t+1$. 
    Let $r$ be the root of $T$, and for every vertex $v \in V(G)$, we fix a node $h(v) \in V(T)$ such that $v \in W_{h(v)}$.
    Additionally, set $h(\emptyentry) = \emptyentry$.
    We also fix a map $\varphi\colon (V(T) \cup \{\emptyentry\})^3 \to S(T)$ which satisfies the condition of \cref{lemma:encode-subtree},
    and an arbitrary total order $\prec$ on $V(G)$. 

    \header{Compression.}
            Let $(X^+,X^-)$ be a sample of $H$.
            Let $X = X^+ \cup X^-$ and 
            fix $c\in V(G)$ and an integer $r$ such that $B_G(c, r) \cap X = X^+$. 
            Let $\mathcal{I} = (G,(X^+,X^-),c,r)$.
            Consider the set $Z = \{h(x) \mid x \in X\}$. 
            If $h(c) \in \LCA(T, Z)$ then set $C = T[\{h(c)\}]$, and otherwise let $C$ be the component of $T - \LCA(T, Z)$ which contains $h(c)$.
            In both cases, there exist $z_1, z_2, z_3 \in Z \cup \{\emptyentry\}$ such that $\varphi(z_1, z_2, z_3) = C$.
            For each $i \in [3]$, pick an arbitrary $w_i \in X \cup \{\emptyentry\}$ such that $z_i = h(w_i)$.
            Thus, if $h(c) \in \LCA(T, Z)$ then $\varphi(h(w_1), h(w_2), h(w_3)) = T[\{h(c)\}]$, and otherwise $\varphi(h(w_1), h(w_2), h(w_3))$ is the component of $T - \LCA(T, Z)$ which contains $h(c)$. Let
            \[
            (A_1,A_2) = \begin{cases}
            (W_y,V(G))&\textrm{if $V(C)=\set{y}$ for some $y\in V(T)$,}\\
            \left(\bigcup_{y \in V(C)}W_y, \bigcup_{y \notin V(C)}W_y\right)&\textrm{otherwise.}
            \end{cases}
            \]
            Let $S = A_1 \cap A_2$.
            If $C$ consists of a single node $y$, then $S = W_y$ so $|S| \leq t+1$.
            Otherwise, $C$ is a component of $T - \LCA(T, Z)$ so by \cref{lemma:increase_Z_in_a_tree}, we have $|N_T(V(C))| \leq 2$, and thus $S$ is the union of at most two adhesion sets, so $|S| \leq 2(t+1)$. 
            Let $\ell = |S| \leq 2t+2$ and let $v_1, \ldots, v_{\ell}$ be the vertices of $S$, ordered according to $\prec$.
            For each $i \in [\ell]$, if $r^+_{\mathcal{I}}(v_i) \neq -1$, let $x_i^+$ be a vertex in $X$ such that $\dist_G(v_i, x_i^+) = r^+_{\mathcal{I}}(v_i)$, and let $x_i^+ = \emptyentry$ otherwise.
            Similarly, if $r^-_{\mathcal{I}}(v_i) \neq \infty$, let $x_i^-$ be a vertex in $X$ such that $\dist_G(v_i, x_i^-) = r^-_{\mathcal{I}}(v_i)$, and let $x_i^- = \emptyentry$ otherwise.
            For each $i \in [2t+2] \setminus [\ell]$, let $x_i^+ = x_i^- = \emptyentry$.
            Finally, set
            \[\kappa(X^+, X^-) = (w_1, w_2, w_3, x_1^+, x_1^-, \ldots, x_{2t+2}^+, x_{2t+2}^-).\]

    \header{Reconstruction.}
            Consider an input vector for the reconstructor $\rho$:
            \[(w_1, w_2, w_3, x_1^+, x_1^-, \ldots, x_{2t+2}^+, x_{2t+2}^-) \in (V(G)\cup\set{\emptyentry})^{4t+7}.
            \]
            Let $C$ be the subtree of $T$ output by 
            $\varphi(h(w_1), h(w_2), h(w_3))$. Let
            \[
            (A_1,A_2) = \begin{cases}
            (W_y,V(G))&\textrm{if $V(C)=\set{y}$ for some $y\in V(T)$,}\\
            \left(\bigcup_{y \in V(C)}W_y, \bigcup_{y \notin V(C)}W_y\right)&\textrm{otherwise.}
            \end{cases}
            \]
            Let $S = A_1 \cap A_2$.
            If $S$ contains more than $2t+2$ elements, then it does not mattter what is the output of $\rho$, so $\rho$ returns the empty ball. 
            Suppose now that $|S| = \ell \leq 2t+2$, and let $v_1, \ldots, v_{\ell}$ be the vertices of $S$, ordered according to $\prec$.
            For each $i \in [\ell]$, if $x_i^+ \neq \emptyentry$, let $r_i^+ = \dist_G(v_i, x_i^+)$, and otherwise, let $r_i^+ = -1$. Similarly, if $x_i^- \neq \emptyentry$, let $r_i^- = \dist_G(v_i, x_i^-)$, and otherwise, let $r_i^- = \infty$.
            The reconstructor $\rho$ verifies whether there exists a ball $B_G(c', s)$ with $c' \in A_1$ such that
            \[
                A_2 \cap \bigcup_{i \in [\ell]} B_G(v_i, r_i^+) 
                \subseteq A_2 \cap B_G(c', s) \subseteq
                A_2 \cap \bigcup_{i \in [\ell]} B_G(v_i, r_i^- - 1).
            \]
            If yes, then $\rho$ returns such a ball, and otherwise it does not matter what is the output of $\rho$, so $\rho$ returns the empty ball. 

    \header{Proof of correctness.}
    We now prove that $(\kappa,\rho)$ 
    is a proper array sample compression scheme of $H$. 
    By definition, $\rho$ only returns balls in $G$.
    Clearly, for each sample $(X^+, X^-)$ of $H$, 
    we get $\kappa(X^+,X^-)$ which is an element of $(X\cup\set{\emptyentry})^{4t+7}$.
    We need to argue that $\rho(\kappa(X^+, X^-))$ realizes $(X^+, X^-)$.

    Let $(X^+,X^-)$ be a sample of $H$ and let
    \[\kappa(X^+, X^-) = (w_1, w_2, w_3, x_1^+, x_1^-, \ldots, x_{2t+2}^+, x_{2t+2}^-).\]
    Let $Z = \{h(x) \mid x \in X\} \subseteq V(T)$ and $C = \varphi(h(w_1), h(w_2), h(w_3))$. 
    Let $c\in V(G)$ and an integer $r$ be as they were fixed in the definition of $\kappa(X^+, X^-)$. 
    Thus, $B_G(c, r)$ realizes $(X^+, X^-)$.
    Observe that if $h(c) \in \LCA(T, Z)$, then $C = T[\{h(c)\}]$, and otherwise $C$ is the component of $T - \LCA(T, Z)$ which contains $h(c)$.

    By construction, the compressor and the reconstructor consider the same separation $(A_1, A_2)$ and the same set $S=A_1\cap A_2$.
    Then, $|S| \leq 2t+2$ (as we already argued in the definition of the compressor), and both the compressor and the reconstructor consider the vertices $v_1, \ldots, v_{\ell}$ of $S$, for $\ell=|S|$, ordered according to $\prec$.
    
    Let $\mathcal{I}=(G,(X^+,X^-),c,r)$.
    It follows immediately from construction that for every $i \in [\ell]$, the values $r_i^+$ and $r_i^-$ computed by the reconstructor satisfy $r_i^+ = r^+_{\mathcal{I}}(v_i)$ and $r_i^- = r^-_{\mathcal{I}}(v_i)$.

    We now claim that the separation $(A_1, A_2)$ satisfies $c \in A_1$ and $X \subseteq A_2$. 
    Indeed, if $C$ consists of a single node $y$ then $(A_1, A_2) = (W_y, V(G))$, so $y = h(c)$ and $c \in W_y = A_1$, and trivially $X \subseteq V(G) = A_2$.
    Otherwise, $(A_1, A_2) = \left(\bigcup_{y \in V(C)}W_y, \bigcup_{y \notin V(C)}W_y\right)$.
    In that case, $C$ is the component of $T - \LCA(T, Z)$ which contains $h(c)$.
    Then, $c \in W_{h(c)} \subseteq \bigcup_{y \in V(C)}W_y = A_1$.
    Furthermore, $C$ is disjoint from $\LCA(T, Z)$, therefore $Z \subseteq V(T) - V(C)$.
    Thus, for every $x \in X$, we have $x \in W_{h(x)} \subseteq \bigcup_{y \in Z}W_y \subseteq \bigcup_{y \notin V(C)}W_y = A_2$.
    This proves that $c \in A_1$ and $X \subseteq A_2$.

    By \cref{lemma:initial-ball-realizes}, it follows that
    \[
       A_2 \cap \bigcup_{i \in [\ell]} B_G(v_i, r_i^+)  
        \subseteq A_2 \cap B_G(c, r) \subseteq 
        A_2 \cap \bigcup_{i \in [\ell]} B_G(v_i, r_i^- - 1).
    \]
    Thus, the reconstructor returns a ball $B_G(c', s)$ such that $c' \in A_1$ and 
    \begin{equation}\label{eq:to-satisfy}\tag{$\star$}
        A_2 \cap \bigcup_{i \in [\ell]} B_G(v_i, r_i^+) 
        \subseteq A_2 \cap B_G(c', s) \subseteq 
        A_2 \cap \bigcup_{i \in [\ell]} B_G(v_i, r_i^- - 1).
    \end{equation}
    By \cref{lemma:good-balls-realize}, we deduce that $B_G(c',s) = \rho(\kappa(X^+,X^-))$ realizes $(X^+, X^-)$.
\end{proof}

Let us remark that the proof of~\Cref{thm:tw} gives something slightly stronger.
In the constructed sample compression scheme, the compressor $\kappa$ fixes a ball $B_G(c,r)$, which is consistent with a given sample $(X^+,X^-)$.
The reconstructor returns any ball $B_G(c', s)$ that satisfies \cref{eq:to-satisfy}, and we argued that $B_G(c, r)$ satisfies \cref{eq:to-satisfy}.
Thus, if the reconstructor returns a ball $B_G(c', s)$ which satisfies \cref{eq:to-satisfy} and with $s$ minimal, then $s \leq r$.
We deduce the following statement.

\begin{corollary}\label{cor:tw}
    For all positive integers $t$ and $r$ and every graph $G$ of treewidth at most $t$, 
    the hypergraph of balls of radius at most $r$ in $G$ has a proper array sample compression scheme of size $4t+7$.
\end{corollary}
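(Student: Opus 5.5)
We reuse the compressor $\kappa$ from the proof of \cref{thm:tw} unchanged and modify only the reconstructor. Let $H_r$ be the hypergraph of balls of radius at most $r$ in $G$. Every sample $(X^+,X^-)$ of $H_r$ is also a sample of the hypergraph of all balls, so $\kappa(X^+,X^-)$ is defined. For the compressor we fix $c$ and $r_0\le r$ with $B_G(c,r_0)$ realizing $(X^+,X^-)$, which is possible by the definition of $H_r$. The compressed array then has length $4t+7$ and entries in $X\cup\{\emptyentry\}$, as before.

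Given an input array, the new reconstructor computes $C$, $(A_1,A_2)$, $S=\{v_1,\dots,v_\ell\}$ and the values $r_i^+,r_i^-$ exactly as in the proof of \cref{thm:tw}. Among all pairs $(c',s)$ with $c'\in A_1$ and $s$ an integer satisfying \eqref{eq:to-satisfy}, it returns $B_G(c',s)$ for a pair with $s$ minimal. We may restrict to $s\ge 0$: any ball of negative radius is empty and equals $B_G(c',0)\setminus\{c'\}$, which we can avoid by noting that the empty set is not a ball of radius $\ge 0$. So if only negative $s$ works, we return $B_G(c',-1)=\emptyset$, which is a ball of radius $-1\le r$. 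Alternatively, the radius set includes all integers at most $r$, so negative radii are allowed anyway. If no pair exists, the reconstructor returns the empty ball. Only finitely many sets $B_G(c',s)$ occur, since radii beyond $|V(G)|$ give the same set, so the minimum is attained after capping $s$ at $|V(G)|$.

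For correctness, the argument of \cref{thm:tw} applied to the instance $\mathcal{I}=(G,(X^+,X^-),c,r_0)$ shows three things. The compressor and reconstructor agree on $(A_1,A_2)$. We have $c\in A_1$ and $X\subseteq A_2$. Finally, $r_i^{\pm}=r^{\pm}_{\mathcal{I}}(v_i)$. By \cref{lemma:initial-ball-realizes}, the pair $(c,r_0)$ satisfies \eqref{eq:to-satisfy}, so a valid pair exists and the minimal $s$ satisfies $s\le r_0\le r$. Hence the output is a ball of radius at most $r$, i.e.\ a hyperedge of $H_r$, and by \cref{lemma:good-balls-realize} it realizes $(X^+,X^-)$.

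The only delicate point is making sure the returned ball lies in $H_r$, which is exactly what choosing $s$ minimal guarantees. It also matters that minimality is taken over a nonempty set, which \cref{lemma:initial-ball-realizes} provides. The reconstructor does not need to know $r_0$, only $r$, and $r$ is fixed with the hypergraph, so this is legitimate.
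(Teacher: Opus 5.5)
Your proposal is correct and matches the paper's argument: you keep the compressor of \cref{thm:tw} (choosing the realizing ball with radius $r_0\le r$, which the paper leaves implicit), and you let the reconstructor return a ball satisfying \eqref{eq:to-satisfy} of minimal radius, which is at most $r_0$ because $B_G(c,r_0)$ satisfies \eqref{eq:to-satisfy} by \cref{lemma:initial-ball-realizes}. Your paragraph on negative radii is more convoluted than it needs to be: minimizing over $s\in\{-1,0,\dots,|V(G)|\}$ makes the minimum well defined and gives $s\le\max(r_0,-1)\le r$ directly.
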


\section{Bounded cliquewidth}\label{sec:cliquewidth}
In this section, we prove~\Cref{thm:cw}, i.e.\ for every positive integer $t$ and for every graph $G$ of cliquewidth at most $t$, the hypergraph of balls in $G$ has a proper array sample compression scheme of size $4t+3$.
The proof follows a similar approach as the proof of the analogous result for treewidth (\Cref{thm:tw}), although the details are a bit more technical.

Let $G$ be a graph and let $\set{C, D}$ be a partition\footnote{A \defin{partition} of a set $X$ is a collection of pairwise disjoint subsets of $X$, whose union is $X$. We allow parts to be empty. An \defin{ordered partition} of a set $X$ is a finite sequence of pairwise disjoint subsets of $X$, whose union is $X$.} of $V(G)$.
A \defin{$D$-twin partition} of $C$ is a partition $(V_1, \ldots, V_k)$ of $C$ such that for every $i \in [k]$, for all $u,v \in V_i$, we have $N_G(u) \cap D = N_G(v) \cap D$. 

\begin{lemma}\label{lemma:separator_with_few_types}
    Let $G$ be a graph,
    let $(C,D)$ be an ordered partition of $V(G)$,
    let $c_0 \in C$ and $d_0 \in D$,
    and let $(V_1, \dots, V_k)$ be a $D$-twin partition of $C$.
    Then,
    \[
        \dist_G(c_0,d_0) = \min_{i \in [k]} \left(\dist_{G}(c_0,V_i) + \dist_{G[V_i \cup D]}(V_i, d_0)\right).
    \]
\end{lemma}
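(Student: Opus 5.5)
We prove the two inequalities separately. Throughout, distances may be $\infty$ (with the conventions $\min\emptyset=\infty$ and $\infty+a=\infty$), and empty parts $V_i$ contribute $\infty$ to the minimum, so they can be ignored.

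\emph{Upper bound ($\le$).} Fix $i\in[k]$ with both terms finite. Take a shortest path in $G$ from $c_0$ to some vertex $a\in V_i$, of length $\dist_G(c_0,V_i)$. Take a shortest path in $G[V_i\cup D]$ from some vertex $b\in V_i$ to $d_0$, of length $\dist_{G[V_i\cup D]}(V_i,d_0)$. Since $d_0\in D$ and $b\notin D$, this second path has a first edge $bb'$. We may assume $b'\in D$: otherwise $b'\in V_i$ and we can start the path at $b'$ instead, which is shorter.

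Because $a$ and $b$ are $D$-twins, $b'\in N_G(a)$. Hence $a$ followed by the rest of the second path from $b'$ is a walk in $G$ from $a$ to $d_0$ of length at most $\dist_{G[V_i\cup D]}(V_i,d_0)$. Concatenating with the first path gives a $c_0$--$d_0$ walk in $G$ of length at most the $i$-th term. Taking the minimum over $i$ gives $\dist_G(c_0,d_0)\le\min_i(\cdots)$.

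\emph{Lower bound ($\ge$).} If $\dist_G(c_0,d_0)=\infty$ there is nothing to show, so let $P=p_0p_1\cdots p_m$ be a shortest $c_0$--$d_0$ path in $G$. Let $j$ be the \emph{last} index with $p_j\in C$; it exists because $p_0=c_0\in C$, and $j<m$ because $p_m=d_0\in D$. Then $p_{j+1},\dots,p_m$ all lie in $D$. Let $i$ be such that $p_j\in V_i$. The prefix $p_0\cdots p_j$ gives $\dist_G(c_0,V_i)\le j$. The suffix $p_j p_{j+1}\cdots p_m$ is a path in $G[V_i\cup D]$, so $\dist_{G[V_i\cup D]}(V_i,d_0)\le m-j$. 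Summing gives the $i$-th term at most $m=\dist_G(c_0,d_0)$.

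\emph{Main point.} The only delicate step is the upper bound: the two optimal paths may end and start at different vertices $a\ne b$ of $V_i$. We bridge them using the twin property, which requires the first step out of $b$ to enter $D$. Choosing the last $C$-vertex in the lower bound is what makes the suffix lie inside $G[V_i\cup D]$.
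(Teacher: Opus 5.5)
Your proof is correct and follows the paper's argument essentially step for step. For $\dist_G(c_0,d_0)\ge\min_i(\cdots)$ you take the last $C$-vertex of a shortest $c_0$--$d_0$ path. For the reverse inequality you splice the two optimal paths, replacing the $V_i$-endpoint of the second path by the endpoint of the first, which uses that its neighbour on the second path lies in $D$ together with the twin property. The only difference is cosmetic: you justify $b'\in D$ by minimality of the path, whereas the paper gets it from its convention that a $V_i$-$d_0$ path has no internal vertices in $V_i$.
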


\begin{proof}
    Consider a shortest $c_0$-$d_0$ path $P$ in $G$.
    Then, $P$ intersects $C = \bigcup_{i \in [k]} V_i$.
    Let $w$ be the vertex of $P$ in $\bigcup_{i \in [k]} V_i$ that is the closest to $d_0$ in $P$.
    Let $i \in [k]$ be such that $w \in V_i$.
    Let $P_1$ be the $c_0$-$w$ subpath of $P$ and let $P_2$ be the $w$-$d_0$ subpath of $P$.
    We have,
    \[
        \dist_{G}(c_0,V_i) + \dist_{G[V_i \cup D]}(V_i, d_0) 
        \leq \length(P_1) + \length(P_2) 
        = \length(P) 
        = \dist_G(c_0,d_0).
    \]
    In particular, $\dist_G(c_0,d_0) \geq \min_{i \in [k]} \left(\dist_{G}(c_0,V_i) + \dist_{G[V_i \cup D]}(V_i, d_0)\right)$.

    For the other inequality,
    consider some $i \in [k]$.
    Let $P_1$ be a shortest $c_0$-$V_i$ path in $G$ and let
    $P_2$ be a shortest $V_i$-$d_0$ path in $G[V_i \cup D]$.
    Let $u$ be the vertex of $P_1$ in $V_i$, let $v$ be the vertex of $P_2$ in $V_i$, and let $w$ be the neighbor of $v$ in $P_2$.
    Since $u, v \in V_i$, we have $N_G(u) \cap D = N_G(v) \cap D$.
    In particular, $w \in N_G(u)$, and we may define the path $P_2'$ to be obtained from $P_2$ by replacing the endpoint $v$ by $u$.
    Note that $P_1 \cup P_2'$ is a $c_0$-$d_0$ path in $G$.
    Finally, we obtain
    \begin{align*}
        \dist_G(c_0,d_0) \leq \length(P_1 \cup P_2') &= \length(P_1) + \length(P_2')\\ 
        &= \length(P_1) + \length(P_2) = \dist_{G}(c_0,V_i) + \dist_{G[V_i \cup D]}(V_i, d_0).
    \end{align*}
    Since the above inequality holds for every $i \in [k]$, the proof is completed.
\end{proof}

%\piotr{I am here.}

Let $\mathcal{I} = (G,(X^+,X^-),c,r)$ be an instance and let $X = X^+ \cup X^-$.
Let $(C, D)$ be an ordered partition of $V(G)$ such that $c \in C$. 
Let $(U_1, \ldots, U_k)$ be a $C$-twin partition of $D$ and let $(V_1, \ldots, V_{\ell})$ be a $D$-twin partition of $C$.
Recall that $\min \emptyset = \infty$ and $\max \emptyset = -1$.
For all $i \in [k]$ and $j \in [\ell]$, we define
\begin{align*}
        \defin{\text{$r^+_{(\mathcal{I}, (C, D))}(U_i)$}} &= \max\set{\dist_G(U_i, x) \mid x\in X^+\cap B_G(U_i,r-\dist_{G[C \cup U_i]}(c,U_i))},\\
        \defin{\text{$r^-_{(\mathcal{I}, (C, D))}(U_i)$}} &= \min\set{\dist_G(U_i,x)\mid x\in X^-},\\
        \defin{\text{$r^+_{(\mathcal{I}, (C, D))}(V_j)$}} &= \max\set{\dist_{G[D \cup V_j]}(V_j,x)\mid x\in X^+\cap B_{G[D \cup V_j]}(V_j,r-\dist_G(c,V_j))}, \\
        \defin{\text{$r^-_{(\mathcal{I}, (C, D))}(V_j)$}} &= \min\set{\dist_{G[D \cup V_j]}(V_j,x)\mid x\in X^- \cap (D \cup V_j)}.
\end{align*}
% \begin{align*}
%         r^+_{(\mathcal{I}, (C, D))}(U_i) &= 
%                 \begin{cases}
%                     \max\set{\dist_G(U_i, x) \mid x\in X^+\cap B_G(U_i,r-\dist_{G[C \cup U_i]}(c,U_i))}\hspace{-100mm} \\
%                     \qquad &\textrm{if $X^+\cap  B_G(U_i,r-\dist_{G[C \cup U_i]}(c,U_i)) \neq \emptyset$,} \\[5pt]
%                     -1 & \textrm{otherwise,}
%                 \end{cases} \\
% \intertext{and}
%         r^-_{(\mathcal{I}, (C, D))}(U_i) &= 
%                 \begin{cases}
%                     \min\set{\dist_G(U_i,x)\mid x\in X^-} & \textrm{if $X^- \neq \emptyset$,} \\
%                     \infty & \textrm{otherwise.}
%                 \end{cases} \\[\bigskipamount]
% %\end{align*}
% \intertext{
% For every $j \in [\ell]$, we define
% }
% %\begin{align*}
%         r^+_{(\mathcal{I}, (C, D))}(V_j) &= 
%                 \begin{cases}
%                     \max\set{\dist_{G[D \cup V_j]}(V_j,x)\mid x\in X^+\cap B_{G[D \cup V_j]}(V_j,r-\dist_G(c,V_j))} \hspace{-100mm}\\
%                     \qquad &\textrm{if $X^+\cap B_{G[D \cup V_j]}(V_j,r-\dist_G(c,V_j)) \neq \emptyset$,} \\[5pt]
%                     -1 & \textrm{otherwise,}
%                 \end{cases} \\
% \intertext{and}
%         r^-_{(\mathcal{I}, (C, D))}(V_j) &= 
%                 \begin{cases}
%                     \min\set{\dist_{G[D \cup V_j]}(V_j,x)\mid x\in X^- \cap (D \cup V_j)} & \textrm{if $X^- \cap (D \cup V_j) \neq \emptyset$,} \\
%                     \infty & \textrm{otherwise.}
%                 \end{cases}
% \end{align*}

The next two statements: \Cref{lemma:good-balls-realize-again,lemma:initial-ball-realizes-again} are counterparts of \Cref{lemma:good-balls-realize,lemma:initial-ball-realizes} from the treewidth proof, respectively.
Namely, in~\cref{lemma:good-balls-realize-again}, we give a sufficient condition for balls to realize a sample and in~\cref{lemma:initial-ball-realizes-again}, we show that the initial ball of a sample compression scheme that realizes a given sample satisfies this condition.

\begin{lemma}\label{lemma:good-balls-realize-again}
    Let $\mathcal{I} = (G,(X^+,X^-),c,r)$ be an instance, let $X = X^+ \cup X^-$.
    Let $(C_1, D_1)$ and $(C_2, D_2)$ be ordered partitions of $V(G)$ such that $c \in C_1 \cap C_2$, and $X \subseteq D_1 \cup D_2$.
    Let $(U_1, \ldots, U_k)$ be a $C_1$-twin partition of $D_1$, and let $(V_1, \ldots, V_\ell)$ be a $D_2$-twin partition of $C_2$.
    % For every $i \in [k]$, let $r^+(U_i) = r^+_{(\mathcal{I}, (C_1, D_1))}(U_i)$ and $r^-(U_i) = r^-_{(\mathcal{I}, (C_1, D_1))}(U_i)$.
    % For every $j \in [\ell]$, let $r^+(V_j) = r^+_{(\mathcal{I}, (C_2, D_2))}(V_j)$ and $r^-(V_j) = r^-_{(\mathcal{I}, (C_2, D_2))}(V_j)$.
    Let $c' \in C_1 \cap C_2$ and let $s$ be an integer such that 
    \begin{align*}
        D_1 \cap \bigcup_{i \in [k]} B_G(U_i, r^+_{(\mathcal{I}, (C_1, D_1))}(U_i)) 
        &\subseteq %\\
        %&\hspace{-1.65cm} 
        D_1 \cap B_G(c', s)\\ 
        &\subseteq 
        D_1 \cap \bigcup_{i \in [k]} B_G(U_i, r^-_{(\mathcal{I}, (C_1, D_1))}(U_i) - 1), \\ 
    \intertext{and}
        D_2 \cap \bigcup_{j \in [\ell]} B_{G[D_2 \cup V_j]}(V_j, r^+_{(\mathcal{I}, (C_2, D_2))}(V_j)) 
        &\subseteq%\\
        %&\hspace{-1.65cm} 
        D_2 \cap B_{G}(c', s)\\ 
        &\subseteq D_2 \cap \bigcup_{j \in [\ell]} B_{G[D_2 \cup V_j]}(V_j, r^-_{(\mathcal{I}, (C_2, D_2))}(V_j) - 1). 
    \end{align*}
    Then, $B_G(c', s)$ realizes the sample $(X^+, X^-)$.
\end{lemma}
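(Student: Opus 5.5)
The plan is to follow the proof of \cref{lemma:good-balls-realize}: show $X^+ \subseteq B_G(c',s)$ and $X^- \cap B_G(c',s) = \emptyset$. Since $X \subseteq D_1 \cup D_2$, each $x \in X$ lies in $D_1$ or in $D_2$. I handle the two cases separately, using the first chain of inclusions for $D_1$ and the second for $D_2$. The only real work is choosing the right vertex on a shortest path, so that the induced-subgraph distances in the definitions of $r^\pm_{(\mathcal{I},(C,D))}$ are controlled.

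\emph{Positive elements.} Let $x \in X^+$. Since $B_G(c,r)$ realizes the sample, $\dist_G(c,x)\le r$. Fix a shortest $c$-$x$ path $P$.

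If $x \in D_1$, let $u$ be the \emph{first} vertex of $P$ in $D_1$; it exists since $c\in C_1$ and $x \in D_1$. Let $i$ be such that $u \in U_i$. The $c$-$u$ subpath has all vertices in $C_1 \cup \{u\} \subseteq C_1 \cup U_i$, so its length is at least $\dist_{G[C_1\cup U_i]}(c,U_i)$. The $u$-$x$ subpath has length at least $\dist_G(U_i,x)$. Hence
\[
\dist_G(U_i,x) \le r - \dist_{G[C_1\cup U_i]}(c,U_i),
\]
so $x \in X^+ \cap B_G(U_i, r-\dist_{G[C_1\cup U_i]}(c,U_i))$. It follows that $r^+_{(\mathcal{I},(C_1,D_1))}(U_i)\ge \dist_G(U_i,x)$, so $x \in D_1\cap B_G(U_i,r^+_{(\mathcal{I},(C_1,D_1))}(U_i)) \subseteq B_G(c',s)$.

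If $x \in D_2$, take instead the \emph{last} vertex $w$ of $P$ in $C_2$, as in the proof of \cref{lemma:separator_with_few_types}; it exists since $c \in C_2$. Let $j$ be such that $w\in V_j$. Now the $c$-$w$ subpath has length at least $\dist_G(c,V_j)$. The $w$-$x$ subpath lies in $D_2\cup\{w\}\subseteq D_2 \cup V_j$, so its length is at least $\dist_{G[D_2\cup V_j]}(V_j,x)$. Thus $x \in X^+\cap B_{G[D_2\cup V_j]}(V_j, r-\dist_G(c,V_j))$, and the same reasoning gives
\[
x \in D_2\cap B_{G[D_2\cup V_j]}(V_j, r^+_{(\mathcal{I},(C_2,D_2))}(V_j)) \subseteq B_G(c',s).
\]

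\emph{Negative elements.} Suppose for contradiction that $x\in X^-\cap B_G(c',s)$.

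If $x\in D_1$, the right inclusion of the first chain gives $i$ with $\dist_G(U_i,x)\le r^-_{(\mathcal{I},(C_1,D_1))}(U_i)-1$. This contradicts $r^-_{(\mathcal{I},(C_1,D_1))}(U_i)\le \dist_G(U_i,x)$, which holds because $x\in X^-$.

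If $x\in D_2$, the second chain gives $j$ with $\dist_{G[D_2\cup V_j]}(V_j,x) \le r^-_{(\mathcal{I},(C_2,D_2))}(V_j)-1$. But $x \in X^-\cap(D_2\cup V_j)$, so by definition $r^-_{(\mathcal{I},(C_2,D_2))}(V_j)\le \dist_{G[D_2\cup V_j]}(V_j,x)$, again a contradiction.

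The only delicate point is the positive case. One must pick the first entry into $D_1$ in the $U_i$-case but the last exit from $C_2$ in the $V_j$-case. This is exactly what makes the induced-subgraph distances in the definitions of $r^+$ bounded by the corresponding subpath lengths. Everything else is routine.
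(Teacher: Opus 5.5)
Your proof is correct and follows the paper's argument: the same case split $x\in D_1$ versus $x\in D_2$, and the same contradiction for elements of $X^-$. The only difference is that the paper obtains the index $i$ (resp.\ $j$) by citing \cref{lemma:separator_with_few_types} for the ordered partition $(D_1,C_1)$ (resp.\ $(C_2,D_2)$), whereas you take directly the vertex that the proof of that lemma picks, which shows that only its easy inequality, and not the twin property, is needed here.
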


%\clement{A little clash of notation: it might be possible that $U_i = V_{i'}$ for some $i,i'$. However, $r^+(U_i)$ is a priori different from $r^+(V_{i'})$.}

\begin{proof}
It suffices to prove that $X^+ \subseteq B_G(c', s)$ and $X^- \cap B_G(c', s) = \emptyset$.
We start with the former.
Let $x \in X^+$. 
Since $B_G(c, r)$ realizes the sample $(X^+, X^-)$, we have $x \in B_G(c,r)$, and so, $\dist_G(c, x) \leq r$.
Since $x \in X \subseteq D_1 \cup D_2$, either $x \in D_1$ or $x \in D_2$.
Suppose first that $x \in D_1$.
By \cref{lemma:separator_with_few_types} applied to the ordered partition $(D_1, C_1)$ with $c_0=x$ and $d_0= c$ and the family $(U_1, \ldots, U_k)$, we have 
\[\dist_G(x, c) = \min_{i \in [k]}\left(\dist_G(x, U_i) + \dist_{G[C_1 \cup U_i]}(U_i, c)\right).\]
Thus, there exists $i \in [k]$ such that $\dist_G(x, U_i) + \dist_{G[C_1 \cup U_i]}(U_i, c) = \dist_G(x, c) \leq r$.
Then, $\dist_G(U_i, x) \leq r - \dist_{G[C_1 \cup U_i]}(U_i, c)$ so $x \in X^+ \cap B_G(U_i, r - \dist_{G[C_1 \cup U_i]}(U_i, c))$. 
Thus, by the definition of $r^+_{(\mathcal{I}, (C_1, D_1))}(U_i)$,
we have $\dist_G(U_i, x) \leq r^+_{(\mathcal{I}, (C_1, D_1))}(U_i)$ and so 
\[x \in D_1 \cap  B_G(U_i, r^+_{(\mathcal{I},(C_1,D_1))}(U_i)) \subset B_G(c',s).\]
% Therefore, 
% \[
%     x \in D_1 \cap  \bigcup_{i \in [k]}B_G(U_i, r^+_{(\mathcal{I},(C_1,D_1))}(U_i)) \subseteq B_G(c', s).
% \]
Now suppose that $x \in D_2$.
By \cref{lemma:separator_with_few_types} applied to the ordered partition $(C_2, D_2)$ with $c_0=c$ and $d_0 = x$ and the family $(V_1, \ldots, V_\ell)$, we have 
\[\dist_G(c, x) = \min_{j \in [\ell]}\left(\dist_G(c, V_j) + \dist_{G[D_2 \cup V_j]}(V_j, x)\right).\]

Thus, there exists $j \in [\ell]$ such that $\dist_G(c, V_j) + \dist_{G[D_2 \cup V_j]}(V_j, x) = \dist_G(c, x) \leq r$.
Then, $\dist_{G[D_2 \cup V_j]}(V_j, x) \leq r - \dist_G(c, V_j)$ so $x \in X^+ \cap B_{G[D_2 \cup V_j]}(V_j, r - \dist_G(V_j, c))$. Thus, by the definition of $r^+_{(\mathcal{I}, (C_2, D_2))}(V_j)$, we have $\dist_{G[D_2 \cup V_j]}(V_j, x) \leq r^+_{(\mathcal{I}, (C_2, D_2))}(V_j)$ and so
\[
    x \in D_2 \cap B_{G[D_2 \cup V_j]}(V_j, r^+_{(\mathcal{I}, (C_2, D_2))}(V_j)) \subseteq B_{G}(c', s).
\]
In both cases, we indeed have $x \in B_G(c', s)$, which concludes the proof that $X^+ \subset B_G(c',s)$.

Next, we prove that $X^- \cap B_G(c', s) = \emptyset$.
By contradiction, suppose that there exists $x \in X^- \cap B_G(c', s)$.
Since $x \in X \subseteq D_1 \cup D_2$, either $x \in D_1$ or $x \in D_2$.
Suppose first that $x \in D_1$.
Then, 
\[
    x \in D_1 \cap B_G(c', s)  \subseteq \bigcup_{i \in [k]}B_G(U_i, r^-_{(\mathcal{I}, (C_1, D_1))}(U_i)-1).
\]
Thus, there exists $i \in [k]$ such that $\dist_G(U_i, x) < r^-_{(\mathcal{I}, (C_1, D_1))}(U_i)$, which contradicts the definition of $r^-_{(\mathcal{I}, (C_1, D_1))}(U_i)$ as $x \in X^-$.
Finally, suppose that $x \in D_2$.
Then, 
\[
    x \in D_2 \cap B_G(c', s) \subseteq \bigcup_{j \in [\ell]}B_{G[D_2 \cup V_j]}(V_j, r^-_{(\mathcal{I}, (C_2, D_2))}(V_j) - 1).
\]
Thus, there exists $j \in [\ell]$ such that $\dist_{G[D_2 \cup V_j]}(V_j, x) < r^-_{(\mathcal{I}, (C_2, D_2))}(V_j)$, which contradicts the definition of $r^-_{(\mathcal{I}, (C_2, D_2))}(V_j)$ as $x \in X^- \cap D_2$.
Since in both cases, we obtained a contradiction, the proof is completed.
\end{proof}

% We now show that the sufficient condition in \Cref{lemma:good-balls-realize-again} on balls $B(c',s)$ to ensure that $B(c',s)$ realizes $(X^+,X^-)$ is actually satisfied by the original ball $B(c,r)$ realizing $(X^+,X^-)$.

\begin{lemma}\label{lemma:initial-ball-realizes-again}
    Let $\mathcal{I} = (G,(X^+,X^-),c,r)$ be an instance, let $X = X^+ \cup X^-$.
    Let $(C_1, D_1)$ and $(C_2, D_2)$ be ordered partitions of $V(G)$ such that $c \in C_1 \cap C_2$, and $X \subseteq D_1 \cup D_2$.
    Let $(U_1, \ldots, U_k)$ be a $C_1$-twin partition of $D_1$, and let $(V_1, \ldots, V_\ell)$ be a $D_2$-twin partition of $C_2$.
    Then,
    \begin{align*}
        D_1 \cap \bigcup_{i \in [k]} B_G(U_i, r^+_{(\mathcal{I}, (C_1, D_1))}(U_i)) 
        &\subseteq D_1 \cap B_G(c, r)\\ 
        &\subseteq D_1 \cap \bigcup_{i \in [k]} B_G(U_i, r^-_{(\mathcal{I}, (C_1, D_1))}(U_i) - 1), \\ 
    \intertext{and}
        D_2 \cap \bigcup_{j \in [\ell]} B_{G[D_2 \cup V_j]}(V_j, r^+_{(\mathcal{I}, (C_2, D_2))}(V_j))  
        &\subseteq D_2 \cap B_{G}(c, r)\\ 
        &\subseteq 
        D_2 \cap \bigcup_{j \in [\ell]} B_{G[D_2 \cup V_j]}(V_j, r^-_{(\mathcal{I}, (C_2, D_2))}(V_j) - 1). 
    \end{align*} 
\end{lemma}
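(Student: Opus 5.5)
The plan is to prove the four inclusions separately, following the proof of \cref{lemma:initial-ball-realizes}, with the separator $S$ replaced by the twin classes and with \cref{lemma:separator_with_few_types} playing the role of the statement that every shortest path crosses $S$. One twin-swap fact is used repeatedly, and it is the second half of the proof of \cref{lemma:separator_with_few_types}.
\begin{itemize}
    \item[(a)] For every $u \in U_i$ we have $\dist_G(c,u) \le \dist_{G[C_1\cup U_i]}(c,U_i)$. To see this, take a shortest $c$-$U_i$ path in $G[C_1 \cup U_i]$. It has length at least $1$, since $c \in C_1$. Its last edge goes from some vertex of $C_1$ to some vertex of $U_i$, and because $U_i$ is a $C_1$-twin class we may replace that endpoint by $u$.
    \item[(b)] For every $y \in D_2$ we have $\dist_G(c,y) \le \dist_G(c,V_j) + \dist_{G[D_2\cup V_j]}(V_j,y)$. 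This is exactly the swap performed in \cref{lemma:separator_with_few_types}.
\end{itemize}

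\emph{First chain.} For the left inclusion, fix $i$ with $r^+(U_i) \neq -1$. Then $r^+(U_i) \le r - \dist_{G[C_1\cup U_i]}(c,U_i)$. If $y$ satisfies $\dist_G(u,y) \le r^+(U_i)$ for some $u \in U_i$, then (a) and the triangle inequality give $\dist_G(c,y) \le r$.

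For the right inclusion, let $y \in D_1 \cap B_G(c,r)$. Apply \cref{lemma:separator_with_few_types} to the ordered partition $(D_1,C_1)$ with $c_0 = y$ and $d_0 = c$. This yields an $i$ with $\dist_G(y,U_i) + \dist_{G[C_1\cup U_i]}(U_i,c) = \dist_G(y,c) \le r$. Now suppose $\dist_G(U_i,y) \ge r^-(U_i)$. Then there is a witness $x \in X^-$ with $\dist_G(U_i,x) \le \dist_G(U_i,y)$. Using (a) at the vertex of $U_i$ nearest to $x$, we get $\dist_G(c,x) \le r$. This contradicts the fact that $B_G(c,r)$ realizes the sample. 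Hence $y \in B_G(U_i, r^-(U_i)-1)$.

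\emph{Second chain.} The left inclusion is the same computation as before, with (b) in place of (a). Note that $y \in D_2$ by assumption, so (b) applies.

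For the right inclusion, let $y \in D_2 \cap B_G(c,r)$. \Cref{lemma:separator_with_few_types} applied to $(C_2,D_2)$ with $c_0 = c$ and $d_0 = y$ gives a $j$ with $\dist_G(c,V_j) + d \le r$, where $d = \dist_{G[D_2\cup V_j]}(V_j,y)$. Assume for contradiction that $d \ge r^-(V_j)$, and take a witness $x \in X^- \cap (D_2 \cup V_j)$ with $\dist_{G[D_2\cup V_j]}(V_j,x) \le d$. If $x \in D_2$, then (b) gives $\dist_G(c,x) \le r$, which is a contradiction.

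\emph{Main obstacle.} The case $x \in V_j$ is where I expect the real difficulty. Here the path length to $x$ is $0$, and (b) does not apply, because $x$ need not be close to $c$ merely because some vertex of $V_j$ is.

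\begin{itemize}
    \item If $d \ge 2$, I would fix this with one more twin swap. The path from $V_j$ to $y$ starts at some $v \in V_j$ and takes a first step to a vertex $w \in D_2$. Since $x$ and $v$ are $D_2$-twins, $w$ is also adjacent to $x$. This gives $\dist_G(c,x) \le \dist_G(c,V_j) + 2 \le r$.
    \item The case $d = 1$ is the delicate one. I would first try to show that it cannot actually occur with $x \in V_j \setminus D_2$. For instance, one could exploit $X \subseteq D_1 \cup D_2$: this forces such an $x$ to lie in $D_1$, and then one could try to route the argument through the first chain.
\end{itemize}

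If that attempt fails, I would check whether the definition of $r^-(V_j)$ is intended to range only over $X^- \cap D_2$, and whether the subsequent application in the paper only needs that version.
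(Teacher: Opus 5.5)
Your proofs of the first three inclusions, and of the fourth when the witness $x$ lies in $D_2$, are correct and essentially follow the paper. Your fact (a) even tidies the third inclusion: the paper applies \cref{lemma:separator_with_few_types} with $c_0=x$ without knowing that $x\in D_1$. This is harmless only because that direction of the lemma never uses $c_0\in C$. Your twin swap for $x\in V_j$ with $d\ge 2$ is also correct.

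However, the remaining case, $x\in X^-\cap V_j$ with $d=1$, cannot be closed, because the fourth inclusion is false as stated. Here is a counterexample.
\begin{itemize}
\item Let $G$ be the path with vertices $c,u,x$ and edges $cu$, $ux$.
\item Let $r=1$, $X^+=\{u\}$ and $X^-=\{x\}$, so $B_G(c,1)=\{c,u\}$ realizes the sample.
\item Take $(C_1,D_1)=(\{c,u\},\{x\})$ with $U_1=\{x\}$.
\item Take $(C_2,D_2)=(\{c,x\},\{u\})$ with the single class $V_1=\{c,x\}$. This is a $D_2$-twin class, since $N_G(c)\cap D_2=N_G(x)\cap D_2=\{u\}$.
\end{itemize}
All hypotheses hold: $c\in C_1\cap C_2$ and $X\subseteq D_1\cup D_2$. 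Since $x\in X^-\cap V_1$, we get $r^-_{(\mathcal{I},(C_2,D_2))}(V_1)=0$. So the right-hand side of the last inclusion is $D_2\cap\emptyset=\emptyset$, while $D_2\cap B_G(c,1)=\{u\}$.

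This is exactly your case, with $y=u$ and $d=1$. Here $x$ already lies in $D_1$, so your first idea of passing to the first chain via $X\subseteq D_1\cup D_2$ cannot help. The first chain only constrains vertices of $D_1$ and says nothing about $u$.

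The paper's proof has precisely this hole. In the last inclusion it applies \cref{lemma:separator_with_few_types} with $d_0=x$, tacitly assuming $x\in D_2$. The definition of $r^-_{(\mathcal{I},(C_2,D_2))}(V_j)$ only guarantees $x\in X^-\cap(D_2\cup V_j)$.

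Nor is the situation artificial. In the proof of \cref{thm:cw} one has $D_1\subseteq C_2$. Moreover, $X$ meets $D_1$ whenever $e_1\neq\emptyentry$, since $y_1\in\LCA(T,X)$. So elements of $X^-$ may lie in the label classes $V_j$.

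Your fallback is the right repair: in the definition of $r^-_{(\mathcal{I},(C,D))}(V_j)$, take the minimum over $X^-\cap D$ only. Then the following hold.
\begin{itemize}
\item Every witness lies in $D_2$, so your fact (b) yields the fourth inclusion with no case distinction on $d$.
\item \cref{lemma:good-balls-realize-again} still holds, because its last step only uses some $x\in X^-\cap D_2$.
\item In \cref{thm:cw}, the compressor simply chooses the witnesses $x^-_{t+j}$ in $X^-\cap D_2$.
\end{itemize}
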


% \begin{lemma}\label{lemma:initial-ball-realizes-again}
%     Let $\mathcal{I} = (G,(X^+,X^-),c,r)$ be an instance, let $X = X^+ \cup X^-$.
%     Let $(C_1, D_1)$ and $(C_2, D_2)$ be ordered partitions of $V(G)$ such that $c \in C_1 \cap C_2$, and $X \subseteq D_1 \cup D_2$.
%     Let $(U_1, \ldots, U_k)$ be a $C_1$-twin partition of $D_1$, and and let $(V_1, \ldots, V_\ell)$ be a $D_2$-twin partition of $C_2$.
%     For every $i \in [k]$, let $r^+(U_i) = r^+_{(\mathcal{I}, (C_1, D_1))}(U_i)$ and $r^-(U_i) = r^-_{(\mathcal{I}, (C_1, D_1))}(U_i)$.
%     For every $j \in [\ell]$, let $r^+(V_j) = r^+_{(\mathcal{I}, (C_2, D_2))}(V_j)$ and $r^-(V_j) = r^-_{(\mathcal{I}, (C_2, D_2))}(V_j)$.
%     Then,
%     \begin{align*}
%         \bigcup_{i \in [k]} \left(B_G(U_i, r^+(U_i)) \cap D_1 \right) 
%         &\subseteq B_G(c, r) \cap D_1 \subseteq 
%         \bigcup_{i \in [k]} \left(B_G(U_i, r^-(U_i) - 1) \cap D_1 \right), \\ 
%     \intertext{and}
%         \bigcup_{j \in [\ell]} \left(B_{G[D_2 \cup V_j]}(V_j, r^+(V_j)) \cap D_2 \right) 
%         &\subseteq B_{G}(c, r) \cap D_2 \subseteq 
%         \bigcup_{j \in [\ell]} \left(B_{G[D_2 \cup V_j]}(V_j, r^-(V_j) - 1) \cap D_2 \right). 
%     \end{align*}
% \end{lemma}

\begin{proof}
    There are four inclusions to prove, we begin with
    \[
        D_1 \cap \bigcup_{i \in [k]} B_G(U_i, r^+_{(\mathcal{I}, (C_1, D_1))}(U_i)) 
        \subseteq D_1 \cap B_G(c, r).
    \]
    Let $u$ be a vertex in the set on the left-hand side of the above inclusion.
    There exists $i \in [k]$ such that $u \in D_1 \cap B_G(U_i, r^+_{(\mathcal{I}, (C_1, D_1))}(U_i)) $.
    Then, $r^+_{(\mathcal{I}, (C_1, D_1))}(U_i) \neq -1$ so $r^+_{(\mathcal{I}, (C_1, D_1))}(U_i) \leq r-\dist_{G[C_1 \cup U_i]}(c,U_i)$.
    Therefore, 
    \[\dist_G(u, U_i) + \dist_{G[C_1 \cup U_i]}(U_i, c) \leq r^+_{(\mathcal{I}, (C_1, D_1))}(U_i) + \dist_{G[C_1 \cup U_i]}(U_i, c) \leq r.\]
    Thus, by \cref{lemma:separator_with_few_types} applied to the separation $(D_1, C_1)$ with $c_0 = u$, $d_0 = c$, and the family $(U_1, \ldots, U_k)$, we have $\dist_G(u, c) \leq \dist_G(u, U_i) + \dist_{G[C_1 \cup U_i]}(U_i, c) \leq r$.
    It follows that $u \in B_G(c, r)$.

    Next, we prove that 
    \[
        D_2 \cap \bigcup_{j \in [\ell]} B_{G[D_2 \cup V_j]}(V_j, r^+_{(\mathcal{I}, (C_2, D_2))}(V_j)) \subseteq D_2 \cap B_{G}(c, r).
    \]
    Let $u$ be a vertex in the set on the left-hand side of the above inclusion.
    There exists $j \in [\ell]$ such that $u \in B_{G[D_2 \cup V_j]}(V_j, r^+_{(\mathcal{I}, (C_2, D_2))}(V_j)) \cap D_2$.
    Then, $r^+_{(\mathcal{I}, (C_2, D_2))}(V_j) \neq -1$ so $r^+_{(\mathcal{I}, (C_2, D_2))}(V_j) \leq r-\dist_G(c,V_j)$.
    Therefore, 
    \[\dist_G(c, V_j) + \dist_{G[D_2 \cup V_j]}(V_j, u) \leq \dist_G(c, V_j) + r^+_{(\mathcal{I}, (C_2, D_2))}(V_j) \leq r.\]
    Thus, by \cref{lemma:separator_with_few_types} applied to the separation $(C_2, D_2)$ with $c_0 = c$ and $d_0 = u$, with the family $(V_1, \ldots, V_{\ell})$, we have $\dist_G(c, u) \leq \dist_G(c, V_j) + \dist_{G[D_2 \cup V_j]}(V_j, u) \leq r$.
    It follows that $v \in B_G(c, r)$.

    The next inclusion to prove is
    \[
        D_1 \cap B_G(c, r) \subseteq 
        D_1 \cap \bigcup_{i \in [k]} B_G(U_i, r^-_{(\mathcal{I}, (C_1, D_1))}(U_i) - 1).
    \]
    Let $u \in D_1 \cap B_G(c, r)$.
    By \cref{lemma:separator_with_few_types} applied to the separation $(D_1, C_1)$ with $c_0 = u$ and $d_0 = c$, and the family $(U_1, \ldots, U_k)$, there exists $i \in [k]$ such that $\dist_G(u, U_i) + \dist_{G[C_1 \cup U_i]}(U_i, c) = \dist_G(u, c) \leq r$.
    Suppose by contradiction that $u \notin B_G(U_i, r^-_{(\mathcal{I}, (C_1, D_1))}(U_i) - 1)$.
    Then, $r^-_{(\mathcal{I}, (C_1, D_1))}(U_i) \neq \infty$ so there exists $x \in X^-$ such that $\dist_G(U_i, x) = r^-_{(\mathcal{I}, (C_1, D_1))}(U_i) \leq \dist_G(U_i, u)$.
    Then, 
    \[\dist_G(U_i, x) + \dist_{G[C_1 \cup U_i]}(U_i, c) \leq \dist_G(u, U_i) + \dist_{G[C_1 \cup U_i]}(U_i, c) \leq r.\]
    By \cref{lemma:separator_with_few_types}, applied to the separation $(D_1, C_1)$ with $c_0 = x$ and $d_0 = c$, with the family $(U_1, \ldots, U_k)$, we get $\dist_G(x, c) \leq r$ so $x \in B_G(c, r) \cap X^-$, contradicting that $B_G(c, r)$ realizes $(X^+, X^-)$.
    This proves that 
    $u \in B_G(U_i, r^-_{(\mathcal{I}, (C_1, D_1))}(U_i) - 1)$, as desired.

    Finally, we prove that 
    \[
        D_2 \cap B_{G}(c, r) \subseteq 
        D_2 \cap \bigcup_{j \in [\ell]} B_{G[D_2 \cup V_j]}(V_j, r^-_{(\mathcal{I}, (C_2, D_2))}(V_j) - 1).
    \]
    Let $u \in D_2 \cap B_G(c, r)$.
    By \cref{lemma:separator_with_few_types} applied to the separation $(C_2, D_2)$ with $c_0 = c$ and $d_0 = u$, with the family $(V_1, \ldots, V_{\ell})$, there exists $j \in [\ell]$ such that $\dist_G(c, V_j) + \dist_{G[D_2 \cup V_j]}(V_j, u) = \dist_G(c, u) \leq r$.
    Suppose by contradiction that $u \notin B_{G[D_2 \cup V_j]}(V_j, r^-_{(\mathcal{I}, (C_2, D_2))}(V_j) - 1)$.
    Then, $r^-_{(\mathcal{I}, (C_2, D_2))}(V_j) \neq \infty$ so there exists $x \in X^-$ such that $\dist_{G[D_2 \cup V_j]}(V_j, x) = r^-_{(\mathcal{I}, (C_2, D_2))}(V_j) \leq \dist_{G[D_2 \cup V_j]}(V_j, u)$.
    Then, \[\dist_G(c, V_j) + \dist_{G[D_2 \cup V_j]}(V_j, x) \leq \dist_G(c, V_j) +\dist_{G[D_2 \cup V_j]}(V_j, u) \leq r.\]
    By \cref{lemma:separator_with_few_types}, applied to the separation $(C_2, D_2)$ with $c_0 = c$ and $d_0 = x$, with the family $(V_1, \ldots, V_{\ell})$, we get $\dist_G(c, x) \leq r$ so $x \in B_G(c, r) \cap X^-$, contradicting that $B_G(c, r)$ realizes $(X^+, X^-)$.
    This proves that $u \in B_{G[D_2 \cup V_j]}(V_j, r^-_{(\mathcal{I}, (C_2, D_2))}(V_j) - 1)$, as desired.
    This completes the proof of the last out of the four inclusions and so of the lemma.
\end{proof}

We previously mentioned that the graphs of bounded NLC-width can be decomposed along simple edge cuts. This is made precise by the following lemma.

\begin{lemma}\label{lem:NLC-few-nbhd-classes}
    Let $t$ be an integer, let $G$ be a graph and let $(T, Q, \alpha, \beta, R)$ be an NLC-decomposition of $G$ with $|Q| \leq t$.
    Let $y$ be a non-root node of $T$ and let $(C, D) = (V(T_{y \mid \parent(T, y)}) \cap V(G), V(T_{\parent(T, y) \mid y}) \cap V(G))$.
    For every $q \in Q$, let $U_q = \{u \in C \mid \beta(\{y, u\})(\alpha(u)) = q\}$.
    Then, the partition $(U_q)_{q \in Q}$ is a $D$-twin partition of $C$ (into $|Q| \leq t$ classes).
\end{lemma}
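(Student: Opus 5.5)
We must show that any $u,v \in U_q$ satisfy $N_G(u)\cap D = N_G(v)\cap D$. The sets $U_q$ are pairwise disjoint and cover $C$, since each $u\in C$ gets exactly one label $\beta(\{y,u\})(\alpha(u))$. So they form a partition of $C$ (empty parts are allowed), with at most $|Q|\le t$ classes.

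Fix $u,v\in U_q$ and $d\in D$. Since $u,v$ are leaves in the subtree of $T$ rooted at $y$ and $d$ is a leaf outside it, the lowest common ancestor $w=\lca(T,u,d)$ is a proper ancestor of $y$. Likewise $\lca(T,v,d)=w$: both equal the lowest ancestor of $y$ whose subtree contains $d$. Moreover $u$ and $v$ lie below the same child $w'$ of $w$, namely the one on the path to $y$, while $d$ lies below the other child. So in \ref{item:nlc2}, $u$ and $v$ are on the same side at $w$ (both left, or both right), and $d$ is on the opposite side.

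Next we compare the labels at $w$. By \ref{item:nlc1}, applied to the path from $w$ to $u$ (which passes through $y$),
\[
\beta(\{w,u\}) = \beta(\{w,y\})\circ\beta(\{y,u\}).
\]
This follows by composing edge relabelings along the path and splitting the product at $y$; the degenerate case $u=y$ cannot occur since $y$ has a descendant leaf only if it is internal, or $u=y$ itself, in which case we should interpret $\beta(\{y,y\})$ as the identity. Hence $\beta(\{w,u\})(\alpha(u))=\beta(\{w,y\})(q)=\beta(\{w,v\})(\alpha(v))$, so $u$ and $v$ carry the same label at $w$.

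Finally apply \ref{item:nlc2}. Whether $ud\in E(G)$ is decided by membership of the ordered pair of labels (ordered according to the left/right sides) in $R(w)$. The same test, with the same pair, decides whether $vd\in E(G)$. Hence $ud\in E(G)$ iff $vd\in E(G)$, which is what we needed.

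The main subtlety is the bookkeeping in \ref{item:nlc1}, whose path runs from ancestor to descendant. I would handle it by writing the path $w=u_0,\dots,u_m=y,\dots,u_n=u$ and grouping the composition at index $m$. I would also treat separately the edge case $y=u$, where $y$ is a leaf and $C=\{y\}$, using the convention that $\beta(\{y,y\})$ is the identity; the claim is then trivial since $C$ is a singleton.
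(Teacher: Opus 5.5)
Your proof is correct and is essentially the paper's argument: the lowest common ancestor of $d$ with $u$ and with $v$ is the same proper ancestor of $y$, splitting the composition in (nlc1) at $y$ gives $u$ and $v$ equal labels there, and (nlc2) then decides both adjacencies with the same test. Your handling of the leaf case $u=y$, via the identity convention for $\beta(\{y,y\})$, covers a detail the paper leaves implicit. The sentence claiming this case ``cannot occur'' contradicts itself, though, and should simply be replaced by the treatment in your final paragraph.
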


\begin{proof}
    Let $u, u' \in C$ be such that $\beta(\{y, u\})(\alpha(u)) = \beta(\{y, u'\})(\alpha(u'))$.
    Let $w \in D$ and let $z$ be the lowest common ancestor of $y$ and $w$.
    Since $u, u' \in C \subseteq V(T_{y \mid \parent(T, y)})$, we have that $u, u'$ are descendants of $y$.
    Since $w \in D \subseteq V(T_{\parent(T,y) \mid y})$, we have that $w$ is not a descendant of $y$.
    Therefore, $\lca(T, u, w) = z = \lca(T, u', w)$.
    
    Furthermore, $z$ is an ancestor of $y$ so using $\beta(\{y, u\})(\alpha(u)) = \beta(\{y, u'\})(\alpha(u'))$ and \ref{item:nlc1}, we obtain 
    \[\beta(\{z, u\})(\alpha(u)) = \beta(\{z, y\}) \circ \beta(\{y, u\})(\alpha(u)) = \beta(\{z, y\}) \circ \beta(\{y, u'\})(\alpha(u')) = \beta(\{y, u'\})(\alpha(u')).\]
    To conclude, note that \ref{item:nlc2} implies that $uw \in E(G)$ if and only if $u'w \in E(G)$.
\end{proof}

Next, we prove the main result of this section, \cref{thm:cw}, which we restate for convenience.

\thmcw*

\begin{proof}
    Let $t$ be a positive integer and let $G$ be a graph of cliquewidth (and so NLC-width) at most $t$.
    Let $H$ be the hypergraph of balls in $G$.
    We fix an NLC-decomposition $(T, Q, \alpha, \beta, R)$ of $G$ with $Q = [t]$.
    We also fix a map $\varphi\colon (V(T) \cup \{\emptyentry\})^3 \to S(T)$ which satisfies the condition of \cref{lemma:encode-subtree}. 

    \header{Compression.}
            Let $(X^+,X^-)$ be a sample of $H$,
            and let $X = X^+ \cup X^-$.            
            We fix a ball $B_G(c, r)$ realizing $(X^+,X^-)$, 
            and we consider the instance $\mathcal{I} = (G,(X^+,X^-),c,r)$.
            We will construct $\kappa(X^+,X^-) \in (V(H)\cup \{\emptyentry\})^{4t+3}$.
            First, we consider several degenerate cases.
            If $X = \emptyset$, then we set $\kappa(X^+, X^-) = (\emptyentry, \emptyentry, \ldots, \emptyentry)$.
            If $X^+ \neq \emptyset$ and $X^- = \emptyset$, we set $\kappa(X^+, X^-) = (\emptyentry, \emptyentry, \ldots, \emptyentry,x^+,x^+)$ where $x^+ \in X^+$ is chosen arbitrarily. 
            From now on, we assume that $X^- \neq \emptyset$.
            If $c \in X$, let $x^- \in X^-$ be a vertex of $X^-$ which minimizes $\dist_G(c, x^-)$.
            In that case, set $\kappa(X^+, X^-) = (c, \emptyentry, \emptyentry, x^-, x^-, \emptyentry, \emptyentry, \ldots, \emptyentry)$.
            %Observe that in that case the fourth and the fifth entries of $\kappa(X^+, X^-)$ are equal, and different from $\emptyentry$.
            From now on, we assume that $c \notin X$.
            Recall that $V(G)$ is the set of all the leaves of $T$,
            and so $X$ is a subset of leaves of $T$, and $c \not\in \LCA(T,X)$. 
            Let $C$ be the component of $T - \LCA(T, X)$ which contains $c$.
            There exist $w_1, w_2, w_3 \in X \cup \{\emptyentry\}$ such that $\varphi(w_1, w_2, w_3) = C$.
            Thus, $\varphi(w_1, w_2, w_3)$ is the component of $T - \LCA(T, X)$ which contains $c$.

            By \cref{lemma:increase_Z_in_a_tree}, there are at most two edges $e_1, e_2 \in E(T)$ connecting $V(C)$ to $V(T)\setminus V(C)$ in $T$. If there is only one such edge (or zero), we denote the other one by $\emptyentry$.
            Up to renaming $e_1$ and $e_2$, we then either have $e_1 = \emptyentry$ or there exists $y_1 \in V(T)$ such that $e_1 = \{y_1, \parent(T, y_1)\}$ and $V(C) \subseteq V(T_{\parent(T, y_1) \mid y_1})$, and either $e_2 = \emptyentry$ or there exists $y_2 \in V(T)$ such that $e_2 = \{y_2, \parent(T, y_2)\}$ and $V(C) \subseteq V(T_{y_2 \mid \parent(T, y_2)})$.

            If $e_1 = \emptyentry$, let $(C_1, D_1) = (V(G), \emptyset)$.
            Otherwise, let $(C_1, D_1) = (V(T_{\parent(T, y_1) \mid y_1}) \cap V(G), V(T_{y_1 \mid \parent(T, y_1)}) \cap V(G))$.
            If $e_2 = \emptyentry$, let $(C_2, D_2) = (V(G), \emptyset)$.
            Otherwise, let $(C_2, D_2) = (V(T_{y_2 \mid \parent(T, y_2)}) \cap V(G), V(T_{\parent(T, y_2) \mid y_2}) \cap V(G))$.
            Observe that $c \in V(C) \subseteq  C_1 \cap C_2$ by definition of $(C_1, D_1)$ and $(C_2, D_2)$.

            If $e_1 \neq \emptyentry$, then for every $i \in [t]$, let $U_i = \{u \in D_1 \mid \beta(\{y_1, u\})(\alpha(u)) = i\}$. 
            Otherwise, set $U_i = \emptyset$ for every $i \in [t]$.
            If $e_2 \neq \emptyentry$, for every $j \in [t]$, let $V_j = \{u \in C_2 \mid \beta(\{y_2, u\})(\alpha(u)) = j\}$. 
            Otherwise, set $V_j = \emptyset$ for every $j \in [t]$.

            For all $i,j \in [t]$, we define $x_i^+,x_{t+j}^+ \in X^+\cup \{\emptyentry\}$ and $x_i^-, x_{t+j}^- \in X^- \cup \{\emptyentry\}$ so that if possible, we choose respective vertices of $X^+$ or $X^-$ such that
            \begin{align*}
                \dist_G(U_i, x_i^+) &= r^+_{(\mathcal{I}, (C_1, D_1))}(U_i) \text{, } \ \  \dist_{G[D_2 \cup V_j]}(V_j, x^+_{t+j}) = r^+_{(\mathcal{I}, (C_2, D_2))}(V_j) \text{; and }\\ \dist_G(U_i, x^-_{j}) &= r^-_{(\mathcal{I}, (C_1, D_1))}(U_i) \text{, } \ \ \dist_{G[D_2 \cup V_j]}(V_j, x^-_{t+j}) = r^-_{(\mathcal{I}, (C_2, D_2))}(V_j).
            \end{align*}
            Otherwise, we set the respective vertices to $\emptyentry$.
            % For every $i \in [t]$, let $x_i^+ \in X^+$ be a vertex such that $r^+_{(\mathcal{I}, (C_1, D_1))}(U_i) = \dist_G(U_i, x_i^+)$ if such a vertex exists, and set $x_i^+ = \emptyentry$ otherwise.
            % Let also $x_i^- \in X$ be a vertex such that $r^-_{(\mathcal{I}, (C_1, D_1))}(U_i) = \dist_G(U_i, x)$ if such a vertex exists, and set $x_i^- = \emptyentry$ otherwise.
            % For every $j \in [t]$, let $x_{t+j}^+ \in X$ be a vertex such that $\dist_{G[D_2 \cup V_j]}(V_j, x) = r^+_{(\mathcal{I}, (C_2, D_2))}(V_j)$ if such a vertex exists, and set $x_{t+j}^+ = \emptyentry$ otherwise.
            % Let also $x_{t+j}^- \in X$ be a vertex such that $\dist_{G[D_2 \cup V_j]}(V_j, x) = r^-_{(\mathcal{I}, (C_2, D_2))}(V_j)$ if such a vertex exists, and set $x_{t+j}^- = \emptyentry$ otherwise.
            Finally, we set 
            \[\kappa(X^+, X^-) = (w_1, w_2, w_3, x_1^+, x_1^-, \ldots, x_{2t}^+, x_{2t}^-).\]
            Observe that in this case, the fourth and fifth entries of $\kappa(X^+, X^-)$ can not be equal and different from $\emptyentry$ (since one is either $\emptyentry$ or in $X^+$ and the other is either $\emptyentry$ or in $X^-$).
            The same holds for the last and penultimate entries.
            
        \header{Reconstruction.}
            Consider an input vector for the reconstructor $\rho$:
            \[
                (w_1, w_2, w_3, x_1^+, x_1^-, \ldots, x_{2t}^+, x_{2t}^-) \in (V(G) \cup \{\emptyentry\})^{4t+3}.
            \]

            If $(w_1, w_2, w_3, x_1^+, x_1^-, \ldots, x_{2t}^+, x_{2t}^-) = (\emptyentry, \emptyentry, \ldots, \emptyentry)$, then $\rho$ returns any ball.
            If $x^+_{2t} = x^-_{2t}$ and $ x^+_{2t} \neq \emptyentry$, then $\rho$ returns $B_G(x_{2t}^+, |V(G)|)$.
            If $x^+_1 = x^-_1$ and $ x^+_1 \neq \emptyentry$, then $\rho$ returns $B_G(w_1, \dist_G(w_1, x^-_1) - 1)$.

            Assume that none of the above cases hold.
            Let $\varphi(w_1,w_2,w_3) = C$.
            We replicate the definitions of $e_1,e_2, y_1,y_2,(C_1,D_1),(C_2,D_2),U_1,\dots,U_t,V_1,\dots,V_t$ for $C$ after $\kappa$.
            Note that these objects depend only on the given NLC-decomposition and $C$, hence, this is well-defined.

            We additionally define, for all $i, j \in [t]$:
            \begin{align*}
                r^+_1(U_i) &= \begin{cases}
                    \dist_G(U_i,x_i^+) &\text{if $x_i^+ \in V(G)$,}\\
                    -1 &\text{if $x_i^+ = \emptyentry$;}
                \end{cases}
                &&r^+_2(V_j) = \begin{cases}
                    \dist_{G[D_2 \cup V_j]}(V_j,x_{t+j}^+) &\text{if $x_{t+j}^+ \in V(G)$,}\\
                    -1 &\text{if $x_{t+j}^+ = \emptyentry$;}
                \end{cases}\\
                r^-_1(U_i) &= \begin{cases}
                    \dist_G(U_i,x_i^-) &\text{if $x_i^- \in V(G)$,}\\
                    \infty &\text{if $x_i^- = \emptyentry$;}
                \end{cases}
                &&r^-_2(V_j) = \begin{cases}
                    \dist_{G[D_2 \cup V_j]}(V_j,x_{t+j}^-) &\text{if $x_{t+j}^- \in V(G)$,}\\
                    \infty &\text{if $x_{t+j}^- = \emptyentry$.}
                \end{cases}
            \end{align*}
            % For every $i \in [t]$, set $r^+_1(U_i) = \dist_G(U_i, x_i^+)$ if $x_i^+ \neq \emptyentry$ and set $r^+_1(U_i) = -1$ otherwise.
            % Set also $r^-_1(U_i) = \dist_G(U_i, x_i^-)$ if $x_i^- \neq \emptyentry$ and set $r^-_1(U_i) = \infty$ otherwise.
            % For every $j \in [t]$, set $r^+_2(V_j) = \dist_{G[D_2 \cup V_j]}(V_j, x_{t+j}^+)$ if $x_{t+j}^+ \neq \emptyentry$ and set $r^+_2(V_j) = -1$ otherwise.
            % Set also $r^-_2(V_j) = \dist_{G[D_2 \cup V_j]}(V_j, x_{t+j}^-)$ if $x_{t+j}^- \neq \emptyentry$ and set $r^-_2(V_j) = \infty$ otherwise.
            
            To decide the output, $\rho$ searches for a ball $B_G(c', s)$ with $c' \in C_1 \cap C_2$ such that
            \begin{align*}
                D_1 \cap \bigcup_{i \in [t]} B_G(U_i, r^+_1(U_i)) 
                &\subseteq D_1 \cap B_G(c', s) \subseteq 
                D_1 \cap \bigcup_{i \in [t]} B_G(U_i, r^-_1(U_i) - 1)
            \intertext{and}
                D_2 \cap \bigcup_{j \in [t]} B_{G[D_2 \cup V_j]}(V_j, r^+_2(V_j))
                &\subseteq D_2 \cap B_{G}(c', s) \subseteq 
                D_2 \cap \bigcup_{j \in [t]} B_{G[D_2 \cup V_j]}(V_j, r^-_2(V_j) - 1). 
            \end{align*}
            If such a ball $B_G(c',s)$ exists, $\rho$ returns it, and otherwise $\rho$ returns an arbitrary ball in $G$.
 
    \header{Proof of correctness.}
    We now prove that $(\kappa,\rho)$ 
    is a proper array sample compression scheme of $H$. 
    By definition, $\rho$ only outputs balls in $G$.
    Clearly, for each sample $(X^+, X^-)$ of $H$, 
    we get $\kappa(X^+,X^-)$ which is an element of $(X\cup\set{\emptyentry})^{4t+3}$. 
    We need to argue that $\rho(\kappa(X^+, X^-))$ realizes $(X^+, X^-)$.

    Let $(X^+,X^-)$ be a sample of $H$ and let
    \[\kappa(X^+, X^-) = (w_1, w_2, w_3, x_1^+, x_1^-, \ldots, x_{2t}^+, x_{2t}^-).\]
    Let $C = \varphi(w_1, w_2, w_3)$. 
    Let $c\in V(G)$ and an integer $r$ be as they were fixed in the definition of $\kappa(X^+, X^-)$. 
    Thus, $B_G(c, r)$ realizes $(X^+, X^-)$.
    Note that $\mathcal{I} = (G,(X^+,X^-),c,r)$ is an instance.    

    If $X = \emptyset$ then $\kappa(X^+, X^-) = (\emptyentry, \emptyentry, \ldots, \emptyentry)$ so the reconstructor outputs any ball, which indeed realizes $(X^+, X^-)$.
    If $X^+ \neq \emptyset$ and $X^- = \emptyset$, then there is $x^+ \in X^+$ with $x^+ = x^+_{2t} = x^-_{2t}$.
    In this case $\rho$ returns $B_G(x^+, |V(G)|)$, which is equal to the vertex set of the component of $G$ containing $X^+$.
    Such a ball realizes $(X^+, X^-)$.
    From now on, we assume that $X^- \neq \emptyset$.
    If $c \in X$ then $w_1 = c$ and $x^+_1 = x^-_1 = x^-$ where $x^- \in X^-$ minimizes $\dist_G(c, x^-)$. 
    In this case $\rho$ returns $B_G(c,\dist_G(c,x^-)-1) = B_G(c,\dist_G(c,X^-)-1)$.
    Clearly, $X^- \cap B_G(c,\dist_G(c,X^-)-1) = \emptyset$, and $X^+ \subset B_G(c,r) \subset B_G(c,\dist_G(c,X^-)-1)$.
    In particular, the returned ball realizes $(X^+, X^-)$.
    From now on, we assume that $c \notin X$.

    The compressor $\kappa$ chose $w_1$, $w_2$, and $w_3$, so that $C$ is the component of $T - \LCA(T, X)$ which contains $c$.
    By construction, the compressor $\kappa$ and the reconstructor $\rho$ consider the same ordered partitions $(C_1, D_1)$, $(C_2, D_2)$, $(U_1, \ldots, U_{t})$, and $(V_1, \ldots, V_{t})$. 
    By \cref{lem:NLC-few-nbhd-classes}, $(U_1, \ldots, U_{t})$ is a $C_1$-twin partition of $D_1$ and $(V_1, \ldots, V_{t})$ is a $D_2$-twin partition of $C_2$.

    It follows from construction that for all $i,j \in [t]$, we have
    \begin{align*}
        r^+_1(U_i) = r^+_{(\mathcal{I}, (C_1, D_1))}(U_i), \ \ &r^+_2(V_j) = r^+_{(\mathcal{I}, (C_2, D_2))}(V_j),\\
        r^-_1(U_i) = r^-_{(\mathcal{I}, (C_1, D_1))}(U_i), \ \ &r^-_2(V_j) = r^-_{(\mathcal{I}, (C_2, D_2))}(V_j).
    \end{align*}

    We claim that the partitions $(C_1, D_1)$ and $(C_2, D_2)$ satisfy $c \in C_1 \cap C_2$ and $X \subseteq D_1 \cup D_2$. 
    We already observed that $c \in C_1 \cap C_2$ in the compression phase.
    Furthermore, since $c \notin X$ then $C$ is the component of $c$ in $T - \LCA(T, X)$ so $V(C)$ is disjoint from $X$. 
    Additionally, $C_1 \cap C_2 = V(C) \cap V(G)$. 
    Therefore, $C_1 \cap C_2$ is disjoint from $X$, but $(C_1 \cap C_2, D_1 \cup D_2)$ is a partition of $V(G)$ so $X \subseteq D_1 \cup D_2$.

    By \cref{lemma:initial-ball-realizes-again}, it follows that
    \begin{align*}
        D_1 \cap \bigcup_{i \in [t]} B_G(U_i, r^+_1(U_i))
        &\subseteq D_1 \cap B_G(c, r) \subseteq 
        D_1 \cap \bigcup_{i \in [t]} B_G(U_i, r^-_1(U_i) - 1)
    \intertext{and}
        D_2 \cap \bigcup_{j \in [t]} B_{G[D_2 \cup V_j]}(V_j, r^+_2(V_j))
        &\subseteq D_2 \cap B_{G}(c, r) \subseteq 
        D_2 \cap \bigcup_{j \in [t]} B_{G[D_2 \cup V_j]}(V_j, r^-_2(V_j) - 1). 
    \end{align*}
    Thus, the search of the reconstructor $\rho$ for a ball satisfying the inclusions is successful, and in particular $\rho$ returns any ball $B_G(c', s)$ with $c' \in C_1 \cap C_2$ such that 
    \begin{align*}
        D_1 \cap \bigcup_{i \in [t]} B_G(U_i, r^+_1(U_i)) 
        &\subseteq D_1 \cap  B_G(c', s) \subseteq 
        D_1 \cap \bigcup_{i \in [t]} B_G(U_i, r^-_1(U_i) - 1) 
    \intertext{and}
        D_2 \cap \bigcup_{j \in [t]} B_{G[D_2 \cup V_j]}(V_j, r^+_2(V_j))
        &\subseteq D_2 \cap B_{G}(c', s) \subseteq 
        D_2 \cap \bigcup_{j \in [t]} B_{G[D_2 \cup V_j]}(V_j, r^-_2(V_j) - 1). 
    \end{align*}
    By \cref{lemma:good-balls-realize-again}, we deduce that $B_G(c',s) = \rho(\kappa(X^+,X^-))$ realizes $(X^+, X^-)$.
\end{proof}

\section{Bounded vertex cover}\label{sec:vc}

In this section, we give a proper sample compression scheme of size $t+4$ for the hypergraphs of balls in graphs admitting a vertex cover of size at most $t$.
Note that the main technical difficulty comes from compressing balls of radius $1$.

\thmvc*

\begin{proof}
    Let $t$ be a positive integer, let $G$ be a graph containing a vertex cover $R$ of size $t$, and let $H$ be the hypergraph of balls in $G$.
    Let $R = \{r_1,\dots,r_t\}$.
    We fix an arbitrary total order $\prec$ on $V(G)$.
    
    \header{Compression.}
            Let $(X^+,X^-)$ be a sample of $H$ and fix $c\in V(G)$ and an integer $r$ such that $B_G(c, r) \cap (X^+ \cup X^-) = X^+$.
            Among all such pairs $(c, r)$, pick one where $r$ is maximum.
            Consider the following four mutually disjoint cases.%, where in each case we assume that the previous ones do not hold.
            %\begin{itemize}
            \begin{enumorig}[label={Case \arabic*:}, ref={\arabic*}, left=20pt]
                \item  $X^+ = \emptyset$. \label{case:1}
                \item $c \in X^+$. \label{case:2}
                \item $X^+ \neq \emptyset$, $c \notin X^+$, and $c \in R$. \label{case:3}
                \item $X^+ \neq \emptyset$, $c \notin X^+$, and $c \notin R$. \label{case:4}
            \end{enumorig}
            %\end{itemize}
            Let $b \in \{0,1\}^2$ be a bitstring encoding which case out of the four holds.
            In the case where $X^- \neq \emptyset$ we define as follows a special element $x \in X^-$.
            Let $S=\set{y\mid y \in X^-,\ y \notin R,\ N_G(y) = N_G(c)}$. % be the set of all elements $y \in X^-$ with $y \notin R$ such that $N_G(c) = N_G(y)$.
            The ordering $\prec$ induces a cyclic ordering on $S \cup \{c\}$.
            If $r = 1$, $c \notin R$ and $S \neq \emptyset$, we set $x$ to be the element of $S$ which immediately precedes $c$ in this cyclic ordering. 
            Otherwise, let $x$ be any vertex in $X^-$ minimizing $\dist_G(c,x)$.
            Additionally, when Case~\ref{case:3}~or~\ref{case:4} holds, we set $z$ to be any vertex of $X^+$.
            Depending on the four cases, we define a subsample $(Y^+, Y^-)$ of $(X^+,X^-)$. Suppose that case $i \in [4]$ holds. Let
            \[
                (Y^+,Y^-) = 
                \begin{cases}
                    (\emptyset,\emptyset) & \text{if $i = 1$,}\\
                    (\{c\},\{x\}) & \text{if $i = 2$ and $X^- \neq \emptyset$,}\\
                    (\{c\},\emptyset) & \text{if $i = 2$ and $X^- = \emptyset$,}\\
                    (\{z\},\{x\}) & \text{if $i \in \{3,4\}$ and $X^- \neq \emptyset$,}\\
                    (\{z\},\emptyset) & \text{if $i \in \{3,4\}$ and $X^- = \emptyset$.}
                \end{cases}
            \]
            Next, we define another bitstring $b' \in \{0,1\}^t$ again depending on which of the four cases holds.
            If Case~\ref{case:1} or~\ref{case:2} holds, then we set $b'$ arbitrarily.
            If Case~\ref{case:3} holds, then $c = r_j$ for some $j \in [t]$, and we set $b'$ to be all zeros except $1$ on the $j$th coordinate.
            If Case~\ref{case:4} holds, then for each $j \in [t]$, we set $b'_j = 1$ if and only if $r_j \in N_G(c)$.
            Finally, set
            \[
                \kappa(X^+, X^-) = ((Y^+,Y^-), bb').
            \]     
            
        \header{Reconstruction.}
            Consider an input for the reconstructor $\rho$:
            \[((Y^+,Y^-), bb') \in \calS(H) \times \{0,1\}^{t+2}.
            \]
            We assume that the input is of the form encoded by $\kappa$; otherwise it does not matter what the reconstructor returns.
            First, $b$ tells us in which case we are.
            Also, let $c'$ be the element of $Y^+$ whenever $Y^+ \neq \emptyset$ and let $x$ be the element of $Y^-$ whenever $Y^- \neq \emptyset$.
            If Case~\ref{case:1} holds, then $\rho$ returns the empty ball in $G$. 
            From now on, assume that Case~\ref{case:1} does not hold.
            Then, $Y^+ \neq \emptyset$ so $c'$ is well-defined.
            If $Y^- = \emptyset$ then $\rho$ returns the connected component of $c'$ in $G$ (which is indeed a ball in $G$).
            % \clement{What if $G$ is not connected ? I guess we need to remember one vertex in $X^+$ to know in which component we are.}
            % \jedrzej{Nice catch, fixed}
            From now on, we assume that $Y^- \neq \emptyset$, so $x$ is well-defined.
            If Case~\ref{case:2} holds, then $\rho$ returns $B_G(c',\dist_G(c',x)-1)$.
            If Case~\ref{case:3} holds, then let $j\in [t]$ be given by $b'$. Then, $\rho$ returns $B_G(r_j,\dist_G(r_j,x)-1)$.
            Finally, assume that Case~\ref{case:4} holds and let $R' \subset R$ be given by $b'$.
            Let $S$ be the set of all $y \in V(G) \setminus R$ such that $N_G(y) = R'$.
            First, if $x \notin S$, then we pick $y \in S$ arbitrarily and $\rho$ returns $B_G(y,\dist_G(y,x)-1)$.
            If $x \in S$, then we pick $y$ as the element of $S$ which immediately follows $x$ in the cyclic ordering of $S\cup\set{x}$ induced by $\prec$, and $\rho$ returns $B_G(y,1)$. 
    
    \header{Proof of correctness.}
    We now prove that $(\kappa,\rho)$ 
    is a proper sample compression scheme of $H$.
    By construction, $\kappa$ compresses every sample to a subsample of at most two elements plus $t+2$ additional bits, so the size of the scheme is $t+4$.
    By construction, $\rho$ only outputs balls in $G$, so the scheme is proper.
    Let $(X^+,X^-)$ be a sample of $H$.
    We now argue that $\rho(\kappa(X^+, X^-))$ realizes $(X^+, X^-)$.
    Let $\kappa(X^+, X^-) = ((Y^+,Y^-), bb')$, and fix the same $c\in V(G)$ and integer $r$ as $\kappa$ did, so that $B_G(c, r) \cap (X^+ \cup X^-) = X^+$.

    Because of the bitstring $b$, the compressor and the reconstructor consider the same case.
    We split the reasoning according to which case they consider.
    Let $c'$ be the element of $Y^+$ whenever $Y^+ \neq \emptyset$ and let $x$ be the element of $Y^-$ whenever $Y^- \neq \emptyset$.
    
    If Case~\ref{case:1} holds, i.e.\ $X^+ = \emptyset$, then $\rho$ returns the empty ball, which realizes $(\emptyset,X^-)$.

    We can now assume that Case~\ref{case:1} does not hold. Therefore, $X^+ \neq \emptyset$, and so also $Y^+ \neq \emptyset$, so $c'$ is well-defined.
    If $Y^- = \emptyset$, then $X^- = \emptyset$ by definition of $\kappa$, and $\rho$ returns a ball containing all the vertices of the component of $c'$ in $G$, hence all vertices in $X^+$.
    Such a ball realizes $(X^+,\emptyset)$.
    From now on, we assume that $X^+ \neq \emptyset$ and $Y^- \neq \emptyset$, so both $c'$ and $x$ are well-defined.
    
    If Case~\ref{case:2} holds, then $\rho$ returns $B_G(c',\dist_G(c',x)-1)$.
    We have $c = c'$ by definition of $\kappa$ in that case, and $x \in X^-$ so $x \notin B_G(c, r)$ so $r+1 \leq \dist_G(c',x) \leq \dist_G(c',x')$ for every $x' \in X^-$, because of how $\kappa$ chose $x$. Thus, $B_G(c',\dist_G(c',x)-1)$ realizes $(X^+, X^-)$.

    If Case~\ref{case:3} holds, then $\rho$ returns $B_G(r_j,\dist_G(r_j,x)-1)$ where $j\in [t]$ is given by $b'$.
    Similarly, we have $c = r_j$ and $r+1 \leq \dist_G(r_j,x) \leq \dist_G(r_j,x')$ for every $x' \in X^-$, hence, $B_G(r_j,\dist_G(r_j,x)-1)$ realizes $(X^+, X^-)$.

    Finally, we assume that Case~\ref{case:4} holds.
    Let $R' \subset R$ be given by $b'$ and let $S=\set{y\mid y \in V(G) \setminus R,\ N_G(y) = R'}$. % be the set of all $y \in V(G) \setminus R$ such that $N_G(y) = R'$. 
    Observe that $c \in S$ by the definition of $\kappa$ in that case.
    Let $y \in S$ be the center of the ball returned by $\rho$.
    %Here, $\rho$ returns $B_G(y,\dist_G(y,x)-1)$ for some $y \in S$, fix this $y$.
    Note that if $r \geq 2$, then $B_G(y,r) = B_G(c,r)$, thus $B_G(y,\dist_G(y,x)-1)$ realizes $(X^+, X^-)$, as before.
    Note that if $r \leq 0$, then either Case~\ref{case:1}~or~\ref{case:2} holds, hence, we may assume that $r = 1$.
    In this case, $\rho$ returns either $B_G(y,1)$ or $B_G(y,\dist_G(y,x)-1)$.
    However, in the latter case, by the maximality of $r$, we have $\dist_G(y,x) = 2$,
    and so the two cases are identical.
    Observe that $N_G(y) = N_G(c) = R'$.
    Also, $c \notin X^+$ as we are in Case~\ref{case:4}.
    In particular, $B_G(y,1)$ realizes $(X^+, X^-)$ as long as $y \notin X^-$.
    Suppose to the contrary that $y \in X^-$.
    Then, $y \neq c$ so we are in the special case where $r=1$, $c \notin R$ and $S \neq \emptyset$.
    In this case, by definition of $\kappa$, $x$ is the element of $S \cap X^-$ which immediately precedes $c$ in the cyclic ordering of $S\cup\set{x}$ induced by $\prec$. Then, $\rho$ defined $y$ as the element of $S$ which follows of $x$ in this cyclic order, and so either $y = c$ or $y \notin X^-$. 
    In both cases, we have $y \notin X^-$.
    This proves that $B_G(y,1)$ realizes $(X^+,X^-)$,
    which completes the proof.
\end{proof}

\section{Hypergraphs of balls of bounded radius}\label{sec:bd-radius}

In this section, we consider hypergraphs of balls of bounded radius.

\subsection{Local treewidth}\label{sec:local-treewidth}

Grohe~\cite{G03} introduced the following graph parameter.
Given a function $f\colon \N \rightarrow \N$, a graph $G$ has \defin{local treewidth} at most $f$, if for every positive integer $r$ and for every $v \in V(G)$, the treewidth of $G[B_G(v,r)]$ is at most $f(r)$.

\begin{theorem}\label{thm:local-tw}
    For every function $f\colon \N \rightarrow \N$, for every graph $G$ of local treewidth at most $f$, and for every positive integer $r$, the hypergraph of balls of radius at most $r$ in $G$ has a proper array sample compression scheme of size $4f(2r)+8$. 
\end{theorem}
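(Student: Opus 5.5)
The plan is to localize around one positive example and then reuse the treewidth scheme of \cref{thm:tw}, in the refined form behind \cref{cor:tw}. The size budget $4f(2r)+8 = (4f(2r)+7)+1$ suggests exactly this: one extra array entry stores an anchor vertex $x_0 \in X^+$, and the remaining $4f(2r)+7$ entries run the scheme of \cref{thm:tw} in the graph $G' = G[B_G(x_0,2r)]$. By the local treewidth assumption, $G'$ has treewidth at most $f(2r)$.

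\textbf{Degenerate case.} Let $(X^+,X^-)$ be realized by $B_G(c,r')$ with $r' \le r$. If $X^+=\emptyset$, the compressor outputs all $\emptyentry$, and the reconstructor returns the empty ball (radius $-1$ is allowed, as in the paper's conventions). Otherwise the first entry is some $x_0 \in X^+$. Then $\dist_G(c,x_0)\le r'\le r$, so $c \in B_G(x_0,r)$.

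\textbf{Key metric observation.} Let $V'=B_G(x_0,2r)$. For every $c'$ with $\dist_G(x_0,c')\le r$ and every $s\le r$, we have $B_{G'}(c',s)=B_G(c',s)\subseteq V'$. Indeed, every vertex of a shortest $c'$-$v$ path of length at most $s$ lies within distance $r+s\le 2r$ of $x_0$, so the path survives in $G'$. Conversely, distances in $G'$ are never shorter than in $G$. Consequently:
\begin{itemize}
\item $B_{G'}(c,r')=B_G(c,r')$, so $(X^+\cap V', X^-\cap V')$ is a sample of the hypergraph of balls in $G'$, realized by $B_{G'}(c,r')$. Note that $X^+ \subseteq B_G(c,r') \subseteq V'$.
\item Any such ball $B_G(c',s)$ avoids $X^-\setminus V'$ automatically.
\end{itemize}

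\textbf{Scheme in $G'$.} I then run the compressor from the proof of \cref{thm:tw} on $G'$ with the instance $(G',(X^+, X^-\cap V'),c,r')$. This uses only vertices of $X^+\cup X^-$ and takes $4f(2r)+7$ entries. The reconstructor first reads $x_0$, rebuilds $G'$ together with a fixed tree decomposition of it, and recovers the separation $(A_1,A_2)$ and the values $r_i^\pm$ exactly as in that proof. The one change is in the search: it only considers balls $B_{G'}(c',s)$ with $c'\in A_1\cap B_G(x_0,r)$ and $s\le r$ that satisfy the sandwich condition \eqref{eq:to-satisfy}. \Cref{lemma:initial-ball-realizes} shows that the original pair $(c,r')$ is among these candidates, so the search succeeds. \Cref{lemma:good-balls-realize} then shows that the returned ball realizes the restricted sample in $G'$. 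By the key observation, this ball equals $B_G(c',s)$, contains $X^+$, and avoids all of $X^-$, so the scheme is proper in $G$.

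\textbf{Main obstacle.} The delicate point is exactly this mismatch between balls of $G'$ and balls of $G$. An unrestricted center, for example one near the boundary of $V'$, could produce a $G'$-ball that is not a $G$-ball, or that misses vertices of $X^-$ lying outside $V'$. The fix is to restrict the reconstructor's search to centers in $B_G(x_0,r)$ and radii at most $r$. This must be checked to be compatible with the proofs of \cref{lemma:good-balls-realize,lemma:initial-ball-realizes}, which is the case because those lemmas never constrain the center beyond $c'\in A_1$.
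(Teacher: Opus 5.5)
Your proposal is correct and is essentially the paper's proof: you anchor one vertex of $X^+$, run the treewidth scheme with radius bounded by $r$ (as in \cref{cor:tw}) on $G[B_G(x_0,2r)]$, and use the observation that $G'$-balls of radius at most $r$ centered within distance $r$ of $x_0$ coincide with $G$-balls. The only difference is that you explicitly restrict the reconstructor's centers to $B_G(x_0,r)$. The paper does not need this: the returned ball contains $x_0$ and has radius at most $r$, so its center is automatically within distance $r$ of $x_0$.
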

\begin{proof}
    Let $f\colon \N \rightarrow \N$ be a function, let $G$ be a graph of local treewidth at most $f$, let $r$ be a positive integer, and let $H$ be the hypergraph of balls of radius at most $r$ in $G$.
    For every $v \in V(G)$, let $H_v$ be the hypergraph of balls of radius at most $r$ in $G[B_G(v,2r)]$.
    By the assumption on the local treewidth of $G$, applying~\Cref{cor:tw}, we may, for every $v \in V(G)$, fix a proper array sample compression scheme $(\kappa_v,\rho_v)$ for $H_v$ of size $4f(2r) + 7$.
    Let $\ell = 4f(2r) + 7$.

    \header{Compression.}
            Let $(X^+,X^-)$ be a sample of $H$ and fix $c \in V(G)$ and an integer $s$ such that $B_G(c,s) \cap (X^+ \cup X^-) = X^+$.
            We define a special element $x \in V(G) \cup \{\emptyentry\}$.
            If $X^+ = \emptyset$, then $x = \emptyentry$, and otherwise, we pick as $x$ any member of $X^+$.
            In the latter case, let $\kappa_x(X^+ \cap B_G(x,2r),X^- \cap B_G(x,2r)) = (w_1,\dots,w_\ell)$.
            Finally, set
            \[
                \kappa(X^+, X^-) = 
                \begin{cases}
                    (x,w_1,\dots,w_\ell) &\text{if $X^+ \neq \emptyset$,}\\
                    (\emptyentry,\dots,\emptyentry) &\text{if $X^+ = \emptyset$.}
                \end{cases}
            \]
    
    \header{Reconstruction.}
            Consider an input for the reconstructor $\rho$:
            \[(x,w_1,\dots,w_\ell) \in (V(G) \cup \{\emptyentry\})^{\ell+1}.
            \]
            We assume that the input is of the form encoded by $\kappa$; otherwise it does not matter what the reconstructor $\rho$ returns.
            If $x = \emptyentry$, then $\rho$ returns the empty ball.
            Otherwise, $\rho$ returns the ball $\rho_x(w_1,\dots,w_\ell)$.

    \header{Proof of correctness.}
        We now prove that $(\kappa,\rho)$ 
        is a proper array sample compression scheme of $H$. 
        Let $(X^+,X^-)$ be a sample of $H$.
        Clearly $\kappa(X^+,X^-) \in (X^+ \cup X^-\cup\set{\emptyentry})^{\ell+1}$. 
        We need to argue that $\rho(\kappa(X^+, X^-))$ is a ball of radius at most $r$ in $G$ that realizes $(X^+, X^-)$.
        Let $\kappa(X^+,X^-) = (x,w_1,\dots,w_\ell)$, and fix the same $c \in V(G)$ and integer $s$ as $\kappa$ did, so that $B_G(c,s) \cap (X^+ \cup X^-) = X^+$.

        If $X^+ = \emptyset$, then $x = \emptyentry$ and $\rho$ returns the empty ball, which realizes $(\emptyset,X^-)$.
        Suppose now that $X^+ \neq \emptyset$, so $x \in X^+$ by definition of $\kappa$.

        Note that for each $c' \in V(G)$ and for each integer $s'$ with $s' \leq r$, if $x \in B_G(c',s')$, then $B_G(c',s') \subset B_G(x,2r)$.
        Let $G_x = G[B_G(x,2r)]$.
        Recall that $(w_1,\dots,w_\ell) = \kappa_x(X^+ \cap B_G(x,2r),X^- \cap B_G(x,2r))$ and that $\rho(x,w_1,\dots,w_\ell) = \rho_x(w_1,\dots,w_\ell)$ is a ball $B_{G_x}(c',s')$ for some $c' \in V(G_x)$ and some integer $s'$ realizing the sample $(X^+ \cap B_G(x,2r),X^- \cap B_G(x,2r))$ of $H_x$.
        Since $(\kappa_x,\rho_x)$ comes from~\Cref{cor:tw}, we know that $s' \leq r$.
        Additionally, $x \in X^+ \cap B_G(x,2r)$, hence, $x \in B_{G_x}(c', s')$, so $B_{G_x}(c',s') = B_G(c',s')$.
        Since $X^+ \subset B_G(c,r)$, we in fact have $X^+ \cap B_G(x,2r) = X^+$.
        It follows that $X^+ \subset B_G(c',s')$.
        On the other hand, $B_G(c',s')$ is disjoint from $X^- \cap B_G(x,2r)$ and since $B_G(c',s') \subset B_G(x,2r)$, the ball $B_G(c',s')$ is also disjoint from $X^- \setminus B_G(x,2r)$.
        Altogether, we obtain that $\rho(\kappa(X^+, X^-))$ is a ball in $G$ that realizes $(X^+, X^-)$.
        This completes the proof.
\end{proof}

A result of Bodlaender~\cite{B89} implies that the local treewidth of every planar graph is at most $r \mapsto 3r$.
Therefore,~\Cref{thm:local-tw} yields the following statement, which immediately implies~\Cref{th:planar-bd-radius}.

\begin{corollary}
    For every planar graph $G$ and for every positive integer $r$, the hypergraph of balls of radius at most $r$ in $G$ has a proper array sample compression scheme of size $24r+8$. 
\end{corollary}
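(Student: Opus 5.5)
This corollary should follow directly from \Cref{thm:local-tw}, instantiated with the local treewidth bound for planar graphs. The only things to verify are the parameter substitution and that the hypotheses of \Cref{thm:local-tw} hold in the form stated.

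First, I would invoke the result of Bodlaender~\cite{B89}. Every planar graph of radius at most $d$ has treewidth at most $3d$. For a planar graph $G$, a vertex $v$, and a positive integer $r'$, the graph $G[B_G(v,r')]$ is planar, since induced subgraphs of planar graphs are planar. It also has radius at most $r'$ witnessed by $v$: every vertex of the ball is joined to $v$ by a shortest path, and every vertex on such a path lies in the ball. Hence $G[B_G(v,r')]$ has treewidth at most $3r'$, so $G$ has local treewidth at most $f$ with $f(r') = 3r'$.

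Second, I would apply \Cref{thm:local-tw} with this $f$ and the given positive integer $r$. This yields a proper array sample compression scheme for the hypergraph of balls of radius at most $r$ in $G$, of size
\[4f(2r)+8 = 4\cdot 3\cdot 2r + 8 = 24r+8.\]

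The only step needing any care is the radius claim for the induced ball: the shortest paths from $v$ must stay inside the ball, and they do. Finally, to pass to \Cref{th:planar-bd-radius}, I would use the standard conversion from an array scheme of size $k$ to an ordinary scheme of size $\Oh(k\log k)$ by recording a permutation. With $k = 24r+8$ this gives $\Oh(r\log r)$, as claimed.
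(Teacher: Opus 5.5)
Your proposal is correct and follows the paper's argument exactly. The paper also uses Bodlaender's bound to get local treewidth at most $r \mapsto 3r$ for planar graphs, then substitutes $f(2r)=6r$ into \Cref{thm:local-tw} to obtain size $24r+8$. Your check that $G[B_G(v,r')]$ is planar with radius at most $r'$ is a detail the paper leaves implicit.
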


\subsection{Bounded degeneracy}\label{sec:bd-degeneracy}

We conclude this section with a simple construction of sample compression schemes for the hypergraphs of closed neighborhoods in graphs of bounded degeneracy.

A graph $G$ has \defin{degeneracy} at most $t$ if there exists a total ordering $\prec$ on $V(G)$ such that for every $v \in V(G)$, the size of the set $\{u \in N_G(v) \mid u \prec v\}$ is at most $t$.

Before we prove~\Cref{thm:degeneracy}, let us delve more into the context.
Recall the family of graphs $G_n$ for positive integers $n$ defined in the introduction of this paper.
Note that the graph $G_n'$ obtained by subdividing each edge of $G_n$ once has degeneracy $2$, and the VC-dimension of its hypergraph of balls of radius (at most) $2$ is at least $n$.
It follows that there is no constant $c$ such that the hypergraphs of balls of radius (at most) $2$ in graphs of degeneracy $2$ admit sample compression schemes of size $c$.

\thmdeg*

One can check that $G_n$ has degeneracy $n-1$. 
Thus, as usual, using the result of Floyd and Warmuth~\cite{FW95}, we obtain that the linear bound in~\Cref{thm:degeneracy} is optimal.
Finally, note that the hypergraphs of closed neighborhoods in a graph are self-dual, hence, the general result of Moran
and Yehudayoff~\cite{Moran2016} yields that the hypergraphs of closed neighborhoods in graphs of degeneracy at most $t$ admit sample compression schemes of size $\Oh(t^2 \log t)$.

\begin{proof}[Proof of \Cref{thm:degeneracy}]
    Let $t$ be a positive integer and let $G$ be a graph of degeneracy at most $t$. Fix a total ordering $\prec$ on $V(G)$ which witnesses that $G$ has degeneracy at most $t$.
    Let $H$ be the hypergraph of closed neighborhoods of $G$.
    %We will use implicitly in the construction the fact that for every $x \in V(G)$, the size of the set $\{u \in N_G(x) \cup \{x\} \mid u \preceq x\}$ is at most $t+1$.
    
    \header{Compression.}
            Let $(X^+,X^-)$ be a sample of $H$ and fix $c\in V(G)$ such that $B_G(c,1) \cap (X^+ \cup X^-) = X^+$.
            Consider the following two cases.
            \begin{enumorig}[label={Case \arabic*:}, ref={\arabic*}, left=20pt]
                \item there exists $x \in X^+$ with $c \preceq x$. \label{case:deg1}
                \item for all $x \in X^+$, we have $x \prec c$. \label{case:deg2}
            \end{enumorig}
            Let $b \in \{0,1\}$ be a bitstring encoding which case out of the two holds.
            Assume first that Case~\ref{case:deg1} holds.
            We fix $x \in X^+$ with $c \preceq x$.
            Next, let $j \in [t+1]$ be such that $c$ has the ordinal number $j$ in the set $\{u \in N_G(x) \cup \{x\} \mid u \preceq x\}$ sorted according to $\prec$.
            Let $b' \in \{0,1\}^{\lceil \log(t+1)\rceil}$ be a bitstring encoding the number $j$.
            When Case~\ref{case:deg2} holds, let $b' \in \{0,1\}^{\lceil \log(t+1)\rceil}$ be any bitstring.
            Depending on the two cases, we define a subsample of $(X^+,X^-)$:
            %-- suppose that case $i \in [2]$ holds:
            \[(Y^+,Y^-) = \begin{cases}
                (\{x\},\emptyset) & \text{if Case~\ref{case:deg1} holds,}\\
                (X^+,\emptyset)   & \text{if Case~\ref{case:deg2} holds.}
            \end{cases}\]
            Note that in Case~\ref{case:deg2}, the size of $X^+$ is at most $t$ as $X^+ \subset \{u \in N_G(c) \mid u \prec c\}$.
            Finally, set
            \[
                \kappa(X^+, X^-) = ((Y^+,Y^-), bb').
            \]   

        \header{Reconstruction.}
            Consider an input for the reconstructor $\rho$:
            \[((Y^+,Y^-), bb') \in \calS(H) \times \{0,1\}^{1+\lceil \log(t+1)\rceil}.
            \]
            We assume that the input is of the form encoded by $\kappa$; otherwise it does not matter what the reconstructor returns.
            First, $b$ tells us in which case we are.
            If Case~\ref{case:deg1} holds, let $x$ be the element of $Y^+$, let $j$ be the number encoded by $b'$, and let $c'$ be the vertex with the ordinal number $j$ in the set $\{u \in N_G(x) \cup \{x\} \mid u \preceq x\}$ ordered according to $\prec$.
            Then, $\rho$ returns $B_G(c',1)$.
            If Case~\ref{case:deg2} holds, $\rho$ simply returns $Y^+$.

    \header{Proof of correctness.}
            It follows from the construction that $(\kappa,\rho)$ is a sample compression scheme for $H$ of the correct size.
\end{proof}

\Cref{thm:degeneracy} can be generalized to balls of radius at most $r$ using \emph{weak coloring numbers}.
Namely, with exactly the same idea one can prove that for every integer $r$ and for every graph $G$ with $r$-th weak coloring number at most $t$, the hypergraph of balls of radius at most $r$ has a sample compression scheme of size $\Oh(t)$.
Since we did not find any interesting application of this result, we elected not to include it.

% =======================================================
\section{Open problems}\label{sec:open}
% =======================================================

Recall that the general result of Moran and Yehudayoff \cite{Moran2016} implies that for every $K_t$-minor-free graph $G$, the hypergraph of balls in $G$ has a sample compression scheme of size $\Oh(t^2\log t)$.
We improved the bound to almost linear in the case of graphs of treewidth at most $t$, see~\Cref{thm:tw-first}.
This yields a natural direction for further research.

\begin{question}
    Is it true that for every $K_t$-minor-free graph $G$, 
    the hypergraph of balls in $G$ admits a sample compression scheme of size $\widetilde{\Oh}(t)$? 
\end{question}

Even though we consider the bounds in~\Cref{thm:tw-first,thm:cw-first} to be  \emph{almost} tight, it would be interesting to know if the logarithmic factors are necessary.

\begin{question}\label{q:remove-log-factor}
    Does every hypergraph of balls in a graph of treewidth (resp.\ cliquewidth) at most $t$ admit a (proper) sample compression scheme of size $\Oh(t)$?
\end{question}

\Cref{thm:vc} states that the hypergraphs of balls in graphs admitting a vertex cover of size at most $t$ have proper sample compression schemes of size $\Oh(t)$.
Thus a natural step towards \Cref{q:remove-log-factor} is to consider the case of graphs of bounded treedepth.

In most cases, we manage to devise proper sample compression schemes, but this often requires extra work, and it is not clear whether this is always possible. In particular, we think that the following directions are the most promising.

\begin{question}
    Is there a constant $c$ such that for every planar graph $G$, the hypergraph of balls in $G$ admits a proper sample compression scheme of size $c$? 
\end{question}

\begin{question}\label{question:Kt}
    Is there a function $f\colon \mathbb{N}\to\mathbb{N}$ such that for every positive integer $t$ and every $K_t$-minor-free graph $G$, the hypergraph of balls in $G$ admits a proper sample compression scheme of size $f(t)$? 
\end{question}

\begin{question}\label{question:closed_N}
    Is there a function $f\colon \mathbb{N} \to \mathbb{N}$
    such that every hypergraph of closed neighborhoods in a graph of degeneracy $t$ admits a proper sample compression scheme of size $f(t)$?
\end{question}

% \begin{question}\label{question:Littlewood}
%     Is there a function $f \colon \mathbb{N} \to \mathbb{N}$
%     such that every hypergraph of Littlestone dimension $t$ admits a proper sample compression scheme of size $f(t)$?
% \end{question}

For all we know, the functions in Questions~\ref{question:Kt} and \ref{question:closed_N} could be linear in $t$.

% ----------------------------
\bibliographystyle{abbrv}
\bibliography{bibliography}

\appendix

\section{Cliquewidth and VC-dimension of the hypergraph of balls}

In this section, we show that the hypergraph of balls in a graph of NLC-width at most $t$ has VC-dimension at most $6t+2$.
To do so, we follow the proof of Bousquet and Thomassé~\cite{BT15} that the hypergraph of balls
of a graph of rankwidth $t$ has VC-dimension at most $3\cdot2^{t+1}+2$.
Similarly as Bousquet and Thomassé~\cite{BT15}, 
we bound the \defin{2VC-dimension} of such hypergraphs $H$,
which is the maximum size of a \defin{$2$-shattered set} in $H$,
that is a set $S \subseteq V(H)$ such that for all distinct $s_1,s_2 \in S$,
there exists $e \in E(H)$ such that $e \cap S = \{s_1,s_2\}$.

\begin{proposition}\label{prop:bound-vcdim}
    Let $t$ be a positive integer,
    and let $G$ be a graph of NLC-width at most $t$.
    The 2VC-dimension of the hypergraph of the balls in $G$ is at most $6t+2$.
\end{proposition}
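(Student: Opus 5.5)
The plan is to argue by contradiction along a balanced cut of the NLC-decomposition, following Bousquet and Thomassé~\cite{BT15}. The one new ingredient is that we use only the twin partition supplied by \cref{lem:NLC-few-nbhd-classes}, which has $t$ classes, instead of a rank-based partition, which would have $2^{t}$ classes.

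Suppose $S\subseteq V(G)$ is $2$-shattered and $|S|\ge 6t+3$. Fix an NLC-decomposition $(T,Q,\alpha,\beta,R)$ with $|Q|\le t$; its leaves are exactly $V(G)$.

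\textbf{Balanced cut.} Since $T$ is binary, walk down from the root, always moving to a child whose subtree contains more than $2|S|/3$ elements of $S$. This yields a non-root node $y$ whose set of leaf descendants $C$ satisfies $|S|/3\le |S\cap C|\le 2|S|/3$. Put $D=V(G)\setminus C$. Then both $|S\cap C|$ and $|S\cap D|$ are at least $|S|/3$, hence at least $2t+1>t$. By \cref{lem:NLC-few-nbhd-classes}, the sets $V_q=\{u\in C\mid \beta(\{y,u\})(\alpha(u))=q\}$ for $q\in Q$ form a $D$-twin partition of $C$ into at most $t$ classes.

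\textbf{Minimizer bound, balls centred in $C$.} For $q\in Q$ and $b\in D$ write $d_q(b)=\dist_{G[V_q\cup D]}(V_q,b)$. Call $b\in S\cap D$ \emph{special} if, for some $q$, it is the unique minimizer of $d_q$ over $S\cap D$. There are at most $t$ special elements.

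Now let $B_G(c,r)$ with $c\in C$ satisfy $B_G(c,r)\cap S\cap D=\{b\}$. By \cref{lemma:separator_with_few_types} applied to $(C,D)$ with $c_0=c$ and $d_0=b$, there is $q$ with $\dist_G(c,V_q)+d_q(b)\le r$. For every other $b'\in S\cap D$ we have
\[
r<\dist_G(c,b')\le \dist_G(c,V_q)+d_q(b'),
\]
using the same lemma for $b'$. Hence $d_q(b)<d_q(b')$, so $b$ is special.

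\textbf{Minimizer bound, balls centred in $D$.} Apply \cref{lemma:separator_with_few_types} to the ordered partition $(C,D)$ with $c_0=a\in C$ and $d_0=c\in D$. This gives
\[
\dist_G(a,c)=\min_q\bigl(\dist_G(a,V_q)+\dist_{G[V_q\cup D]}(V_q,c)\bigr).
\]
Call $a\in S\cap C$ \emph{special} if it is the unique minimizer of $\dist_G(\cdot,V_q)$ over $S\cap C$ for some $q$. The same argument shows: if $c\in D$ and $B_G(c,r)\cap S\cap C=\{a\}$, then $a$ is special. Again there are at most $t$ special elements of $S\cap C$.

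\textbf{Conclusion.} Since both sides contain at least $t+1$ elements of $S$, pick a non-special $a\in S\cap C$ and a non-special $b\in S\cap D$. Two-shattering gives a ball $B_G(c,r)$ with $B_G(c,r)\cap S=\{a,b\}$. If $c\in C$, the first bound forces $b$ to be special; if $c\in D$, the second forces $a$ to be special. Either way we get a contradiction, so $|S|\le 6t+2$. (The argument actually only needs $|S|/3>t$, so it has some slack.)

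\textbf{Main obstacle.} The delicate point is that only one side has a $t$-class twin partition. $D$ is not partitioned into few $C$-twin classes, because labels of vertices in $D$ are evaluated at different ancestors of $y$. The plan avoids this by using the $C$-side partition in both directions of \cref{lemma:separator_with_few_types}. The remaining checks are the degenerate cases $c\in S$ and $r=0$, which the strict inequalities above already handle, and the treatment of empty classes $V_q$, where the distance is $\infty$.
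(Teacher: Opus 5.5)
Your proof is correct. It shares the paper's framework but differs in the counting step.

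\textbf{What is the same.} Both arguments do the following:
\begin{itemize}
\item take a $1/3$-balanced cut of the NLC-tree (you walk down from the root; the paper takes a sink of an orientation of $E(T)$, an equivalent device);
\item use only the $t$-class $D$-twin partition of $C$ from \cref{lem:NLC-few-nbhd-classes};
\item apply \cref{lemma:separator_with_few_types} with that same partition in both directions.
\end{itemize}
So the ``main obstacle'' you identify is handled exactly as in the paper.

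\textbf{What differs.} The paper builds an auxiliary digraph on $S_1\cup S_2$ that records on which side the centre of each pair's ball lies. By averaging it finds a vertex $a$ with at least $t+1$ out-neighbours. It then pigeonholes the minimizing classes $q(i)$ of those out-neighbours to force a third point of $S$ into one of the balls.

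You observe that this pigeonhole conclusion does not depend on $a$. Any point of $S$ isolated within its side by a ball centred on the other side must be the unique minimizer of one of at most $t$ class-distance functions. Hence each side has at most $t$ such ``special'' points, and any pair of non-special points already gives the contradiction.

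\textbf{What this buys.} Your route removes the averaging step, which is where the paper loses a factor of two. The paper needs $|S_1|,|S_2|\ge 2t+1$ to guarantee out-degree $>t$, whereas you need only $t+1$ points on each side. As your parenthetical remark notes, combined with the balanced cut this gives the sharper bound $3t$ on the 2VC-dimension. The paper's digraph formulation, in exchange, follows the Bousquet--Thomass\'e template more literally.
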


\begin{proof}
    Fix an NLC decomposition $(T,Q,\alpha,\beta,R)$ of $G$ with $|Q| \leq t$.
    We assume that $T$ has at least two nodes, as otherwise $|V(G)| \leq 1$ and the result is trivial.
    Suppose for contradiction that there is a set $S \subseteq V(G)$ of size $6t+3$
    which is $2$-shattered by balls in $G$.
    Recall that $V(G)$ is exactly the set of leaves of $T$.

    The following claim is a consequence of \cite[Lemma~6]{BT15}.
    We provide a proof for completeness.

    \begin{claim*}
        There is an edge $xy$ in $T$ such that
        $|S \cap V(T_{x \mid y})|, |S \cap V(T_{y \mid x})| \geq |S|/3 = 2t+1$.
    \end{claim*}

    \begin{proofclaim}
        Orient the edges of $T$ such that for every arc $(x,y)$,
        we have $|S \cap V(T_{y \mid x})| \geq |S|/2$.
        Consider now a sink $x$ in this orientation.
        Note that $x$ is not a leaf of $T$ since $|S|/2 > 1$.
        In particular, $x \notin S$.
        Since $x$ has degree at most three, and because $x \not\in S$,
        there exists $y \in N_T(x)$ such that $|S \cap V(T_{y \mid x})| \geq |S|/3$.
        Because $x$ is a sink, 
        we also have $|S \cap V(T_{x \mid y})| \geq |S|/2 \geq |S|/3$,
        and so the edge $xy$ is as desired.
    \end{proofclaim}

    Fix an edge $xy$ as in the claim,
    and assume that $y$ is the parent of $x$ in $T$.
    Let $U_1 = V(T_{x \mid y})$, $U_2 = V(T_{y \mid x})$, $S_1 = S \cap U_1$, and $S_2 = S \cap U_2$.

    For each $(s_1,s_2) \in S_1 \times S_2$, fix a ball $B_G(c_{(s_1,s_2)}, r_{(s_1,s_2)})$
    such that $B_G(c_{(s_1,s_2)}, r_{(s_1,s_2)}) \cap S = \{s_1,s_2\}$.
    Consider the directed graph $D$ with vertex set $S_1 \cup S_2$
    and arcs all the pairs $(s_1,s_2) \in S_1 \times S_2$
    such that $c_{(s_1,s_2)} \in U_1$,
    and all the pairs $(s_2,s_1) \in S_2 \times S_1$ such that $c_{(s_1,s_2)} \in U_2$.
    Note that $D$ has exactly $|S_1|\cdot|S_2|$ arcs and $|S_1| + |S_2|$ vertices.
    Therefore, there is a vertex $a$ of $D$ with out-degree at least $\frac{|S_1|\cdot|S_2|}{|S_1|+|S_2|} > t$.
    Fix $t+1$ out-neighbors $b_1, \dots, b_{t+1}$ of $a$ in $D$.
    For every $q \in Q$, let
    \[V_q = \{u \in U_1 \mid \beta(\{x,u\})(\alpha(u)) = q\}.\]
    By~\Cref{lem:NLC-few-nbhd-classes}, $(V_q \mid q \in Q)$ is a $U_2$-twin partition of $U_1$.
    We now split the proof into two cases: $a \in S_1$ and $a \in S_2$.

    First assume $a \in S_1$,
        and so $b_1, \dots, b_{t+1} \in S_2$.
        For every $i, j \in [t+1]$,
        by \Cref{lemma:separator_with_few_types} applied to
        $(c_0,d_0) = (c_{(a,b_j)}, b_i)$,
        $(C,D) = (U_1,U_2)$, and $(V_q \mid q \in Q)$, we have
        \begin{equation}
            \dist_G(c_{(a,b_{j})}, b_i) = \min_{q \in Q} \left(\dist_G(c_{(a,b_{j})}, V_q) + \dist_{G[V_q \cup U_2]}(V_q, b_i)\right).
            \label{eq:2VC_cw_case1}
        \end{equation}
        For each $i \in [t]$, we fix $q(i) \in Q$ such that
        \[
            \dist_G(c_{(a,b_i)}, b_i) = \dist_G(c_{(a,b_{i})}, V_{q(i)}) + \dist_{G[V_{q(i)} \cup U_2]}(V_{q(i)}, b_i).
        \]
        By the pigeonhole principle and because $|Q| \leq t$,
        there exist distinct $i,j \in [t+1]$ such that $q(i) = q(j)$.
        We write $q = q(i) = q(j)$.
        Without loss of generality, $\dist_{G[V_q \cup U_2]}(V_q, b_i) \leq \dist_{G[V_q \cup U_2]}(V_q, b_j)$.
        Then, 
        \begin{align*}
            \dist_G(c_{(a,b_{j})}, b_i) 
            &\leq \dist_G(c_{(a,b_{j})}, V_q) + \dist_{G[U_2 \cup V_q]}(V_q, b_i)
                && \textrm{by \eqref{eq:2VC_cw_case1}}\\
            &\leq \dist_G(c_{(a,b_{j})}, V_q) + \dist_{G[U_2 \cup V_q]}(V_q, b_j) \\
            &= \dist_G(c_{(a,b_{j})}, b_j) 
                && \textrm{since $q=q(j)$}\\
            &\leq r_{(a,b_{j})}.
        \end{align*}
        Therefore, $b_i \in B_G(c_{(a,b_{j})}, r_{(a,b_{j})})$,
        which contradicts the fact that $B_G(c_{(a,b_{j})}, r_{(a,b_{j})}) \cap S = \{a,b_{j}\}$.

    Next, assume that $a \in S_2$,
        and so $b_1, \dots, b_{t+1} \in S_1$.
        For every $i, j \in [t+1]$,
        by \Cref{lemma:separator_with_few_types} applied to
        $(c_0,d_0) = (b_i, c_{(a,b_j)})$,
        $(C,D) = (U_1,U_2)$, and $(V_q \mid q \in Q)$, we have
        \begin{equation}\label{eq:2VC_cw_case2}
            \dist_G(b_i, c_{(a,b_{j})}) = \min_{q \in Q} \left(\dist_G(b_i, V_q) + \dist_{G[V_q \cup U_2]}(V_q,  c_{(a,b_{j})})\right).
        \end{equation} 
        For each $i \in [t]$, we fix $q(i) \in Q$ such that
        \[
            \dist_G(b_i, c_{(a,b_i)}) = \dist_G(b_i, V_q) + \dist_{G[V_q \cup V_2]}(V_q, c_{(a,b_i)}).
        \]
        By the pigeonhole principle and because $|Q| \leq t$,
        there exist distinct $i,j \in [t+1]$ such that $q(i) = q(j)$.
        We write $q = q(i) = q(j)$.
        Without loss of generality, $\dist_{G}(b_i, V_q) \leq \dist_{G}(b_j, V_q)$.
        But then,
        \begin{align*}
            \dist_G(b_i,c_{(a,b_{j})}) 
            &\leq \dist_G(b_i, V_q) + \dist_{G[V_q \cup U_2]}(V_q, c_{(a,b_{j})})
                && \textrm{by \eqref{eq:2VC_cw_case2}} \\
            &\leq \dist_G(b_j, V_q) + \dist_{G[V_q \cup U_2]}(V_q, c_{(a,b_{j})}) \\
            &= \dist_G(b_{j}, c_{(a,b_{j})})
                && \textrm{since $q=q(j)$} \\
            &\leq r_{(a,b_{j})}.
        \end{align*}
        Therefore, $b_i \in B_G(c_{(a,b_{j})}, r_{(a,b_{j})})$,
        which contradicts the fact that $B_G(c_{(a,b_{j})}, r_{(a,b_{j})}) \cap S = \{a,b_{j}\}$.
        This contradiction concludes the proof.
\end{proof}

\end{document}